\documentclass[11pt]{article}

\usepackage{amsmath, amssymb, amsthm, amsfonts}
\usepackage{bbm}
\usepackage{hyperref}
\usepackage{ifthen}
\usepackage{fullpage}
\usepackage{tikz}
\usetikzlibrary{positioning,decorations.pathreplacing}

\usepackage{cite}
\usepackage{appendix}
\usepackage{graphicx}
\usepackage{color}
\usepackage[linesnumbered,vlined]{algorithm2e}
\usepackage[noend]{algpseudocode}
\usepackage{epstopdf}
\usepackage{wrapfig}
\usepackage{paralist}
\usepackage[textsize=tiny]{todonotes}

\usepackage{framed}
\usepackage[framemethod=tikz]{mdframed}
\usepackage[bottom]{footmisc}
\usepackage{enumitem}
\setitemize{noitemsep,topsep=3pt,parsep=3pt,partopsep=3pt}
\usepackage[font=small]{caption}
%\captionsetup{font=small}
\usepackage{xspace}

% CL: Moved this here so it is ``seen'' by hyperref.

\newtheorem{theorem}{Theorem}[section]
\newtheorem{lemma}[theorem]{Lemma}
\newtheorem{meta-theorem}[theorem]{Meta-Theorem}
\newtheorem{claim}[theorem]{Claim}

\newtheorem{corollary}[theorem]{Corollary}

\newtheorem{observation}[theorem]{Observation}

\usepackage{wrapfig}
\usepackage[capitalize,nameinlink]{cleveref}
%\crefformat{footnote}{#2\footnotemark[#1]#3}
\crefname{theorem}{Theorem}{Theorems}
\Crefname{lemma}{Lemma}{Lemmas}

\newcommand{\local}{${\mathsf{LOCAL}}$}
\newcommand{\congest}{${\mathsf{CONGEST}}$}

\newcommand{\outdeg}{\mbox{\rm outdeg}}

\newcommand{\dist}{\mbox{\rm dist}}

\newcommand{\SimpleRandColor}{\mathsf{SimpleRandColor}}
\newcommand{\OneShotColoring}{\mathsf{OneShotColoring}}
\newcommand{\ColorBidding}{\mathsf{ColorBidding}}
\newcommand{\DenseColoringStep}{\mathsf{DenseColoringStep}}

\newcommand{\Det}{\mathsf{Det}}

\newcommand{\RecursiveColor}{\mathsf{RecursiveColoring}}
\newcommand{\GeneralColor}{\mathsf{FastColoring}}

\newcommand{\set}[1]{\left\{#1\right\}}

\newcommand{\polylog}{\mathsf{polylog}}
\newcommand{\Exp}{\mathsf{Exp}}
\newcommand{\Prob}{\mathsf{Prob}}
\newcommand{\Pal}{\mathsf{Pal}}

\renewcommand{\paragraph}[1]{\vspace{0.15cm}\noindent {\bf #1}}
\usepackage{wrapfig}

\newcommand{\poly}{\mathsf{poly}}

\begin{document}

%%% Title
%\title{Simulating Any Distributed Algorithm in a Secure Manner}
\title{$(\Delta+1)$ Coloring in the Congested Clique Model\footnote{Department of Computer Science and
    Applied Mathematics, Weizmann Institute of Science, Rehovot 76100,
    Israel. Email:
    \texttt{merav.parter@weizmann.ac.il}.}\footnote{Appeared in ICALP'18.} }
\author{Merav Parter}

\maketitle
\begin{abstract}
In this paper, we present improved algorithms for the $(\Delta+1)$ (vertex) coloring problem in the \emph{Congested Clique} model of distributed computing.
In this model, the input is a graph on $n$ nodes, initially each node knows only its incident edges, and per round each two nodes can exchange $O(\log n)$ bits of information.

Our key result is a randomized $(\Delta+1)$ vertex coloring algorithm that works in $O(\log\log \Delta \cdot \log^* \Delta)$-rounds. This is achieved by combining the recent breakthrough result of [Chang-Li-Pettie, STOC'18] in the \local\ model and a degree reduction technique. We also get the following results with high probability:
(1) $(\Delta+1)$-coloring for $\Delta=O((n/\log n)^{1-\epsilon})$ for any $\epsilon \in (0,1)$, within $O(\log(1/\epsilon)\log^* \Delta)$ rounds, and 
(2) $(\Delta+\Delta^{1/2+o(1)})$-coloring within $O(\log^* \Delta)$ rounds. 
Turning to \emph{deterministic} algorithms, we show a $(\Delta+1)$-coloring algorithm that works in $O(\log \Delta)$ rounds.
\end{abstract}
\newpage
\section{Introduction}
Graph coloring is one of the most central symmetry breaking problems, with a wide range of applications to distributed systems and wireless networks. The most studied coloring problem is the $(\Delta+1)$ \emph{vertex} coloring in which all nodes are given the \emph{same} palette of $\Delta+1$ colors, where $\Delta$ is the maximum degree in the graph. Vertex coloring among other LCL problems\footnote{LCL stands for Locally Checkable Labelling problems, see \cite{naor1995can}.} (e.g., MIS, matching) are traditionally studied in the \local\ model in which any two neighboring vertices in the input graph can exchange arbitrarily long messages. 
\par In recent years there has been a tremendous progress in the understanding of the randomized and the deterministic complexities of many LCL problems in the \local\ model \cite{ChangKP16,ghaffari2016complexity,chang2017time,fischer2017sublogarithmic,newclasses18}. Putting our focus on the $(\Delta+1)$ coloring problem, in a seminal work, Schneider and Wattenhofer \cite{schneider2010new} showed that increasing the number of colors from $\Delta+1$ to $(1+\epsilon)\Delta$ has a dramatic effect on the round complexity and coloring can be computed in just $O(\log^* n)$ rounds when $\epsilon=\Omega(1)$ and $\Delta > \poly\log n$. This has led to two recent breakthroughs. 
Harris, Schneider and Su \cite{harris2016distributed} showed an $O(\sqrt{\log \Delta})$-round algorithm for $(\Delta+1)$ coloring, providing a separation for the first time between MIS and coloring (due to the MIS lower bound of \cite{kuhn2016local}). In a recent follow-up breakthrough, Chang, Li and Pettie \cite{CHP18} extended the technique of \cite{harris2016distributed} to obtain the remarkable and quite extraordinary round complexity of $O(\log^* n+\Det_{\deg}(\polylog n))$ for the $(\Delta+1)$-list coloring problem where $\Det_{\deg}(n')$ is the deterministic round complexity of  $(\deg+1)$ list coloring algorithm\footnote{In the $(\deg+1)$ list coloring problem, each vertex $v$ is given a palette with $\deg(v,G)+1$ colors.} in $n'$-vertex graph. 
Both of these recent breakthroughs use messages of large size, potentially of $\Omega(n)$ bits. 

In view of these recent advances, the understanding of LCL problems in bandwidth-restricted models is much more lacking. 
Among these models, the congested clique model \cite{lotker2005minimum}, which allows all-to-all communication has attracted a lot of attention in the last decade and more recently, in the context of LCL problems \cite{berns2012super,hegeman2014near,hegeman2015lessons,Censor-HillelPS17,GhaffariMIS17,gregorylocal18}. In the congested clique model, each node can send $O(\log n)$ bits of information to any node in the network (i.e., even if they are not connected in the input graph). 
The ubiquitous of overlay networks and large scale distributed networks make the congested clique model far more relevant (compared to the \local\ and the \congest\ models) in certain settings.
 
\paragraph{Randomized LCL in the Congested Clique Model.}
Starting with Barenboim et al. \cite{barenboim2016locality}, currently, all efficient randomized algorithms for classical LCL problems have the following structure: an initial randomized phase and a \emph{post-shattering} deterministic phase. The shattering effect of the randomized phase which dates back to Beck \cite{beck1991algorithmic}, breaks the graph into subproblems of $\poly \log n$ size to be solved deterministically. In the congested-clique model, the shattering effect has an even more dramatic effect. Usually, a node survives (i.e., remained undecided) the randomized phase with probability of $1/\poly(\Delta)$. Hence, in expectation the size of the remaining unsolved graph is\footnote{Using the bounded dependencies between decisions, this holds also with high probability.} $O(n)$. At that point, the entire unsolved subgraph can be solved in $O(1)$ rounds, using standard congested clique tools (e.g., the routing algorithm by Lenzen \cite{lenzen2013route}). Thus, as long as the main randomized part uses short messages, the congested clique model ``immediately" enjoys an improved round complexity compared to that of the \local\ model.  
 
In a recent work \cite{GhaffariMIS17}, Ghaffari took it few steps farther and showed an $\widetilde{O}(\sqrt{\log \Delta})$-round randomized algorithm for MIS in the congested clique model, improving upon the state-of-the-art complexity of $O(\log \Delta+2^{O(\sqrt{\log\log n})})$ rounds in the \local\ model, also by Ghaffari \cite{ghaffari2016improved}. 
When considering the $(\Delta+1)$ coloring problem, the picture is somewhat puzzling. On the one hand, in the \local\ model, $(\Delta+1)$ coloring is provably simpler then MIS. However, since all existing $o(\log \Delta)$-round algorithms for $(\Delta+1)$ coloring in the \local\ model, use large messages, it is not even clear if the power of all-to-all communication in the congested clique model can compensate for its bandwidth limitation and outperform the \local\ round complexity, not to say, even just match it.
We note that on hind-sight, the situation for MIS in the congested clique was somewhat more hopeful (compared to coloring), for the following reason. The randomized phase of Ghaffari's MIS algorithm although being in the \local\ model \cite{ghaffari2016improved}, used \emph{small} messages and hence could be implemented in the \congest\ model with the same round complexity. 
%This is in strike contrast to coloring, where all existing $o(\log \Delta)$-round algorithms\footnote{The main parts of these algorithm, i.e., even without the post-shattering phase.} in the \local\ model could not be simulated as is, in the congested clique model. 
To sum up, currently, there is no $o(\log \Delta)$-round algorithm for $(\Delta+1)$ coloring in any bandwidth restricted model, not even in the congested-clique.
%In this paper we answer these problems in the affirmative and present an $O(\log\log \Delta)$

%
%\paragraph{Randomized vs. Deterministic LCL in \local\ Model.}
%One fundamental characteristic of distributed algorithms for local problems is whether they are deterministic or randomized. Currently, there exists a curious gap between the known complexities of randomized and deterministic solutions for local problems. Interestingly, the main indistinguishability-based technique used for obtaining the relatively few lower bounds that are known seems unsuitable for separating these cases.
%A beautiful recent work of Chang et al.~\cite{ChangKP16} sheds some light over this, by proving that the randomized complexity of any local problem is at least its deterministic complexity on instances of size $\sqrt{\log n}$. In addition, building upon a new lower bound technique of Brandt et al.~\cite{BrandtFHKLRSU16}, they show an exponential separation between the randomized and deterministic complexity of $\Delta$-coloring trees. These results hold in the \local\ model, which allows unbounded messages. Recently, \cite{ghaffari2017derandomizing} designed a general dernadomiation technique for LCL problems in the \local\ model (indeed, the messages of their algorithm might be polynomially large.)

\paragraph{Derandomization of LCL in the Congested-Clique Model.}
There exists a curious gap between the known complexities of randomized and deterministic solutions for local problems in the \local\ model (\cite{ChangKP16,ghaffari2016complexity}).
Censor et al. \cite{Censor-HillelPS17} initiated the study of \emph{deterministic} LCL algorithms in the congested clique model by means of derandomization. The main take home message of \cite{Censor-HillelPS17} is as follows: for most of the classical LCL problems there are $\poly \log n$ round randomized  algorithms (even in the \congest\ model). For these algorithms, it is usually sufficient that the random choices made by vertices are \emph{almost} independent.  This implies that each round of the randomized algorithm can be simulated by giving all nodes a shared random seed of  $\poly \log n$ bits. To dernadomize a single round of the randomized algorithm, nodes should compute (deterministically) a seed which is at least as ``good"\footnote{The random seed is usually shown provide a large progress in expectation. The deterministically computed seed should provide a progress at least as large as the expected progress of a random seed.} as a random seed would be. To compute this seed, they need to estimate their ``local progress" when simulating the random choices using that seed. %The lauch an estimation usually depends on their immediate neighborhood. 
%Due to the bounded bandwidth, it is sometimes impossible to compute the exact progress guaranteed by the seed. Instead the vertices compute a \emph{pessimistic estimator}, which provides a lower bound for the true measure and has the benefit that it requires much less information from the local neighborhood, and thus can be computed efficiently even with bandwidth limitation. 
Combining the techniques of conditional expectation, pessimistic estimators and bounded independence leads to a simple ``voting"-like algorithm in which the bits of the seed are computed 
\emph{bit-by-bit}. Once all bits of the seed are computed, it is used to simulate the random choices of that round. %In the current work, we reveal an $O(\log n)$-size seed in $O(1)$ many rounds. 
%Such a scheme also works in the more restricted \emph{Broadcast Congested Clique} model, in which a node must send the same $O(\log n)$-bit message to all other nodes in any single round. This approach has be shown to yield a deterministic MIS algorithm that works in $O(\log^2 \Delta)$ rounds and a deterministic construction of $(2k-1)$ spanner in $O(k \log n)$ rounds. 
For a recent work on other complexity aspects in the congested clique, see \cite{korhonen2017brief}. 

%\paragraph{Recent breakthroughs in vertex coloring algorithm.}
%
%difference between MIS and coloring
%
%\paragraph{Randomized LCL in the congested clique model}
%\cite{hegeman2015lessons}
%
%Simple $O(\Delta^2)$ coloring in $O(1)$ rounds. 
%\cite{GhaffariMIS17}

\subsection{Main Results and Our Approach}
We show that the power of all-to-all communication compensates for the bandwidth restriction of the model:
%
%$(\Delta+1)$ vertex coloring can indeed be solved significantly faster (e.g., compared to the distributed models of \local):
\begin{mdframed}[hidealllines=true,backgroundcolor=gray!25]
\vspace{-8pt}
\begin{theorem}\label{thm:maincol} There is a randomized algorithm that computes a $(\Delta+1)$ coloring in $O(\log\log \Delta \cdot \log^* n)$ rounds of the congested clique model, with high probability\footnote{As usual, by high probability we mean $1-1/n^c$ for some constant $c\geq 1$.}. 
\vspace{-3pt}
\end{theorem}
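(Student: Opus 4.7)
The plan is to combine the Chang-Li-Pettie $(\Delta+1)$-list coloring algorithm \cite{CHP18} with a degree-reduction recursion designed for the bandwidth-restricted setting. Recall that CLP runs in $O(\log^* n + \Det_{\deg}(\polylog n))$ rounds in \local, where the first term comes from a randomized shattering phase that colors all but a residual subgraph of $\polylog n$ size, and the second term is the cost of a deterministic $(\deg+1)$-list coloring on that residual. In \congestclique\ the deterministic term becomes essentially free: by the standard bounded-dependency concentration arguments, the residual uncolored graph has $\widetilde{O}(n)$ edges with high probability, and can therefore be collected at a single node using Lenzen's routing \cite{lenzen2013route} in $O(1)$ rounds and then solved locally. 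The real difficulty is emulating the randomized phase under the $O(\log n)$-bandwidth restriction, since CLP can use messages of $\Omega(n)$ bits, particularly when handling dense almost-cliques.

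To address this, I would introduce a degree-reduction subroutine that, in $O(\log^* n)$ rounds of \congestclique, partially colors a $(\Delta+1)$-list coloring instance so that the uncolored residual (with updated lists respecting the $(\deg+1)$-list coloring invariant) has maximum degree roughly $\sqrt{\Delta}$. A natural realization is a constant number of CLP-style slack trials: each vertex samples a tentative color from its list, keeps it if no neighbor chose the same color, and updates its list otherwise. With a palette of $\Delta+1$ colors and maximum degree $\Delta$, the concentration machinery of \cite{CHP18} ensures that the residual maximum degree drops to roughly $\sqrt{\Delta}$ w.h.p. Because during one trial every vertex exchanges only $O(\Delta \log n)$ bits with its neighbors, the trial can be executed in \congestclique\ using Lenzen's routing in $O(1)$ rounds whenever $\Delta = O(n)$. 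Iterating the subroutine $O(\log \log \Delta)$ times reduces the maximum degree to $\polylog n$; at that point the residual graph has $O(n \polylog n)$ edges and is routed centrally for a final coloring in $O(1)$ rounds. Summing over levels gives the claimed $O(\log\log \Delta \cdot \log^* n)$ bound.

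The main obstacle is the dense part of CLP: in \local, handling a dense almost-clique can require each of its $\Omega(\Delta)$ vertices to collect $\Omega(\Delta)$-bit palette summaries from each clique-neighbor, producing incoming loads of $\Omega(\Delta^2)$ bits per vertex — well beyond the $O(n \log n)$ per-round bandwidth when $\Delta$ is close to $n$. One has to argue that the dense-case subroutine can be restructured in \congestclique\ so that its communication is spread across all $n$ nodes rather than only the clique-neighbors (leveraging the all-to-all edges), and that the degree-to-$\sqrt{\Delta}$ reduction invariant carries through for the next recursion level. Once this is handled, a union bound over the $O(\log\log\Delta)$ levels yields high-probability correctness, completing the proof.
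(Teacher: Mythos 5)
Your overall framing — reduce the degree level by level and use \congestclique\ all-to-all to make CLP's residual-cleanup step cheap — points in the right direction, but the central engine of your proof does not work as stated, and it is genuinely different from (and weaker than) what the paper does.

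The crux is the degree-reduction subroutine. You claim that ``a constant number of CLP-style slack trials'' (each vertex samples a tentative color, keeps it if uncontested) drops the maximum uncolored degree from $\Delta$ to roughly $\sqrt{\Delta}$ w.h.p. That is not a property of $\OneShotColoring$ or of any constant number of color trials. What these trials give you is \emph{slack} (excess colors) for $\epsilon$-sparse vertices, and a constant-factor reduction in the number of uncolored vertices; for dense vertices sitting inside $\epsilon$-almost cliques, the uncolored degree after $O(1)$ trials is still $\Theta(\Delta)$, not $O(\sqrt{\Delta})$. There is no concentration argument in \cite{CHP18} that delivers a polynomial drop in degree from $O(1)$ uniform color samples, so the invariant you plan to iterate $O(\log\log\Delta)$ times never gets off the ground. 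Moreover, even if such a reduction existed, the residual instance would be a $(\deg+1)$-list coloring instance with wildly unequal palette sizes, which CLP as published cannot handle — the authors explicitly require every vertex to have a palette of \emph{exactly} $\Delta+1$ colors. You do not address this.

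The paper takes a different route that sidesteps both problems. First, it observes that the unmodified CLP algorithm can already be run in $O(\log^*\Delta)$ \congestclique\ rounds whenever $\Delta=O(\sqrt{n})$, because then each vertex can collect its entire $2$-neighborhood (and each almost-clique leader can collect all palettes of its $O(\sqrt{n})$ members) via Lenzen's routing in $O(1)$ rounds — this is \Cref{thm:colorsqrtn}. Second, the degree reduction is achieved not by coloring trials but by a \emph{random vertex partition with disjoint palettes}: partition $V$ into $q_i$ parts plus a small left-over $G^*$, assign each of the $q_i$ parts a distinct $(\Delta(G_i)+1)$-color sub-palette, and recurse; since the palettes are disjoint, there is no inter-part conflict and the parts can be colored in parallel. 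The key to staying within budget $\Delta+1$ is that only $\approx 1-1/\sqrt{\Delta}$ of the vertices are partitioned into the $q_i$ parts; the remaining sliver $G^*$ (whose degree is $O(\sqrt{n})$) is list-colored last, with whatever colors survive. Third — and this is what makes the last step go through — the paper proves a modified CLP (\Cref{lem:modclp}) that tolerates palettes of size $r_v\in[\Delta^*-(\Delta^*)^{3/5},\Delta^*+1]$, which is exactly the palette variation that the random partitioning induces on $G^*$ (\Cref{lem:size}). None of these three ingredients appears in your proposal, and without the partition-into-disjoint-palettes idea, the dense-clique bandwidth obstacle you flag in your last paragraph remains unresolved rather than avoided.
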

\end{mdframed}
This significantly improves over the state-of-the-art of $O(\log \Delta)$-round  algorithm for $(\Delta+1)$ in the congested clique model. It should also be compared with the round complexity of $(2^{O(\sqrt{\log\log n})})$ in the \local\ model, due to \cite{CHP18}.  As noted by the authors, reducing the \local\ complexity to below $O((\log\log n)^2)$ requires a radically new approach.

Our $O(\log\log \Delta \cdot \log^*n)$ round algorithm is based on a recursive degree reduction technique which can be used to color any almost-clique graph with $\Delta=\widetilde{O}(n^{1-o(1)})$ in essentially $O(\log^* \Delta)$ rounds. 
\begin{mdframed}[hidealllines=true,backgroundcolor=gray!25]
\vspace{-8pt}
\begin{theorem}\label{thm:coleps} (i) For every $\epsilon \in (0,1)$, there is a randomized algorithm that computes a $(\Delta+1)$ coloring in $O(\log(1/\epsilon) \cdot \log^* n)$ rounds for graphs with $\Delta=O((n/\log n)^{1-\epsilon})$, (ii) This also yields a $(\Delta+\Delta^{1/2+o(1)})$ coloring in $O(\log^*\Delta)$ rounds, with high probability. 
\vspace{-3pt}
\end{theorem}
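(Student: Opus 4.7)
My plan is to derive both parts from a common random-partitioning template combined with the Schneider--Wattenhofer $(1+\gamma)\Delta$-list-coloring subroutine, which runs in $O(\log^* n)$ rounds whenever the palette exceeds the maximum degree by a constant factor and uses only $O(\log n)$-bit messages.

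\textbf{Core step.} I would partition $V$ uniformly at random into $k$ groups $V_1,\ldots,V_k$ and assign each $V_j$ a disjoint sub-palette $P_j$ of carefully chosen size. By a Chernoff bound, w.h.p.\ every $v\in V_j$ has at most $\Delta/k+c\sqrt{(\Delta/k)\log n}$ neighbors inside $V_j$. If each $|P_j|$ exceeds this bound by a constant factor, SW finishes on $G[V_j]$ in $O(\log^* n)$ rounds, and since each vertex joins only one subproblem the $k$ parallel executions fit within the congested-clique bandwidth.

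\textbf{Part (ii).} I would set $k=\Delta^{o(1)}$ and $|P_j|=\lceil\Delta/k\rceil+c\sqrt{(\Delta/k)\log n}+1$, which in the regime $\Delta\ge\log^{1+\Omega(1)} n$ gives the required constant-factor slack and also makes $\log^* n=O(\log^*\Delta)$. Summing palettes, the total size is $\Delta+O(\sqrt{\Delta k\log n})+k=\Delta+\Delta^{1/2+o(1)}$. The complementary regime of small $\Delta$ is absorbed into the deterministic $O(\log\Delta)$-round algorithm, which finishes in $\polylog n$ rounds there.

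\textbf{Part (i).} For an exact $(\Delta+1)$-palette, I would recurse, doubling $\epsilon$ at each level. Under the invariant $\Delta_t\le(n_t/\log n_t)^{1-\epsilon_t}$ with a palette of size $\Delta_t+1$, setting $k_t=(n_t/\log n_t)^{1/2}$ produces subgroups with $\Delta_{t+1}\le\Delta_t/k_t+\widetilde O(\sqrt{\Delta_t/k_t})\le(n_t/\log n_t)^{1/2-\epsilon_t}$ and $n_{t+1}/\log n_{t+1}\approx(n_t/\log n_t)^{1/2}$; a short calculation then yields $\Delta_{t+1}\le(n_{t+1}/\log n_{t+1})^{1-2\epsilon_t}$, restoring the invariant with $\epsilon$ doubled. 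After $T=\lceil\log(1/\epsilon)\rceil$ levels the exponent $1-2^T\epsilon$ becomes nonpositive, forcing $\Delta_T\le 1$, which is colored trivially. Each level consists of one invocation of the core step, i.e., $O(\log^* n)$ rounds, for a total of $O(\log(1/\epsilon)\cdot\log^* n)$.

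\textbf{Main obstacle.} The chief difficulty is that the Chernoff overhead $\widetilde O(\sqrt{\Delta_t k_t\log n})$ exceeds the tiny ``$1$-per-group'' slack between $\lceil(\Delta_t+1)/k_t\rceil$ and $\Delta_t/k_t$, so a naive disjoint split of the exact $\Delta_t+1$-sized palette does not supply each subgroup with a legal list. I would handle this by reserving at each level a small overflow pool of $\widetilde O(\sqrt{\Delta_t k_t\log n})$ colors and coloring the (rare) vertices whose in-group degree exceeds the Chernoff threshold centrally via Lenzen's routing in $O(1)$ rounds per level. By a union bound on the Chernoff tail, the overflow set has only $\widetilde O(n_t/\poly(\Delta_t))$ vertices, which easily fits in the all-to-all bandwidth of the clique. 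The remaining work is purely arithmetic: allocating sub-palettes plus the overflow pool so that the grand total is exactly $\Delta+1$ at the top level without disturbing the round count.
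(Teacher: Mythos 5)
There is a genuine gap in both parts, and the way the paper dodges it is precisely the piece your proposal is missing.

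\paragraph{The overflow-pool idea does not work as stated.}
You correctly identify the core tension: if every vertex joins a group with probability $1/k$, the expected in-group degree is $\Delta/k$, and the sub-palette you can afford out of a $(\Delta+1)$-budget is only $\approx(\Delta+1)/k$, so the $\widetilde O(\sqrt{(\Delta/k)\log n})$ fluctuation dwarfs the slack. But your fix — reserve an overflow pool and relocate the ``rare'' vertices whose in-group degree exceeds the Chernoff threshold — does not close this gap. If you set the Chernoff threshold high (say $\Delta/k + c\sqrt{(\Delta/k)\log n}$), then w.h.p.\ \emph{no} vertex crosses it, but then roughly half of all vertices still have in-group degree above their sub-palette size, so they are neither moved to the pool nor properly list-colorable; if instead you set the threshold equal to the sub-palette size, the overflow set is a constant fraction (indeed, nearly all) of the vertices, not $\widetilde O(n/\poly(\Delta))$, and you cannot collect its full topology. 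The paper's mechanism is different and essential: it \emph{shrinks} the probability of joining each group to $1/q_i - \delta_i/q_i^2$ (with $\delta_i = 2\sqrt{5\log n}\, q_i^{3/2}/\sqrt{\Delta_i}$) so that the \emph{expected} in-group degree sits about two Chernoff deviations \emph{below} the sub-palette size, making the high-degree event vanish. The probability mass saved this way, $p^*=\delta_i/q_i$, is sent to the leftover subgraph $G^*$ \emph{a priori}, independently of the realized degrees. Then $G^*$ has max degree $O(\sqrt n)$ and is list-colored after the $G_j$'s with a \emph{modified} CLP (\Cref{lem:modclp}) that tolerates palette sizes in a narrow window $[\Delta^*-(\Delta^*)^{3/5},\Delta^*+1]$ with $r_v\ge\deg(v,G^*)+1$. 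Proving that the residual palettes actually fall in this window (\Cref{lem:size} and its generalization) is the other ingredient your sketch lacks; you cannot just invoke a $(\deg+1)$ algorithm, since CLP needs nearly equal list sizes.

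\paragraph{Part (ii) misinvokes Schneider--Wattenhofer.}
Your per-group palette of size $\lceil\Delta/k\rceil + c\sqrt{(\Delta/k)\log n}+1$ exceeds the w.h.p.\ in-group degree bound by only $O(1)$, which is not the $(1+\Omega(1))\Delta'$ surplus that SW requires. And the claim that the slack becomes a constant factor once $\Delta\ge\log^{1+\Omega(1)}n$ is backwards: with $k=\Delta^{o(1)}$ one has $\Delta/k\gg\log n$, so $\sqrt{(\Delta/k)\log n}=o(\Delta/k)$ and the relative slack \emph{vanishes}. The paper avoids SW entirely here: in \Cref{lem:manycol} each $G_i$ gets exactly $\Delta(G_i)+1$ colors and is colored by the full $\RecursiveColor$ (i.e., the CLP-based $(\Delta'+1)$-list-coloring), not by SW. Your total-palette accounting $\Delta+\Delta^{1/2+o(1)}$ is fine; it is the subroutine invoked on each group that needs to be a $(\Delta'+1)$-list-coloring algorithm.

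\paragraph{Summary.}
The global recursion structure and the exponent bookkeeping in your Part (i) are in the right spirit, but without (a) the shrink-and-leftover randomization and (b) the unequal-palette modification of CLP for $G^*$, the argument does not go through. Part (ii) also needs to replace SW by the $(\Delta'+1)$-coloring recursion.
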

\end{mdframed}
Claim (ii) improves over the $O(\Delta)$-coloring algorithm of \cite{hegeman2015lessons} that takes $O(\log\log \log n)$ rounds in \emph{expectation}. 
We also provide fast \emph{deterministic} algorithms for $(\Delta+1)$ list coloring. The stat-of-the-art in the \local\ model is $\widetilde{O}(\sqrt{\Delta})+\log^* n$ rounds due to Fraigniaud, Heinrich, Marc and Kosowski \cite{fraigniaud2016local}. 
\begin{mdframed}[hidealllines=true,backgroundcolor=gray!25]
\vspace{-8pt}
\begin{theorem}\label{thm:coldet}
There is a deterministic algorithm that computes a $(\Delta+1)$ coloring in $O(\log \Delta)$ rounds of the congested clique model and an $O(\Delta^2)$ coloring in $O(1)$ rounds.
\vspace{-3pt}
\end{theorem}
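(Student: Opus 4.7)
For the $O(\Delta^2)$-coloring in $O(1)$ rounds, my plan is to lift Linial's classical coloring into the congested clique. Assign to every vertex $v$ a set $S_v \subseteq [O(\Delta^2)]$ determined purely by its identifier $\id(v)$ and drawn from a cover-free family in which no $\Delta$ sets together cover any other set. Each vertex already knows which of its neighbors exist from round $0$, and in $O(1)$ rounds it can exchange its set with each neighbor: the set has $O(\Delta)$ elements, encodable in $O(\Delta \log \Delta)$ bits, which a Lenzen-style routing delivers in $O(1)$ rounds thanks to the clique's all-to-all bandwidth. Each vertex then locally picks any color in $S_v \setminus \bigcup_{u \in N(v)} S_u$, which cover-freeness guarantees to be non-empty. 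A single Linial round only yields a palette of size $O(\Delta^2 \log n)$; to contract the palette down to $O(\Delta^2)$ I would iterate the scheme a constant number of times, each iteration again costing $O(1)$ congested-clique rounds.

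For the $(\Delta+1)$-coloring in $O(\log \Delta)$ rounds, my plan is to derandomize a natural $O(\log \Delta)$-round randomized algorithm in the spirit of the derandomization framework discussed in the introduction. The randomized primitive is a color-trial step where every uncolored vertex samples a color from its current palette; a standard analysis using only pairwise-independent choices shows that a suitable weighted palette-deficit potential shrinks by a constant factor in expectation per round, so $O(\log \Delta)$ rounds suffice once each vertex enjoys a small palette slack. Because pairwise independence only needs an $O(\log n)$-bit shared seed, one round is derandomized by the method of conditional expectations: a good seed is one whose pessimistic-estimator value matches the expected progress, and the estimator decomposes as a sum over vertices of a local quantity depending only on $v$ and $N(v)$, so its global value can be aggregated over the clique in $O(1)$ rounds.

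The main obstacle is compressing the bit-by-bit conditional-expectations scan so that each outer round of the randomized algorithm is derandomized in $O(1)$ congested-clique rounds rather than the naive $O(\log n)$ needed by a sequential scan of the seed bits. I would attack this by exploiting the clique's aggregation bandwidth to evaluate many candidate extensions of the already-fixed seed prefix in parallel: with $n$ processors and $\Theta(n \log n)$ bits of communication per processor per round, a polynomial number of candidate seeds can have their global estimator values compared in $O(1)$ rounds, which combined with the $O(\log \Delta)$ outer iterations yields the claimed bound. A combinatorial fallback, should the derandomization overhead prove stubborn, is to use a Nash-Williams-style arboricity decomposition into $O(\log \Delta)$ forests and color each forest in $O(1)$ clique rounds using the $O(\Delta^2)$-coloring of the first part as a seed palette, again producing $O(\log \Delta)$ rounds overall.
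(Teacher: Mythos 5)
Your $O(\Delta^2)$ argument has a genuine gap. Iterating Linial's cover-free-family reduction does not terminate in $O(1)$ rounds: one application maps IDs in $[m]$ to colors in $O(\Delta^2\log_\Delta m)$, so starting from $m=n$ one reaches $O(\Delta^2)$ only after roughly $\log^* n$ iterations, not a constant number. The all-to-all bandwidth of the congested clique speeds up message delivery but does not change this information-theoretic recursion. The paper takes a completely different route: it derandomizes the single-round ``every node picks a uniformly random color in $[\Delta^2]$'' step. Under pairwise independence a node survives with probability at most $1/\Delta$, so a shared $O(\log n)$-bit seed chosen by conditional expectations (revealed in $\lfloor\log n\rfloor$-bit chunks, one chunk per $O(1)$ rounds) leaves at most $n/\Delta$ vertices uncolored; the surviving subgraph has $O(n)$ edges, is collected at a single node via Lenzen routing, and is colored locally with $O(\Delta)$ fresh colors. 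That whole pipeline is $O(1)$ rounds.

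For $(\Delta+1)$ coloring, the core idea you describe --- derandomize pairwise-independent color trials by revealing an $O(\log n)$-bit seed in $\Theta(\log n)$-bit chunks via conditional expectations, with each candidate chunk assigned to a distinct aggregator node --- is indeed the paper's strategy, but you assume, implicitly, that a vertex can evaluate its local estimator. To do so it must simulate the color choices of its neighbors, which requires knowing their current palettes. For $\Delta=O(\sqrt n)$ this costs $O(\Delta^2)=O(n)$ inbound messages per node and is fine, but for $\Delta\gg\sqrt n$ it is not, and this is exactly the regime the theorem is about ($\Delta=O(n^{1/3})$ was already handled by Censor-Hillel et al.). The paper spends most of Section 5 on this: it partitions each palette into $\Delta^{1/3}$ bins of consecutive colors, has each node randomly (with $O(1)$-wise independence) pick a bin proportional to its free mass so that only $O(|B_{i,u}|)$ ``relevant'' neighbors compete on the chosen bin, derandomizes this bin-selection step too, and only then exchanges the (now small) sub-palettes to derandomize the color trial, pushing the method to $\Delta=O(n^{3/4})$; the general case is handled by a further derandomized random partition into $O(\Delta^{1/4})$ subgraphs of degree $O(\Delta^{3/4})$ plus a left-over subgraph. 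None of this machinery appears in your proposal. Separately, your arboricity fallback does not work: a graph of maximum degree $\Delta$ has arboricity as large as $\lceil\Delta/2\rceil$, so a Nash--Williams decomposition yields $\Theta(\Delta)$ forests, not $O(\log\Delta)$, and coloring them with disjoint palettes gives $O(\Delta)$ colors rather than $\Delta+1$. Finally, note that the $O(\log\Delta)$ (not merely $O(\log n)$) bound requires the closing observation that after $O(\log\Delta)$ rounds the uncolored subgraph has $O(n)$ edges and can be collected and solved locally.
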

\end{mdframed}
In \cite{Censor-HillelPS17}, a deterministic algorithm for $(\Delta+1)$ coloring in $O(\log \Delta)$ rounds was shown only for the case where $\Delta=O(n^{1/3})$. Here it is extended for $\Delta=\Omega(n^{1/3})$. This is done by derandomizing an $(\Delta+1)$-list coloring algorithm which runs in $O(\log n)$ rounds. Similarly to \cite{Censor-HillelPS17}, we first show that this algorithm can be simulated when the random choices made by the nodes are pairwise independent. Then, we enjoy the small search space and employ the method of conditional expectations. Instead of computing the seed bit by bit, we compute it in chunks of $\lfloor \log n\rfloor$ bits at a time, by fully exploiting the all-to-all power of the model.

\paragraph{The Challenges and the Degree Reduction Technique.} 
Our starting observation is that the CLP algorithm \cite{CHP18} can be implemented in $O(\log^* \Delta)$ rounds in congested clique model for $\Delta=O(\sqrt{n})$. When $\Delta=O(\sqrt{n})$, using Lenzen's routing algorithm \cite{lenzen2013route}, each node can learn in $O(1)$ rounds, the palettes of all its neighbors along with the neighbors of its neighbors. Such knowledge is mostly sufficient for the CLP algorithm to go through. 
%Extending the CLP algorithm in the congested clique model for $\Delta=\Omega(\sqrt{n})$ seems to be a non-trivial mission for several reasons that we discuss in the technical sections. 

%\\ \\
%\textbf{*****************************************************************}
%We note that combining the random partitioning with the CLP algorithm (adopted to the congested clique) yields a simple $O(\log^* n)$-round algorithm that colors every graph with $\Delta+\widetilde{O}(\Delta^{3/4})$ colors: partition $G$ into $\Delta^{1/2}$ subgraphs by letting each vertex pick a subgraph uniformly at random. Vy simple application of Chernoff, the maximum degree in each $G_i$ is $\Delta(G_i)\leq \sqrt{\Delta}+\widetilde{O}(\Delta^{1/4})=O(\sqrt{n})$. Allocate $\Delta(G_i)+1$ colors to each subgraph and the apply the CLP algorithm in each of the $G_i$ subgraphs simultaneously. Overall, the number of colors allocated is $\Delta+\widetilde{O}(\Delta^{3/4})$. 
%\textbf{*****************************************************************}
%\\ \\
To handle large degree graphs, we design a graph sparsification technique that essentially reduces the problem of $(\Delta+1)$ coloring for an arbitrarily large $\Delta=\widetilde{O}(n^{1-\epsilon})$ into $\ell=O(\log(1/\epsilon))$ (non-independent) subproblems. In each subproblem, one has to compute a $(\Delta'+1)$ coloring for a subgraph with $\Delta'=O(\sqrt{n})$, which can be done in $O(\log^* \Delta)$ rounds, using a modification of the CLP algorithm, that we describe later on. 

\setlength{\columnsep}{18pt}%
\begin{wrapfigure}{r}{7cm}
\centering
\vspace{-5pt}
\includegraphics[width=0.28\textwidth]{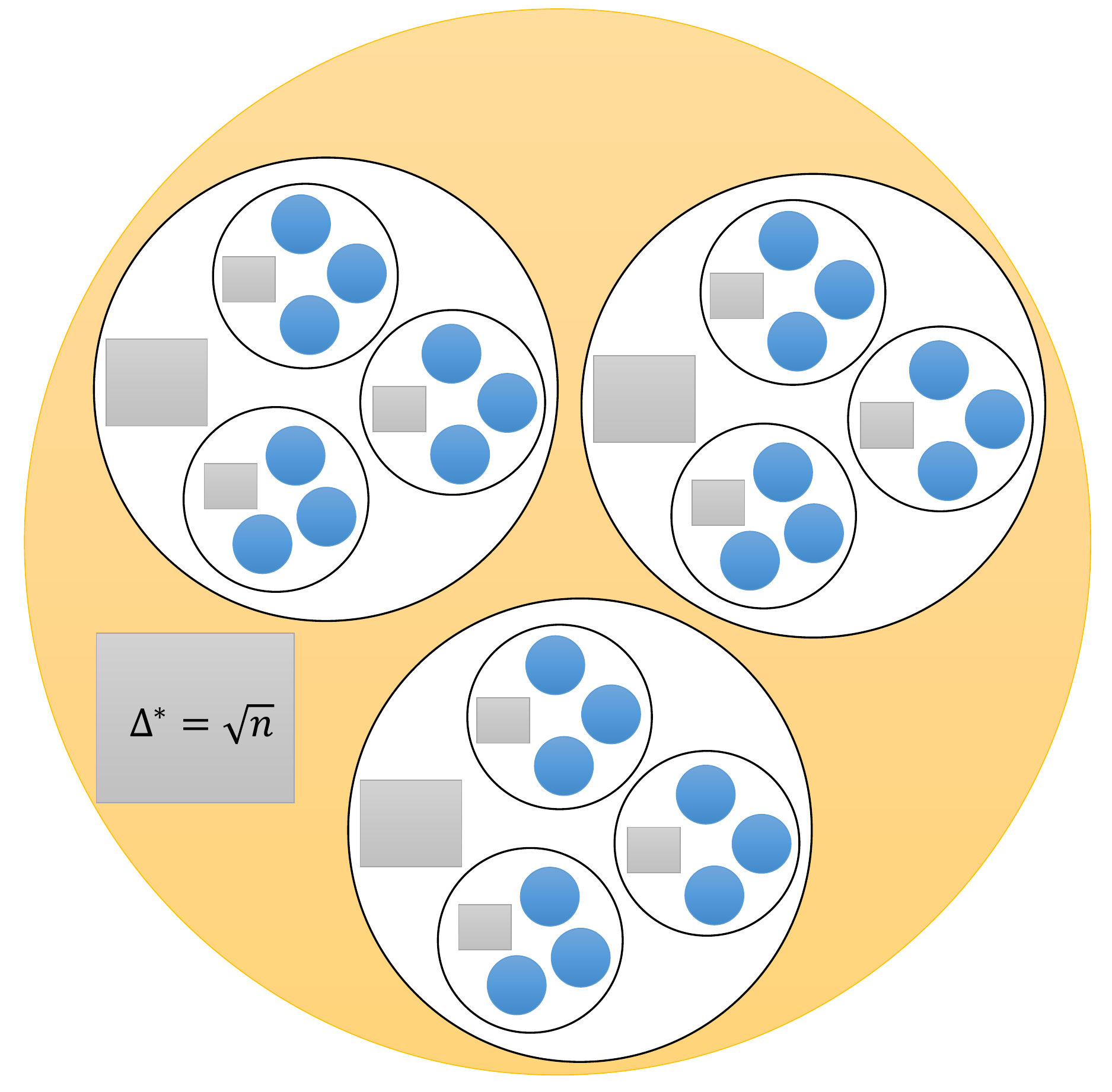}
\vspace{-5pt}
	\caption{{\footnotesize Illustration of the recursive sparsification. Gray boxes correspond to subgraphs with maximum degree $O(\sqrt{\Delta})$.}
		\vspace{-2pt}}
	\label{fig:recursivecol}
\end{wrapfigure} 
Since there many dependencies between these $\ell$ 
sub-problems, it is required by our algorithm to solve them one-by-one, leading to a round complexity of $O(\log(1/\epsilon) \log^* \Delta)$.  See \Cref{fig:recursivecol} for an illustration of the recursion levels. 
To get an intuition into our approach and the challenges involved, 
consider an input graph $G$ with maximum degree $\Delta=(n/\log n)^{1-\epsilon}$ and a palette $\Pal(G)=\{1,\ldots, \Delta+1\}$  given to each node in $G$. 
A natural approach (also taken in \cite{hegeman2015lessons}) for handling a large degree graph is to decompose  it (say, randomly) into $k$ vertex disjoint graphs $G_1, G_2, \ldots , G_{k}$, allocate a \emph{distinct} set of colors for each of the subgraphs taken from $\Pal(G)$ and solve the problem recursively on each of them, enjoying (hopefully) smaller degrees in each $G_i$. 
%That is, the algorithm allocates each $G_i$ a palette $\Pal(G_i) \subseteq \Pal(G)$ of $\Delta(G_i)+1$ colors such that any two subgraphs $G_i,G_j$ are given disjoint set of palettes $\Pal(G_i)\cap \Pal(G_j)=\emptyset$. 
Intuitively, assigning a \emph{disjoint} set of colors to each $G_i$ has the effect of \emph{removing} all edges connecting nodes in different subgraphs. Thus, the input graph $G$ is sparsified into a graph $G'=\bigcup G_i$ such that a legal coloring of $G'$ (with the corresponding palettes given to the nodes) is a legal coloring for $G$. The main obstacle in implementing this approach is that assigning a distinct set of $\Delta(G_i)+1$ colors to each of the $G_i$ subgraph might be beyond the budget of $\Delta+1$ colors. Indeed in \cite{hegeman2015lessons} this approach led to $O(\Delta)$ coloring rather than $(\Delta+1)$. To reduce the number of colors allocated to each subgraph $G_i$, it is desirable that the maximum degree $\Delta(G_i)$ would be as small as possible, for each $G_i$. This is exactly the problem of $(k,p)$ \emph{defective coloring} where one needs to color the graph with $k$ colors such that the number of neighbors with the same color is at most $p$. To this point, the best defective coloring algorithm for large degrees is the randomized one: let each node pick a subgraph $G_i$ (i.e., a color in the defective coloring language) uniformly at random. By a simple application of Chernoff bound, it is easy to see that the partitioning is ``almost" perfect: w.h.p., for every $i$, $\Delta(G_i)\leq \Delta/k+\sqrt{\log n\cdot \Delta/k}$. Hence, allocating $\Delta(G_i)+1$ colors to each subgraphs consumes $\Delta+\widetilde{O}(\sqrt{\Delta k})$ colors. To add insult to injury, this additive penalty of $\widetilde{O}(\sqrt{\Delta k})$ is only for one recursion call! 

It is interesting to note that the parameter $k$ -- number of subgraphs (colors) -- plays a key role here. Having a large $k$ has the benefit of sharply decreasing the degree (i.e., from $\Delta$ to $\Delta/k$). However, it has the drawback of increasing the standard deviation and hence the total number  of colors used. Despite these opposing effects, it seems that for whatever value of $k$ chosen, increasing the number of colors to $\Delta+\Delta^{\epsilon}$ is unavoidable.

Our approach bypasses this obstacle by partitioning only a large fraction of the vertices into small-degree subgraphs \emph{but not all of them}. Keeping in mind that we can handle efficiently graphs with maximum degree $\sqrt{n}$, in every level of the recursion, roughly $1-1/\sqrt{\Delta}$ of the vertices are partitioned into subgraphs $G_1,\ldots, G_k$. Let $\Delta(G_i)$ be the maximum degree of $G_i$. The remaining vertices join a left-over subgraph $G^*$. The number of subgraphs, $k$, is chosen carefully so that allocating $\Delta(G_i)+1$ colors to each of the $k$ subgraphs, consumes at most $\Delta$ colors, on the one hand; and that the degree reduction in each recursion level is large enough on the other hand. These subgraphs are then colored recursively, until all remaining subgraphs have degree of $O(\sqrt{n})$. Once all vertices in these subgraphs are colored, the algorithm turns to color the left-over subgraph $G^*$. Since the maximum degree in $G^*$ is $O(\sqrt{n})$, it is tempting to use the CLP algorithm to complete the coloring, as this can be done in $O(\log^* n)$ rounds for such bound on the maximum degree. This is not so immediate for the following reasons. Although the degree of $v$ in $G^*$ is $O(\sqrt{n})$, the graph $G^*$ cannot be colored independently (as at that point, we ran out of colors to be solely allocate to $G^*$). Instead, the coloring of $G^*$ should agree with the coloring of the rest of the graph and each $v$ might have $\Omega(\Delta)=n^{1-\epsilon}$ neighbors in $G$. At first glance, it seems that this obstacle is easily solved by letting each $v \in G^*$ pick a subset of $\deg(v,G^*)=O(\sqrt{n})$ colors from its palette (i.e., removing the colors taken by its neighbors in $G\setminus G^*$). Now, one can consider only the graph $G^*$ with maximum degree $\sqrt{n}$, where each vertex has a palette of $\deg(v,G^*)+1$ colors. Unfortunately, this seemingly plausible approach has a subtle flaw: for the CLP algorithm it is essential that each vertex receives a palette with \emph{exactly} $\Delta(G^*)+1$ colors. This is indeed crucial and as noted by the authors adopting their algorithm to a $(\deg+1)$ coloring algorithm is highly non-trivial and probably calls for a different approach.

In our setting, allocating each vertex $v \in G$ the exact same number of colors seems to be impossible as the number of available colors of each $v$ depends on the number of its neighbors in $G\setminus G^*$, and this number has some fluctuations due to the random partitioning of the vertices. 
To get out of this impasse, we show that after coloring all vertices in $G\setminus G^*$, every vertex $v \in G^*$ has $r_v  \in [\Delta(G^*) \pm (\Delta(G^*))^{3/5}]$ available colors in its palette where $r_v \geq \deg(v,G^*)$. 
In other words, all vertices can be allocated ``almost" the same number of colors, but not exactly the same. We then carefully revise the basic definitions of the CLP algorithm and show that the analysis still goes through (upon minor changes) for this narrow range of variation in the size of the palettes. 

%\paragraph{Some Remaining Open Problems.}
%We note that our approach (as well as that of \cite{hegeman2015lessons}) works only for the basic problem of $(\Delta+1)$ coloring and does not extend for the related problem of $(\Delta+1)$ \emph{list} coloring. In the latter, each node in $G$ might have a different palette of $\Delta+1$ colors whereas in the former all nodes are given the same initial palette with $\Delta+1$ colors. Our deterministic algorithms do work for the $(\Delta+1)$ list coloring. We leave open the question of designing $(\Delta+1)$ \emph{list} coloring algorithms in the congested clique with $o(\log \Delta)$ rounds for $\Delta=\Omega(\sqrt{n})$. 
%Turning back to the task of $(\Delta+1)$ coloring, the only barrier to achieve an $O(\log^* n)$ round for any graph $G$ boils into computing defective coloring \emph{slightly} better then the randomized approach (i.e., decreasing the standard deviation). We leave open the following problem: Compute $(k,p)$ defective coloring for $p \leq \Delta/k+(\Delta/k)^{1/4}$ for $\Delta=\Omega(\sqrt{n})$ in $O(\log^* n)$ rounds in the congested clique model.
\paragraph{Paper Organization.}
In \Cref{sec:clpcc}, we explain how the CLP algorithm of \cite{CHP18} can be simulated in $O(\log^* \Delta)$ congested-clique rounds when $\Delta=O(\sqrt{n})$. In \Cref{sec:deltathreefour}, we illustrate the degree-reduction technique on the case where $\Delta=O((n/\log n)^{3/4})$. \Cref{sec:deltaeps} extends this approach for 
$\Delta=O((n/\log n)^{1-\epsilon})$ for any $\epsilon \in (0,1)$, and \Cref{sec:generalcol} handles the general case and provides the complete algorithm. Finally, \Cref{sec:detcol} discusses deterministic coloring algorithms. 

\section{The Chang-Li-Pettie (CLP) Alg. in the Congested Clique}\label{sec:clpcc}

\paragraph{High-level Description of the CLP Alg. in the \local\ Model.}
In the description below, we focus on the main randomized part of the CLP algorithm \cite{harris2016distributed}, and start by providing key definitions and ideas from \cite{harris2016distributed}. %We start by describing the main part of Harris-Schneider-Su algorithm \cite{harris2016distributed}. 
%Schneider and Wattenhofer showed an $O(\log^* n)$-round algorithm when the palette size is $(1+\epsilon)\Delta$ for constant $\epsilon \in [0,1]$. 

Harris-Schneider-Su algorithm is based on partitioning the graph into an $\epsilon$-sparse subgraph and a collection of vertex-disjoint $\epsilon$-dense components, for a given input parameter $\epsilon$. 
Since the CLP algorithm extends this partitioning, we next formally provide the basic definitions from \cite{harris2016distributed}.
For an $\epsilon \in (0,1)$, an edge $e=(u,v)$ is an $\epsilon$-friend if $|N(u) \cap N(v)|\geq (1-\epsilon)\cdot \Delta$. The endpoints of an $\epsilon$-friend edge are $\epsilon$-friends.
A vertex $v$ is $\epsilon$-dense if $v$ has at least $(1-\epsilon)\Delta$ $\epsilon$-friends, otherwise it is $\epsilon$-sparse. %When $\Delta=O(\sqrt{n})$, each vertex can collect the neighbors of its neighbors in $O(1)$ rounds, compute its $\epsilon$-friends, and decide if it is an $\epsilon$-dense or $\epsilon$-sparse. 
A key structure that arises from the definition of $\epsilon$ dense vertices is that of $\epsilon$-\emph{almost clique} which is a connected component of the subgraph induced by the $\epsilon$-dense vertices and $\epsilon$-friend edges. 
The dense components, $\epsilon$-\emph{almost} cliques, have some nice properties: each component $C$ has at most $(1+\epsilon)\Delta$ many vertices, each vertex $v \in C$ has $O(\epsilon \Delta)$ neighbors outside $C$ (called \emph{external} neighbors) and $O(\epsilon\Delta)$ vertices in $C$ which are not its neighbors. In addition, $C$ has weak diameter at most $2$. Coloring the dense vertices consists of $O(\log_{1/\epsilon} \Delta)$ phases. 
The efficient coloring of dense regions is made possible by generating a random proper coloring inside each clique so that each vertex has a small probability of receiving the same color as one of its external neighbors. 
To do that, in each cluster a random permutation is computed and each vertex selects a tentative color from its palette excluding the colors selected by
lower rank vertices. Since each component has weak diameter at most $2$, this process is implemented in $2$ rounds of the \local\ model. The remaining \emph{sparse} subgraph is colored using a Schneider-Wattenhofer style algorithm \cite{schneider2010new} within $O(\log(1/\epsilon))$ rounds.

In Chang-Li-Pettie algorithm the vertices are partitioned into $\ell=\lceil \log\log \Delta \rceil$ layers in decreasing level of density. This hierarchical partitioning is based on a sequence of $\ell$ sparsity thresholds $\epsilon_1,\ldots,\epsilon_\ell$ where $\epsilon_i=\sqrt{\epsilon_{i-1}}$. Roughly speaking, level $i$ consists of the vertices which are $\epsilon_i$-dense but $\epsilon_{i-1}$-sparse.
Instead of coloring the vertices layer by layer, the algorithm partitions the vertices in level $i$ into large and small components and partitions the layers into $O(\log^* \Delta)$ strata, vertices in the same stratum would be colored simultaneously.  The algorithm colors vertices in $O(\log^* \Delta)$ phases, giving priority to vertices in small components. The procedures that color the dense vertices are of the same flavor as those of Harris-Schneider-Su. The key benefit in having the hierarchical structure is that the dense-coloring procedure is applied for $O(1)$ many phases on each stratum, rather than applying it for $O(\log_{1/\epsilon}\Delta)$ phases as in \cite{harris2016distributed}. 

\paragraph{An $O(\log^* \Delta)$-Round Alg. for $\Delta=O(\sqrt{n})$ in the Congested Clique.}
We next observe that the randomized part of the CLP algorithm \cite{CHP18} can be implemented in the congested clique model when $\Delta=O(\sqrt{n})$ within $O(\log^* \Delta)$ rounds. We note that we obtain a round complexity of $O(\log \Delta^*)$ rather than $O(\log^* n)$ as in \cite{CHP18}, due to the fact that the only part of the CLP algorithm that requires $O(\log^* n)$ rounds was for coloring a subgraph with maximum constant degree. In the congested-clique model such a step can be implemented in $O(1)$ rounds using Lenzen's routing algorithm. We show:
\begin{theorem}\label{thm:colorsqrtn}
For every graph with maximum degree $\Delta=O(\sqrt{n})$, there is an $O(\log^* \Delta)$-round randomized algorithm that computes $(\Delta+1)$-list coloring in the congested clique model.
\end{theorem}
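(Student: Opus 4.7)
The plan is to show that whenever $\Delta=O(\sqrt{n})$, every \local\ round of the CLP algorithm can be simulated in $O(1)$ rounds of the congested clique via Lenzen's routing, and that the one step of CLP that costs $O(\log^* n)$ \local\ rounds can be collapsed to $O(1)$ rounds in our model. Summing these two observations gives the claimed $O(\log^* \Delta)$ bound.

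First, I would inspect the communication pattern of the randomized part of CLP. Every step -- choosing a tentative color from a palette, looking up the tentative colors (and palettes) of neighbors, running a random permutation inside an $\epsilon$-almost-clique $C$, and executing the Schneider--Wattenhofer style procedure on the sparse subgraph -- is $O(1)$-local in \local. Since almost-cliques have weak diameter $2$, intra-cluster coordination requires exchanging information across paths of length at most two. Each atomic piece of information (a color, a palette entry, a cluster identifier, a permutation rank, a tentative assignment) fits in $O(\log n)$ bits.

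Second, I would implement a single \local\ round of CLP in $O(1)$ congested-clique rounds. Each vertex $v$ has $\deg(v)\leq \Delta = O(\sqrt{n})$ neighbors, and for each neighbor $u$ the message $v$ must send in a \local\ round consists of $O(\Delta)=O(\sqrt{n})$ tokens of $O(\log n)$ bits (its palette together with the small set of tentative colors and cluster labels). Thus the total number of $O(\log n)$-bit tokens that $v$ sends or receives in one \local\ round is at most $O(\Delta^2)=O(n)$. For simulating weak-diameter-$2$ operations inside almost-cliques, $v$ also needs to forward $O(n)$ tokens on behalf of its $2$-hop neighbors, and the same load bound $O(n)$ per vertex applies. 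This matches exactly the hypothesis of Lenzen's routing protocol, so the entire round is delivered in $O(1)$ congested-clique rounds. The randomized part of CLP executes in $O(\log^*\Delta)$ strata/phases of $O(1)$ \local\ rounds each, so its total cost in our model is $O(\log^*\Delta)$.

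Third, the remaining bottleneck in CLP is the \local\ coloring of a residual subgraph of very small (constant, or $\polylog$) degree, costing $O(\log^* n)$ \local\ rounds. In the congested clique this part becomes trivial: such a subgraph has $O(n)$ edges and $O(n)$ palette entries total, so using Lenzen's routing I would ship its entire description to a designated vertex, color it locally with zero communication, and broadcast the resulting assignment back in $O(1)$ rounds. This removes the $\log^* n$ term and leaves only the $O(\log^*\Delta)$ contribution from the main randomized part.

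The main obstacle I foresee is the second step, specifically showing that the $2$-hop coordination inside almost-cliques can always be packed into Lenzen's $O(n)$-per-vertex envelope. Because CLP is actually a list-coloring algorithm, palettes can differ, and a vertex may need to know which colors of a cluster-mate collide with the tentative colors of some external neighbor two hops away. The degree bound $\Delta=O(\sqrt{n})$ is precisely what keeps the $2$-hop traffic at $O(\Delta^2)=O(n)$ per vertex, which is why the threshold in the statement is $\sqrt{n}$; once this bookkeeping is carried out carefully for each of the primitives used by CLP, the rest of the simulation is mechanical.
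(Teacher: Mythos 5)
Your high-level plan is correct and matches the spirit of the paper: use Lenzen's routing to pack the communication of the randomized part of CLP into $O(1)$ congested-clique rounds per phase, and bypass the $O(\log^* n)$ constant-degree cleanup step by shipping the tiny residual subgraph to a single node. That said, there is a gap in your treatment of the weak-diameter-$2$ coordination inside almost-cliques, and it is precisely the part you flag as "the main obstacle." Your claim that a vertex only needs to ``forward $O(n)$ tokens on behalf of its $2$-hop neighbors'' is not justified and does not hold under a naive per-round simulation. The permutation-based dense-coloring step of CLP requires, in effect, that some vertex inside each $\epsilon$-almost-clique $C$ learns the \emph{entire palette} of every member of $C$. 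If this is done by simulating the $2$-hop LOCAL collection (neighbors forwarding the palettes of their own neighbors), a middle vertex $w$ would have to relay $O(\Delta)$ palettes of $O(\Delta)$ tokens each to a collector, and a collector adjacent to $\Delta$ such forwarders would receive $O(\Delta^3)=O(n^{3/2})$ tokens in a single simulated round -- far beyond Lenzen's $O(n)$ per-vertex budget. So ``each LOCAL round becomes $O(1)$ CC rounds'' is not automatic for CLP, because CLP's rounds use large ($\Omega(\Delta^2 \log n)$-bit) messages when viewed edge-by-edge.

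The paper's resolution is to exploit the congested-clique's all-to-all links rather than simulating graph-path forwarding. Because $|C|=(1+\epsilon)\Delta=O(\sqrt n)$, a designated leader of each almost-clique $C$ can be told \emph{directly} (not via graph paths) the palette and neighbor list of every $v\in C$: this is $O(\sqrt n)\cdot O(\sqrt n)=O(n)$ tokens arriving at each leader, which fits Lenzen and runs simultaneously for all cliques. The leader then locally simulates the entire permutation-based $\DenseColoringStep$ and sends each vertex its final color. A symmetric $O(n)$-per-vertex accounting handles the Schneider--Wattenhofer-style coloring of the sparse vertices and the $\ColorBidding$ cleanup, and the poly-logarithmic-size or constant-degree leftover components are shipped to leaders exactly as you describe. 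If you replace your ``simulate each LOCAL round by forwarding'' step with this leader-per-clique collection, the rest of your argument goes through and gives the claimed $O(\log^*\Delta)$ bound.
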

The main advantage of having small degrees is that it is possible
for each node to collect its $2$-neighborhood in $O(1)$ rounds (i.e., using Lenzen's routing \cite{lenzen2013route}). As we will see, this is sufficient in order to simulate the CLP algorithm in $O(\log^* \Delta)$ rounds. The hierarchical decomposition of the vertices depends on the computation of $\epsilon$-dense vertices. By collecting the neighbors of its neighbors, every vertex can learn its $\epsilon$-dense friends and based on that deduce if it is an $\epsilon$-dense vertex for every $\epsilon$. 
In particular, for every edge $(u,v)$, $v$ can learn the minimum $i$ such that $u$ and $v$ are $\epsilon_i$-friends (i.e, the ``threshold" $\epsilon$). To allow each vertex $v$ compute the $\epsilon$-almost cliques to which it belongs, we do as follows. Each vertex $v$ sends to each of its neighbors $N(v)$, the minimum $\epsilon_i$ such that $u,v$ are $\epsilon_i$-friends, for every $u \in N(v)$. Since the weak diameter of each almost-clique is at most $2$, each vertex has collected all the required information from its $2^{nd}$ neighborhood to locally compute its $\epsilon_i$-almost cliques for every $\epsilon_i$. Overall, each vertex sends $O(\Delta)$ messages and receives $O(\Delta^2)=O(n)$ messages, collection this information can be done in $O(1)$ rounds for all nodes, using Lenzen's routing algorithm. 

The next obstacle is the simulation of the algorithm that colors the $\epsilon$-dense vertices. 
Since each $\epsilon$-almost clique $C$ has $(1+\epsilon)\Delta=O(\sqrt{n})$ vertices, we can make the leader of each such $C$ learn the palettes of all the vertices in its clique as well as their neighbors in $O(1)$ rounds. The leader can then locally simulate the dense-coloring procedure and notify the output color to each of its almost-clique vertices. Finally, coloring the sparse regions in a Schneider-Wattenhofer style uses messages of size $O(\Delta)$ and hence each vertex is the target of $O(\Delta^2)=O(n)$ messages which again can be implemented in $O(1)$ many rounds. 
By the above description, we also have: 
\begin{corollary}\label{cor:colorsqrtnparallel}
Given $q$ \emph{vertex}-disjoint subgraphs $G_1,\ldots, G_q$ each with maximum degree $\Delta=O(\sqrt{n})$, a $(\Delta+1)$ coloring can be computed in $O(\log^* \Delta)$ rounds, for all subgraphs simultaneously.
\end{corollary}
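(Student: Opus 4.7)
The plan is to run the algorithm of \Cref{thm:colorsqrtn} in parallel on each $G_i$ and observe that vertex-disjointness keeps all per-vertex communication bounds at the single-graph level, so the overall round count remains $O(\log^* \Delta)$. Since every vertex $v$ belongs to exactly one $G_i$, all quantities relevant to $v$'s execution -- its neighbors, the neighbors of its neighbors, its $\epsilon$-dense friends, its almost-clique and that clique's external neighbors -- lie entirely within the single subgraph $G_i$, so the computation performed for $v$ in the parallel execution is literally identical to the single-graph execution on $G_i$.

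First I would revisit each of the three communication-heavy primitives used in the proof of \Cref{thm:colorsqrtn}. For the $2$-neighborhood collection, vertex $v$ sends $O(\Delta)$ messages to its $G_i$-neighbors and receives at most $O(\Delta^2) = O(n)$ messages from them, so Lenzen's routing \cite{lenzen2013route} delivers these in $O(1)$ rounds. For the dense-coloring step, each $\epsilon$-almost clique $C$ has $O(\sqrt{n})$ vertices and lies inside a single $G_i$, so its leader can gather the palettes of $C$ and the relevant external-neighbor information in $O(1)$ rounds; distinct cliques (whether inside the same $G_i$ or across different $G_i$'s) use disjoint vertex sets, so these gathers can proceed in parallel. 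For the Schneider--Wattenhofer style coloring of the sparse regions, each message has length $O(\Delta)$ and each vertex is the target of at most $O(\Delta^2)=O(n)$ such messages restricted to its own $G_i$, which again fits in Lenzen's $O(1)$-round budget.

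The main obstacle -- and essentially the only nontrivial thing to check -- is that the aggregate send/receive loads across all $q$ parallel executions continue to satisfy the $O(n)$ per-vertex precondition of Lenzen's routing, even when every subgraph is communicating simultaneously. This is exactly where vertex-disjointness does the work: since $v$ appears in at most one $G_i$, its total communication across the parallel execution equals its communication in that single $G_i$, so the $O(1)$-round per-simulated-step routing guarantee carries over unchanged. Finally, because the $G_i$'s share no edges, each $G_i$ can reuse the common palette $\{1,\ldots,\Delta+1\}$ independently to produce a proper $(\Delta+1)$ coloring of the whole ensemble, and summing over the $O(\log^* \Delta)$ simulated rounds gives the claimed complexity.
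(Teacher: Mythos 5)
Your proof is correct and takes essentially the same approach as the paper: go through each communication-heavy primitive in the proof of \Cref{thm:colorsqrtn} and argue that vertex-disjointness keeps per-vertex Lenzen-routing loads at the single-graph level. One small step you did not touch is the final cleanup of strata-$1$ remnants, which is the only place in CLP requiring $O(\log^* n)$ \local\ rounds and is replaced by $O(1)$ congested-clique rounds: the paper collects the constant-degree portion of all leftover subgraphs (across every $G_i$) to a \emph{single} global leader, and the polylog-size components to per-component leaders. Your ``each vertex communicates only within its own $G_i$'' invariant covers the senders' side, but the single global leader receives from all $q$ subgraphs at once; this is still fine because the total edge count of the constant-degree leftover is $O(n)$, yet it is worth stating explicitly since it is the one step where aggregate rather than per-subgraph load matters.
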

A more detailed description of the algorithm and the proof of Cor. \ref{cor:colorsqrtnparallel} appears in \Cref{sec:clpdetails}.
\def\APPENDFULLCLPSMALL{
The hierarchy is based on a sequence of $\ell=O(\log\log \Delta)$ sparsity parameters $(\epsilon_1,\ldots, \epsilon_\ell)$ and a set of vertices $V^*$ that are not yet colored. The sequence of sparsity parameters satisfies: (i) $\epsilon_1=\Delta^{-1/10}$, $\epsilon_i=\sqrt{\epsilon_{i-1}}$ and $\epsilon_\ell=1/K$ for a large enough constant $K$. 

For a sparsity parameter $\epsilon_i$, let $V^d_{\epsilon_i},V^{s}_{\epsilon_i}$ be the set of vertices which are $\epsilon_i$-dense (resp., sparse). Based on its $2$-hop neighborhood, each vertex locally computes if it is $\epsilon_i$-dense (in $V^d_{\epsilon_i}$) or $\epsilon_i$-sparse. This $\ell=O(\log\log \Delta)$ bits of information (i.e., there are $\ell$ sparsity parameters) are exchanged between neighbors on $G$ in a single round\footnote{In fact, it is even sufficient to send the first $i$ such that $v$ is an $\epsilon_i$-dense.}.
%Each vertex can send to all its neighbors the outcome of that computation which consists of $O(\log\log \Delta)$ bits 
%(i.e., for each $\epsilon_i$, $v$ sends $1$ if its $\epsilon_i$ dense and $0$ otherwise)
The sparsity of the vertices define a hierarchy of $\ell$ \emph{levels}: $V_1, \ldots, V_\epsilon$ where $V_1=V^* \cap V_{\epsilon_1}^d$, $V_i=V^* \cap (V_{\epsilon_i}^d \setminus V_{\epsilon_{i-1}}^d)$ and $V_{sp}=V^* \cap V_{\epsilon_\ell}$. These $\ell$ levels are further grouped 
$s=O(\log^* \Delta)$ \emph{Strata} $W_1,\ldots, W_s$ where $W_1=V_1$ and 
$$W_k=\bigcup_{i: \epsilon_i \in (\xi_{k-1},\xi_k]} V_i \mbox{~~where~} \zeta_1=\epsilon_1, ~, \xi_k=1/\log(1/\xi_{k-1})~.$$ 

The computation of $\epsilon_i$-\emph{almost} clique for every $\epsilon_i$ can be computed
using the $O(1)$-round connectivity algorithm of \cite{Jurdzinski018}. That is, the $\epsilon_i$-\emph{almost} clique are the connected components of the graph $G[V_{\epsilon_i}^d]$. Thus, in $O(\log\log \Delta)$ rounds, each vertex knows the members of its $\epsilon_i$-almost clique for every $\epsilon_i$.
 
Each $i$-layer $V_i$ is partitioned into \emph{blocks} based on these clique structures. In particular, letting $\{C_1, C_2, \ldots, \}$ be $\epsilon_i$-almost cliques, then each clique $C_j$ defines a block $B_j=C_j \cap V_i$, that is the block $B_j$ contains the subset of vertices in $C_j$ that are $\epsilon_i$-dense but are \emph{not} $\epsilon_{i-1}$-dense. These blocks can be easily computed in $O(1)$ rounds as they only depend on the layer of the vertex and on the knowledge of the cliques. 
This block collection $(B_{i,1},B_{i,2}, \ldots, )$ is a partition of $V_i$, we call a block $B_{i,j}$ as a layer-$i$ block (and also stratum $k$ block for the right $k$).
The algorithm uses a tree structure on these blocks: A layer-$i$ block $B$ is \emph{descendant} of layer-$j$ block $B'$ if $j>i$ and both $B, B'$ are subsets of the same $\epsilon_{j}$-almost clique. The root contains of the subset of sparse vertices, $V_{\epsilon_\ell}$. In $O(1)$ rounds, each vertex $v$ knows for each vertex $u \in G$ the first index $i$ such that $u$ is $\epsilon_i$-dense. In addition, in $O(\log\log\Delta)$, the vertex can tell the $\epsilon_i$-almost clique ID of all the vertices in the graph, for every $i \in \{1,\ldots, \ell\}$. Thus the entire block tree $\mathcal{T}$ is known to all the vertices.

The algorithm colors vertices according to their stratum. The vertices in the stratum are divided into two: those belonging to \emph{large} blocks and those remaining to \emph{small} blocks.
A stratum-$k$ block $B$ is a \emph{large} block if $|B|\geq \Delta/\log^2(1/\xi_k)$ and there is no other 
stratum-$k'$ block for $k'\geq k$ such that $B$ is a descendent of $B'$ and $|B'|\geq \Delta/\log^2(1/\xi_{k'})$. Otherwise, $B$ is \emph{small}. Define $V_i^S,V_i^L,W_k^S$ and $W_k^L$ be the set of all vertices in layer-$i$ small blocks, layer-$i$ large blocks, stratum-$k$ small blocks, and stratum-$k$ large blocks.
Finally, the algorithm defines a partition of $W_k$ into super-blocks $(R_1,R_2, \ldots, )$ in the following manner. Let $i',i'+1,\ldots, i$ be the set of layers of stratum $k$. Let $C_1,\ldots, C_i$ be the set of $\epsilon_i$-almost cliques. Then for each $C_j$, define $R_j=C_j \cap W_k$ as the \emph{stratum}-$k$ super-block.

\paragraph{Main Steps of the Coloring algorithm.}
The first step of the algorithm applies Alg. $\OneShotColoring$ for $O(1)$ rounds which colors a small constant fraction of the vertices. In this algorithm the vertices simply send one color to their neighbors and hence it is trivially implemented in $O(1)$ rounds. Let $V^*$ be the remaining uncolored vertices. By the description above, the vertices compute the hierarchical decomposition into layers, blocks, stratum and super-blocks. In addition, all vertices know the virtual tree $\mathcal{T}$ which defines the relation between the blocks. We have that $V^*=(W_1^S, \ldots, W_s^S,W_1^L,\ldots, W_s^L,V_{sp})$. The vertices are colored in $s+2$ stages. First, all the small blocks are colored in $s$ phases: stratum by stratum. In other words, all the vertices in the small block are colored according in order: $W_s^S,\ldots, W_1^S$.
Next, the algorithm colors the vertices of $W'=\bigcup_{j=2}^s W^L_j$, i.e., all the vertices in large blocks, except for those belonging to blocks of the first layer $V_1$. Lastly, the vertices of the large block $W_1^S$ are colored. At the end of this hierarchy-based coloring, there is a (small) subset of vertices that failed to be colored and a subset of sparse vertices in $V_{sp}$). These vertices are colored by a similar procedure, a final cleanup phase. In the high-level, the CLP algorithm consists of two main procedures: a coloring procedure for the dense vertices and a coloring procedure of the sparse vertices. 

\paragraph{Coloring $\epsilon$-Dense Vertices.}
In the general setting, one is given $g$ subgraphs $S_1,\ldots, S_g$ (e.g., collection of $\epsilon$-almost cliques), each with weak diameter $2$. The coloring algorithm attempts to legally color a large fraction (as a function of the density) of the vertices in these subgraphs.
In Alg. $\DenseColoringStep$ of CLP, all vertices agree on a value $Z_{ex}$ which is a lower bound on the number of access colors w.r.t $S_j$. In addition, each $v \in S_j$ is associated with a parameter $D_v$. The important parameter is $\delta_v=D_v /Z_{ex}$. The algorithm will attempt to color $1-\delta_v$ fraction of the vertices. To color $S_j$ the verices are permuted in increasing order by the $D$-value, and are colored one by one based on the order given by the permutation. In the local model, since each subgraph $S_j$ has weak diameter $2$, this coloring step can be done in $O(1)$ rounds.

In the congested clique model, we cannot enjoy the small diameter of each clique. Instead we  
use the fact that each subgraphs $S_1,\ldots, S_g$ is \emph{small}. Since each $S_j$ is a \emph{super-block} it is a subset of some $\epsilon_i$-almost clique $C$ and each such clique has $(1+\epsilon)\Delta=O(\Delta)$ vertices.  By letting each vertex $v \in S_j$ send its palette of colors and its neighbors to the leader $v_j$ of $S_j$, the leader can locally simulate this permutation-based coloring. Overall, each leader needs to receive $O(\Delta^2)=O(n)$ messages, which can be done in $O(1)$ rounds for all the subgraphs $S_j$ simultaneously, using Lenzen's routing algorithm.
%
%it is sufficient that leader $v_j$ of each super-block $R_j$ will collect $O(\epsilon \Delta \cdot \Delta)$ messages regarding . That is, each vertex $u \in R_j$ needs to send $C$ the $O(\epsilon \Delta)$ vertices in $R_j$ which are \emph{not} neighbors of $u$, and its palette $O(\Delta)$ colors. Since $\epsilon<1$ and $\Delta=O(\sqrt{n})$, all leaders can receive this information in $O(1)$ rounds, locally simulate $\DenseColoringStep$ for their component and notify the color to each of the vertices. All vertices exchange their colors and if its legal, this is their final color.  

The vertices in the small blocks in each strata $W_k$ for $k\geq 2$ are colored by $O(1)$ applications of Alg. $\DenseColoringStep$. At that point, each vertex $v \in V^*$ will have at most $O(\epsilon^5 \Delta)$ uncolored neighbors in each layer $V_i$, with probability at least $1-exp(-\Omega(\poly\Delta))$. 
When $\Delta=\Omega(\log^4 n)$, after applying $\DenseColoringStep$ for $O(1)$ many times, the remaining uncolored part is partitioned into two subgraphs: the first has maximum degree $O(1)$ and hence can be locally colored in $O(1)$ rounds, by collecting it to one leader using Lenzen's routing. The second subgraph has, w.h.p., the property that each of its connected component has size $\poly\log n$. Then, by computing the connected components and collecting the topology of each connected component to the component' leader, this remaining graph can be colored in $O(1)$ many rounds.

Coloring the vertices in large blocks (other than $1$) is done applying $6$ times Alg. $\DenseColoringStep$. Coloring of the vertices in $V_1$ that belong to large blocks is more involved but with respect to its implementation in the congested clique, it uses the same tools as above. I.e., after applying $\DenseColoringStep$ for several times, the remaining subgraph can be colored locally in $O(1)$ many rounds (due to small components or constant maximal uncolored degree).
%
%%\paragraph{Coloring Vertices by Stratum: Small Blocks of $1$}
%\paragraph{Coloring Vertices by Stratum: Large Blocks other than $1$}
%%\paragraph{Coloring the Large Dense Regions.}
%
%\paragraph{Coloring Vertices by Stratum: Large Block of $1$}

\paragraph{Coloring Sparse Vertices and Vertices with Many Access Colors.}
The main procedure colors most of the vertices in $W_1,\ldots, W_s$ and leaves only a small fraction of uncolored vertices $U$, each has a large number of excess colors. The vertices in $U \cup V_{sp}$ are colored as follows. Consider first set of vertices in $U$ and orient that edges of $G[U]$ from the sparser endpoint to denser endpoint. Since each vertex knows the later of its neighbors, this is done in one round. If both endpoints are in the same layer, the edge is oriented towards the endpoint of smaller ID. 
The coloring of the previous steps guarantees that the outgoing degree of each vertex $v \in V_i$ is at most $O(\epsilon_{i-1}^{2.5}\cdot \Delta)$. By the initial coloring step, a vertex has $\Omega(\epsilon_{i-1}^2 \Delta)$ excess colors in its palette. The set of vertices $U$ are then colored using Alg. $\ColorBidding$ with runs in $O(\log^* \Delta)$ rounds in the local model (see Section 3.3 in \cite{CHP18}). It is easy to see that each vertex sends message of size $O(\log n/\epsilon)$ and hence it can be easily implemented in the congested clique model within the same number of rounds when $\Delta=O(\sqrt{n})$. The set of sparse vertices $V_{sp}$ is colored in a similar manner. All vertices in $U \cup V_{sp}$ that failed to be colored are added to the set of bad vertices $V_{bad}$. Overall, coloring the dense regions is done in $O(\log^* \Delta)$ rounds, and the clean-up phase of coloring the sparse region and the remaining uncolored vertices from strata at least $2$ is done in $O(\log^* \Delta)$ as well. Coloring the remaining uncolored vertices from strata $1$ is done in $O(1)$ rounds (this is the only step that requires $O(\log^* n)$ rounds, in the CLP algorithm). 

We next turn to prove \Cref{cor:colorsqrtnparallel} by showing that given a collection of vertex disjoint subgraphs with maximum degree $O(\sqrt{n})$, all subgraphs can be colored simultaneously in $O(\log^* \Delta)$ rounds. There are three main subroutines. Coloring the dense regions requires collecting the $2$-neighborhood topology, which can be done in $O(1)$ rounds, so long that $\Delta^2=O(n)$. Coloring sparse regions and the remaining uncolored vertices from strata at least $2$ also requires to have small maximum degree. Finally, consider the coloring of the remaining uncolored vertices from strata $1$. There are two types of left-over subgraphs. One with constant maximum degree -- all these subgraphs can be collected to a single leader in $O(1)$ rounds. The other subgraph contains components of poly-logarithmic size, since in such a case we color the subgraph by collecting information to the local leader of each component, this step can be done for all the 
vertex disjoint subgraphs simultaneously.

}%\APPENDFULLCLPSMALL

\paragraph{Handling Non-Equal Palette Size for $\Delta=O(\sqrt{n})$.} %\label{sec:nonequalclp}
The CLP algorithm assumes that each vertex is given a list of \emph{exactly} $(\Delta+1)$ colors. 
Our coloring algorithms requires a more relaxed setting where each vertex $v$ is allowed to be given a list of $r_v \in [\Delta-\Delta^{3/5},\Delta+1]$ colors where $r_v \geq \deg(v,G)+1$. 
In this subsection we show:
\vspace{-5pt}
\begin{lemma}\label{lem:modclp}
Given a graph $G$ with $\Delta=O(\sqrt{n})$, if every vertex $v$ has a palette with $r_v\geq \deg(v,G)+1$ colors and $r_v \in [\Delta-\Delta^{3/5},\Delta+1]$ then a list coloring can be computed in $O(\log^* \Delta)$ rounds in the congested clique model.
\end{lemma}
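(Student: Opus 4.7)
The strategy is to leave the structural machinery of the CLP algorithm untouched and only re-examine the probabilistic and excess-color accounting in each subroutine. The key observation is that the hierarchical decomposition into $\epsilon_i$-dense/sparse vertices, $\epsilon_i$-almost cliques, layers, blocks, strata, and super-blocks is purely graph-theoretic: it depends on $G$ and the thresholds $\epsilon_1 = \Delta^{-1/10}, \epsilon_i = \sqrt{\epsilon_{i-1}}$, not on the palettes. Consequently the entire decomposition and its congested-clique implementation from \Cref{thm:colorsqrtn} carry over verbatim, since the only global hypothesis used there is $\Delta = O(\sqrt{n})$, which continues to hold.

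Next, I would revisit each palette-sensitive subroutine ($\OneShotColoring$, $\DenseColoringStep$, $\ColorBidding$, and the sparse cleanup) and re-derive the success bounds under the heterogeneous palette sizes $r_v$. The quantity driving every CLP probability estimate is the number of \emph{excess} colors at a vertex $v$, i.e., $r_v$ minus the number of colors used by its already-colored neighbors. By assumption $r_v \geq \deg(v,G)+1$, and $r_v \geq \Delta + 1 - \Delta^{3/5}$, so compared to the uniform-palette case we lose at most an additive $\Delta^{3/5}$ term per vertex. Meanwhile the CLP analysis only ever requires the excess to be at least $\Omega(\epsilon_i^{c} \Delta)$ for a constant $c \le 2$ and a vertex in layer~$i$; since even the smallest sparsity parameter $\epsilon_1 = \Delta^{-1/10}$ gives a required excess of $\Omega(\Delta^{4/5})$, the $\Delta^{3/5}$ perturbation is absorbed as a lower-order term. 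Hence every Chernoff/concentration bound in CLP, as well as the tentative-coloring collision probabilities of the permutation-based dense-coloring step, continue to yield the same asymptotic guarantees.

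For the dense step specifically, inside each $\epsilon_i$-almost clique $C$ the leader gathers palettes and adjacency information and simulates locally (by the argument of \Cref{thm:colorsqrtn}). When processing vertices in permutation order, a vertex $v$ of $C$ with $O(\epsilon_i \Delta)$ external neighbors still has $\Theta(\Delta)$ available colors after subtracting colors used inside $C$ and the $\Delta^{3/5}$ palette deficit, so its collision probability with any external neighbor remains $O(\epsilon_i)$. The large-block/small-block recursion, the treatment of the first layer $V_1$, and the final cleanup via $\ColorBidding$ all go through since their invariants are stated as lower bounds on excess colors, not equalities.

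The main obstacle is verifying that nowhere in the CLP proof is \emph{exact} equality of palette sizes used in a symmetry or counting argument, as opposed to a lower bound on excess colors. Inspecting the CLP analysis, uniformity of $|\Pal(v)| = \Delta+1$ enters only through the excess-color lower bound, which is quantitatively robust to the $\Delta^{3/5}$ slack as shown above. Once this is established, the simulation of each round in $O(1)$ congested-clique rounds from \Cref{thm:colorsqrtn} applies unchanged, giving the claimed $O(\log^* \Delta)$ round bound.
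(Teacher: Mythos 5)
Your approach is genuinely different from the paper's, and I think it glosses over the reason the paper takes the route it does. You keep the original $\epsilon$-friend/$\epsilon$-dense definitions and the resulting decomposition, and argue that the $\Delta^{3/5}$ palette deficit is absorbed by the $\Omega(\epsilon_i^2 \Delta) \geq \Omega(\Delta^{4/5})$ excess-color lower bounds. The paper instead \emph{changes} the decomposition itself: it introduces $(\epsilon,q)$-friends and $(\epsilon,q)$-dense vertices with $q=\Delta^{3/5}$, where two vertices are $(\epsilon,q)$-friends if $|N(u)\cap N(v)| \geq (1-\epsilon)(\Delta - q)$. This weakens the dense threshold precisely by the palette slack, and then Observation~\ref{obs:nonep} shows that $(\epsilon,q)$-dense implies $2\epsilon$-dense, so the clique structure (size, external degree, anti-degree, weak diameter) still holds with constants doubled. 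With this shift, the CLP lemmas (their Lemmas 1, 2, 3, 5, 11, 12, 14) are re-verified with adjusted constants, and the $\OneShotColoring$ probability is lowered from $p<1/4$ to $p<1/8$ to compensate.

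The concern with your argument is the dense-coloring accounting. You assert that a vertex $v\in C$ ``still has $\Theta(\Delta)$ available colors after subtracting colors used inside $C$ and the $\Delta^{3/5}$ palette deficit,'' but an $\epsilon$-almost clique $C$ can have $(1+O(\epsilon))\Delta$ vertices, so the colors consumed inside $C$ can exceed $v$'s entire palette; there is no $\Theta(\Delta)$ slack to speak of. The CLP dense step actually works by comparing the \emph{deficit} $|C| - r_v$ against anti-degree and external-degree bounds, and this is exactly where the uniform palette size $r_v = \Delta+1$ is load-bearing: the access-color lower bound $Z_{ex}$ and the ratio $\delta_v = D_v/Z_{ex}$ are computed against $\Delta+1$, and a uniform $-\Delta^{3/5}$ shift can push boundary vertices (those just below the density threshold, with clique size close to $\Delta$) into a regime where the guarantee fails. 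Building the slack $q$ into the density threshold itself, as the paper does, is what guarantees that every vertex classified as dense by the algorithm really does have the structural margin the dense-coloring invariants require, even with the smaller palette. So your high-level intuition (the slack is lower-order) is right for the sparse-coloring and $\ColorBidding$ bounds, but it does not by itself carry the dense step; without redefining density against $\Delta - q$ rather than $\Delta$, you would need a separate argument for boundary-dense vertices that your proposal does not supply.
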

The key modification for handling non-equal palette sizes  is in 
definition of $\epsilon$-friend (which affects the entire decomposition of the graph). Throughout, let $q=\Delta^{3/5}$ and say\footnote{The value of $q$ is chosen to be a bit above the standard deviation of $\sqrt{\Delta\log n}$ that will occur in our algorithm.} that $u,v$ are $(\epsilon,q)$-friends if $|N(u)\cap N(v)|\geq (1-\epsilon)\cdot (\Delta-q)$. Clearly, if $u,v$ are $\epsilon$-friends, they are also $(\epsilon,q)$-friends. A vertex $v$ is an $(\epsilon,q)$-\emph{dense} if it has at least $(1-\epsilon)\cdot (\Delta-q)$ neighbors which are $(\epsilon,q)$-friends. An $(\epsilon,q)$-almost clique is a connected component of the subgraph induced by $(\epsilon,q)$-dense vertices and their $(\epsilon,q)$ friends edges. 
We next observe that for the $\epsilon$-values used in the CLP algorithm, the converse is also true up to some constant. 
\begin{observation}\label{obs:nonep}
For any $\epsilon \in [\Delta^{-10},K^{-1}]$, where $K$ is a large constant, and for $q=\Delta^{3/5}$, it holds that if $u,v$ are $(\epsilon,q)$ friends, they are $(2\epsilon)$-friends. Also, if $v$ is an $(\epsilon,q)$-dense, then it is $2\epsilon$-dense. 
\end{observation}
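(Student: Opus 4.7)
The plan is to reduce both parts to a single arithmetic inequality, namely
\[
(1-\epsilon)(\Delta - q) \;\geq\; (1-2\epsilon)\,\Delta,
\]
and then verify it from the stated bounds on $\epsilon$ and $q$. This inequality is equivalent, after expanding, to $\epsilon\Delta \geq (1-\epsilon)\,q$, i.e. $\epsilon \geq (1-\epsilon)\,\Delta^{-2/5}$ once we substitute $q=\Delta^{3/5}$. Since the CLP sparsity parameters satisfy $\epsilon \geq \epsilon_1 = \Delta^{-1/10}$, we get $\epsilon\,\Delta^{2/5} \geq \Delta^{3/10}$, which is at least $1 \geq 1-\epsilon$ for $\Delta$ sufficiently large. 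So the inequality holds throughout the relevant range.

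For the first claim, I would argue directly: if $u,v$ are $(\epsilon,q)$-friends then $|N(u)\cap N(v)|\geq (1-\epsilon)(\Delta - q)$, and by the inequality above this quantity is at least $(1-2\epsilon)\Delta$, so $u,v$ are $(2\epsilon)$-friends by definition.

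For the second claim, I would chain the first claim with the same inequality. If $v$ is $(\epsilon,q)$-dense, then $v$ has at least $(1-\epsilon)(\Delta - q)$ neighbors that are $(\epsilon,q)$-friends of $v$. By the first claim, each such neighbor is in fact a $(2\epsilon)$-friend of $v$. Hence the number of $(2\epsilon)$-friend neighbors of $v$ is at least $(1-\epsilon)(\Delta - q) \geq (1-2\epsilon)\Delta$, matching the definition of $2\epsilon$-dense.

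There is no serious obstacle: the whole observation is a one-line calculation, and the choice $q=\Delta^{3/5}$ is deliberately designed to leave substantial slack against the lower threshold $\epsilon \geq \Delta^{-1/10}$ (the exponents $3/5$ and $1/10$ combine to give the comfortable factor $\Delta^{3/10}$ above). The only thing worth being careful about is making sure the range of $\epsilon$ used for defining the hierarchy is fully contained in the range where the inequality holds, so that the decomposition properties (almost-cliques being well defined, weak diameter $2$, etc.) carry over from the standard CLP setting to the $(\epsilon,q)$-variant with only a constant blow-up in $\epsilon$.
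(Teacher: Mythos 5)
Your proof is correct and takes essentially the same route as the paper: both reduce the two claims to the single inequality $(1-\epsilon)q \leq \epsilon\Delta$ and verify it from $\epsilon \geq \Delta^{-1/10}$ together with $q = \Delta^{3/5}$, then chain the first claim into the second for the density statement. (You also correctly read the lower end of the $\epsilon$-range as $\Delta^{-1/10}$ — matching the CLP parameter $\epsilon_1$ — rather than the $\Delta^{-10}$ that appears in the statement, which is evidently a typo.)
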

\begin{proof}
For any $\epsilon_1\leq \epsilon_2$, it holds that $(1-\epsilon_1)/\epsilon_1 \geq (1-\epsilon_2)/\epsilon_2$. 
Hence, for any $\epsilon \in [\Delta^{-10},K^{-1}]$, we have:
$(1-\epsilon)/\epsilon \leq \Delta^{1/10}\cdot (1-\Delta^{-1/10})\leq \Delta^{2/5},$
yielding that $(1-\epsilon)q \leq \epsilon \cdot \Delta$ for any $\epsilon \in [\Delta^{-10},K^{-1}]$. 
Since $u,v$ are $(\epsilon,q)$ friends, we have 
$|N(u)\cap N(v)|\geq (1-\epsilon)\cdot \Delta-(1-\epsilon)\cdot q \geq (1-2\epsilon) \cdot \Delta.$
If $v$ is an $(\epsilon,q)$-dense, then it has at least $(1-\epsilon)(\Delta-q)\geq (1-2\epsilon)\Delta$ neighbors which are $2\epsilon$-friends. 
\end{proof}
In \Cref{sec:modclp}, we provide a detailed proof for \Cref{lem:modclp}.
%In \Cref{sec:modclp}, we provide a more detailed proof.
\def\APPENDUNBAL{ 
\paragraph{Hierarchy, Blocks and Starta.}
The entire hierarchy of levels $V_1,\ldots, V_{\ell}$ is based on using the definition of $(\epsilon,q)$-dense vertices (rater than $\epsilon$-dense vertices).
Let $\bar{d}_{S,V'}(v)=|(N(v) \cap V')\setminus S|$ be the external degree of $v$ with respect to $S,V'$. Let $a_S(v)=|S \setminus (N(v)\cup \{v\})|$ be the anti-degree of $v$ with respect to $S$. Let $V_{\epsilon,q}^d, V_{\epsilon,q}^s$ be the vertices which are $(\epsilon,q)$-dense (resp., sparse).
By \Cref{obs:nonep}, for every $\epsilon \in [\Delta^{-10},1/K]$, it also holds:
\begin{observation}
For any $(\epsilon,q)$-almost clique $C$, there exists an $\epsilon$-almost clique $C_{\epsilon}$ and an $(2\epsilon)$-almost clique $C_{2\epsilon}$ such that $C_{\epsilon}\subseteq C \subseteq C_{2\epsilon}$. 
\end{observation}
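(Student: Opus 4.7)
The plan is to prove both containments separately, leveraging a monotonicity chain on the three definitions. The key preliminary observation is that the three graph structures defining the three types of almost-cliques form a chain of subgraphs: if $H_\epsilon$, $H_{(\epsilon,q)}$, and $H_{2\epsilon}$ denote the subgraphs on $\epsilon$-dense, $(\epsilon,q)$-dense, and $2\epsilon$-dense vertices respectively (each induced by friend edges of the matching type), then $H_\epsilon \subseteq H_{(\epsilon,q)} \subseteq H_{2\epsilon}$. The right inclusion is exactly \Cref{obs:nonep}. The left inclusion is trivial: an $\epsilon$-friend pair satisfies $|N(u)\cap N(v)|\geq (1-\epsilon)\Delta \geq (1-\epsilon)(\Delta-q)$ and so is an $(\epsilon,q)$-friend; an $\epsilon$-dense vertex trivially has at least $(1-\epsilon)\Delta \geq (1-\epsilon)(\Delta-q)$ friends (which are also $(\epsilon,q)$-friends), and is hence $(\epsilon,q)$-dense.

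For the upper containment $C\subseteq C_{2\epsilon}$, I would observe that every vertex of $C$ is $(\epsilon,q)$-dense, hence $2\epsilon$-dense by \Cref{obs:nonep}, and every edge witnessing the connectivity of $C$ in $H_{(\epsilon,q)}$ is an $(\epsilon,q)$-friend edge, hence a $2\epsilon$-friend edge. Thus $C$ is connected in $H_{2\epsilon}$, and therefore lies in a single connected component of $H_{2\epsilon}$, which by definition is a $2\epsilon$-almost clique $C_{2\epsilon}$.

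For the lower containment $C_\epsilon \subseteq C$, I would use that $H_\epsilon \subseteq H_{(\epsilon,q)}$, so every connected component of $H_\epsilon$ sits inside a single connected component of $H_{(\epsilon,q)}$. Thus for any $\epsilon$-almost clique that meets $C$, all of its vertices belong to the same $(\epsilon,q)$-component as that intersection, namely $C$ itself. Define $C_\epsilon$ as any such $\epsilon$-almost clique intersecting $C$; if $C$ contains no $\epsilon$-dense vertex then take $C_\epsilon=\emptyset$, for which the containment is vacuous.

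The main subtlety will be the non-emptiness of $C_\epsilon$, i.e., the existence of an $\epsilon$-dense vertex inside $C$. I expect this to either be irrelevant to later uses (since the algorithm really only needs the sandwiching to map blocks and cliques across parameter settings, and an empty $C_\epsilon$ still satisfies the stated inclusion) or to follow from the concrete ranges of $\epsilon$ used in CLP together with the fact that every $(\epsilon,q)$-almost clique contains $\Omega(\Delta)$ vertices whose neighborhoods overlap heavily, forcing at least one to be $\epsilon$-dense. I would therefore state the sandwich as above and, if needed, add a short remark handling the degenerate case separately.
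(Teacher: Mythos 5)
Your proof is correct and matches what the paper leaves implicit: the paper states this observation with no separate proof, just the prefatory ``By \Cref{obs:nonep}, \dots it also holds,'' i.e., exactly the monotone chain $H_\epsilon \subseteq H_{(\epsilon,q)} \subseteq H_{2\epsilon}$ on dense vertices and friend edges that you spell out, with $C_{2\epsilon}$ being the $H_{2\epsilon}$-component containing the (connected) set $C$, and $C_\epsilon$ any $H_\epsilon$-component meeting $C$. The non-emptiness subtlety you flag is a genuine gap in the statement as written --- an $(\epsilon,q)$-dense vertex need not be $\epsilon$-dense, so an $(\epsilon,q)$-almost clique can in principle contain no $\epsilon$-dense vertex --- but the paper never uses the lower sandwich $C_\epsilon \subseteq C$, only $C \subseteq C_{2\epsilon}$ (for the size, external-degree, anti-degree, and weak-diameter bounds in \Cref{lem:epsqclique}), so treating $C_\epsilon$ as possibly empty is harmless for the downstream argument.
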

As a result, Lemma 1 of \cite{CHP18} that describes the basic properties of cliques, holds up to small changes in the constants.
\begin{lemma}\label{lem:epsqclique}[Adopted from Lemma 1 \cite{CHP18}]
The following holds for an $(\epsilon,q)$-almost clique $C$, $\epsilon \leq 1/10$: (i) the external degree of $v \in C$ w.r.t. to $C$ is $\bar{d}_{C,V^d_{\epsilon,q}}(v)\leq 2\epsilon\Delta$, (ii) $a_C(v)\leq 6\epsilon \Delta$, (iii) $|C|\leq (1+6\epsilon)\cdot \Delta$, and (iv) $\dist(u,v,G)\leq 2$ for each $(u,v) \in C$, i.e., $C$ has \emph{weak} diameter $2$. 
\end{lemma}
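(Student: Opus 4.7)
The plan is to derive all four bounds by combining two ingredients: the sandwich inclusion $C_\epsilon \subseteq C \subseteq C_{2\epsilon}$ from the preceding Observation, and Lemma~1 of \cite{CHP18} applied to the $2\epsilon$-almost clique $C_{2\epsilon}$ containing $C$. The factor-$2$ slack between $\epsilon$ in CLP's statement and $2\epsilon$ in ours is exactly what propagates into the constants $2,6,6,2$ appearing in parts (i)--(iv).

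For parts (ii), (iii), and (iv) I would handle all three uniformly by the inclusion $C \subseteq C_{2\epsilon}$. Namely, for any $v \in C$ we have $C \setminus (N(v)\cup\{v\}) \subseteq C_{2\epsilon} \setminus (N(v)\cup\{v\})$, so $a_C(v) \leq a_{C_{2\epsilon}}(v)$; trivially $|C| \leq |C_{2\epsilon}|$; and any pair $u,v \in C$ also lies in $C_{2\epsilon}$, so $\dist(u,v,G)$ is bounded by the weak diameter of $C_{2\epsilon}$. Plugging $2\epsilon \leq 1/5$ into CLP's Lemma~1 bounds for $C_{2\epsilon}$ gives $a_{C_{2\epsilon}}(v)\leq 3(2\epsilon)\Delta = 6\epsilon\Delta$, $|C_{2\epsilon}| \leq (1+3(2\epsilon))\Delta = (1+6\epsilon)\Delta$, and weak diameter at most $2$, matching (ii)--(iv) exactly.

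Part (i) is the main obstacle, because $V^d_{\epsilon,q}$ is a strictly smaller set than $V^d_{2\epsilon}$ while $C$ is strictly smaller than $C_{2\epsilon}$, so the sandwich does not directly give an upper bound on $|N(v)\cap V^d_{\epsilon,q}\setminus C|$. My plan is to argue directly from the definition. First, since $v\in C$ is $(\epsilon,q)$-dense, it has at least $(1-\epsilon)(\Delta-q)$ neighbors that are $(\epsilon,q)$-friends, so the number of non-$(\epsilon,q)$-friend neighbors of $v$ is at most
\[
\Delta-(1-\epsilon)(\Delta-q) \;=\; \epsilon\Delta+(1-\epsilon)q \;\leq\; 2\epsilon\Delta,
\]
where the last step reuses the inequality $(1-\epsilon)q \leq \epsilon\Delta$ already established in the proof of Observation~\ref{obs:nonep} for the prescribed range of $\epsilon$ and $q=\Delta^{3/5}$. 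Second, I would argue that every $u \in N(v) \cap V^d_{\epsilon,q}\setminus C$ must be a non-$(\epsilon,q)$-friend of $v$: otherwise the edge $(u,v)$ would be an $(\epsilon,q)$-friend edge connecting two $(\epsilon,q)$-dense vertices, so $u$ would lie in the same connected component as $v$ in the graph defining the almost-cliques, i.e., $u\in C$, contradicting $u\notin C$. Combining these two facts, $\bar{d}_{C,V^d_{\epsilon,q}}(v)$ is bounded by the count of non-friends, namely $2\epsilon\Delta$.

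Beyond the small definitional care needed for (i), no new combinatorial ideas are required: all other parts are immediate consequences of the sandwich and CLP's original clique lemma, with the range $\epsilon\in[\Delta^{-1/10},K^{-1}]$ (for which Observation~\ref{obs:nonep} applies) comfortably covering the hypothesis $\epsilon\leq 1/10$.
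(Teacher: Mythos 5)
Your proof is correct and follows the paper's high-level strategy: sandwich $C$ between an $\epsilon$-almost clique and a $2\epsilon$-almost clique (via Observation~\ref{obs:nonep}) and then invoke CLP's Lemma~1. The paper itself dispatches the whole lemma with essentially one sentence (``holds up to small changes in the constants''), whereas you correctly identify that parts (ii)--(iv) are genuinely monotone under $C \subseteq C_{2\epsilon}$ but part (i) is not, since $\bar{d}_{C,V^d_{\epsilon,q}}(v)=|(N(v)\cap V^d_{\epsilon,q})\setminus C|$ has $V^d_{\epsilon,q}\subseteq V^d_{2\epsilon}$ and $C\subseteq C_{2\epsilon}$ pulling in opposite directions, so the sandwich gives no direct comparison. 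Your direct count for (i)---observing that any $u\in N(v)\cap V^d_{\epsilon,q}\setminus C$ must be a non-$(\epsilon,q)$-friend of $v$ (else it would share $C$'s component), and that $v$ being $(\epsilon,q)$-dense has at most $\Delta-(1-\epsilon)(\Delta-q)\le\epsilon\Delta+(1-\epsilon)q\le 2\epsilon\Delta$ such non-friends via the already-established $(1-\epsilon)q\le\epsilon\Delta$---is the right fix and fills a gap the paper leaves implicit. The one loose phrase is the final claim that the range $\epsilon\in[\Delta^{-1/10},K^{-1}]$ ``comfortably covers'' $\epsilon\le 1/10$; it does not cover it literally, but it does cover the sparsity parameters $\epsilon_1,\dots,\epsilon_\ell$ actually used in the CLP hierarchy, which is what matters here.
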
 
The notion of strata, blocks and super-blocks are all trivially extended using the definition of $(\epsilon,q)$-dense vertices. 
Note that since each vertex has a palette of at least $\deg+1$ colors, the excess of colors is non-decreasing throughout the coloring algorithm.

\paragraph{Initial Coloring Step.} In this step, Alg. $\OneShotColoring$ is applied for $O(1)$ many iterations. We change the coloring probability of Alg. $\OneShotColoring$ to be $p \in (0,1/8)$ rather than $p \in (0,1/4)$. It is then required to prove Lemma 5 of \cite{CHP18} and show that after applying $O(1)$ rounds of  Alg. $\OneShotColoring$, for every $(\epsilon,q)$-sparse vertex with $\deg(v)\geq 0.9\Delta$ it holds that: 
(i) it has at least $\Delta/2$ uncolored neighbors with high probability, and (ii) it has $\Omega(\epsilon^2 \cdot \Delta)$ excess colors with probability $1-exp^{-\Omega(\epsilon^2 \Delta)}$.  The proof is exactly the same up to small changes in the constants:
In Lemma 11 of \cite{CHP18}, we then get that $Pr[E_c]\leq p|S|/(\Delta-q)\leq 4p\leq 1/2$ as $p \leq 1/8$
In Lemma 12 of \cite{CHP18}, the condition on $Q$ is such that each $u_i \in S$ satisfies that $|\Phi(u_i)\cap Q|\geq (1-\epsilon/2)\cdot (\Delta-q)$. This only change a constant in the proof where $Pr[X_i=1]\geq p/4$. In Lemma 14, since $v$ is an $(\epsilon,q)$-sparse vertex, it is also an $\epsilon$-sparse vertex and thus the lemma again holds up to small changes in the constants. For each $(\epsilon,q)$ non-friend $u_i \in S$, we have $|S \setminus (S' \cup N(u_i))|\geq (\Delta-q)(1-\epsilon/5-\epsilon/100(1-\epsilon))\geq \epsilon \cdot \Delta/4$. All this effects again, only constants in the probabilities.

\paragraph{Main Coloring Step.}
Consider the coloring of small blocks which is based on Lemma 2 and Lemma 3 of \cite{CHP18}. Both these lemmas hold up to small changes in the constants. In particular, in Lemma 3, we are now given an $(\epsilon_i,q)$ almost clique $C$ and $C_1,\ldots, C_l$ are the $(\epsilon_{i-1},q)$-almost cliques contained in $C$. Since an $(\epsilon,q)$-almost clique is contained in an $2\epsilon$-almost clique, by using \Cref{lem:epsqclique}, we have that either $l=1$ or that $\sum_{j=1}^l |C_j|\leq 2(6\epsilon_i+2\epsilon_{i-1})\leq 14 \epsilon_i \cdot \Delta$. In the proof of Lemma 2, to bound $A_2$, using the bound of the external degree in \Cref{lem:epsqclique}, we get that $A_2\leq 4 \epsilon_{\ell}\Delta$.

Coloring the large blocks and the sparse vertices follows the exact same analysis an in \cite{CHP18}. 
The reason is that the coloring of the large blocks uses only the bounds on the external degrees and clique size (as shown in \Cref{lem:epsqclique}) and those properties hold up to insignificant changes in the constants (the algorithms of the sparse and dense components use the $O$-notation on these bounds, in any case). \Cref{lem:modclp} follows by combining the above with the implementation of the CLP algorithm in the congested clique model of \Cref{thm:colorsqrtn}.
}%\APPENDUNBAL{ 

\section{$(\Delta+1)$-Coloring for $\Delta=\Omega(\sqrt{n})$}
In this section, we describe a new recursive degree-reduction technique. As a warm-up, we start with  $\Delta=O((n/\log n)^{3/4})$. We make use of the following fact.
\begin{theorem} (\textbf{Simple Corollary of Chernoff Bound})\label{thm:cher} Suppose $X_1$, $X_2$, \dots, $X_\ell \in [0,1]$ are independent random variables, and let $X=\sum_{i=1}^{\ell} X_i$ and $\mu = \mathbb{E}[X]$. If $\mu \geq 5 \log n$, then w.h.p. $X \in \mu \pm \sqrt{5\mu\log n}$, and if $\mu < 5 \log n$, then w.h.p. $X \leq \mu +5\log n$.
\end{theorem}

%\vspace{-5pt}
\subsection{An $O(\log^* \Delta)$-round algorithm for $\Delta=O((n/\log n)^{3/4})$}\label{sec:deltathreefour}
%\vspace{-5pt}
The algorithm partitions $G$ into $O(\Delta^{1/3})$ subgraphs as follows. 
Let $\ell=\lceil \Delta^{1/3} \rceil$. We define $\ell+1$ subsets of vertices $V_1,\ldots, V_\ell, V^*$. 
A vertex joins each $V_i$ with probability 
$$p_i=1/\ell-2\sqrt{5\log n}/(\Delta^{1/3}\cdot \ell),$$
for every $i \in \{1,\ldots, \ell\}$, and it joins $V^*$ with the remaining probability of $p^*=2\sqrt{5\log n}/\Delta^{1/3}$. 

Let $G_i=G[V_i]$ be the induced subgraph for every $i \in \{1,\ldots, \ell,*\}$. 
Using Chernoff bound of \Cref{thm:cher}, the maximum degree $\Delta'$ in each subgraph $G_i$, $i \in \{1,\ldots,\ell\}$ is w.h.p.: $$\Delta'\leq \Delta/\ell-2\Delta^{2/3}\sqrt{5\log n}/\ell+\sqrt{5\Delta \log n/\ell}\leq  \Delta/\ell-1$$

In the first phase, all subgraphs $G_1, \ldots, G_{\ell}$ are colored independently and simultaneously.
This is done by allocating a distinct set of $(\Delta'+1)$ colors for each of these subgraphs. Overall, we allocate $\ell\cdot (\Delta'+1)\leq \Delta$ 
colors.  Since $\Delta'=O(\Delta^{2/3})=O(\sqrt{n})$, we can apply the $(\Delta'+1)$-coloring algorithm of \Cref{cor:colorsqrtnparallel} on all the graphs $G_1,\ldots, G_{\ell}$ simultaneously. Hence, all the subgraphs $G_1, \ldots, G_{\ell}$ are colored in $O(\log^* \Delta)$ rounds. 

\paragraph{Coloring the remaining left-over subgraph $G^*$.}
The second phase of the algorithm completes the coloring for the graph $G^*$. This coloring should agree with the colors computed for $G\setminus G^*$ computed in the previous phase. Hence, we need to color $G^*$ using a \emph{list} coloring algorithm. 
We first show that w.h.p. the maximum degree $\Delta^*$ in $G^*$ is $O(\sqrt{n})$. 
The probability of vertex to be in $G^*$ is $p^*=2\sqrt{5\log n}/\Delta^{1/3}$. By Chernoff bound of \ref{thm:cher}, w.h.p., $\Delta^*\leq p^* \cdot \Delta+\sqrt{5p^* \cdot \Delta \cdot \log n}$. Since $\Delta\leq (n/\log n)^{3/4}$, $\Delta^*=O(\sqrt{n})$.  To be able to apply the modified CLP of \Cref{lem:modclp}, we show:
\begin{lemma}\label{lem:size}
Every $v \in G^*$ has at least $\Delta^*-(\Delta^*)^{3/5}$ available colors in its palette after coloring all its neighbors in $G\setminus G^*$. 
\end{lemma}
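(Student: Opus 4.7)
The plan is to upper-bound $\deg(v, G\setminus G^*)$ using Chernoff and to match it with a symmetric Chernoff upper bound on $\Delta^*$, so that the discrepancy falls within $(\Delta^*)^{3/5}$. Since each neighbor of $v$ in $G\setminus G^*$ consumes at most one color from $v$'s palette of $\Delta+1$ colors, the residual palette has size at least
$$ (\Delta+1)-\deg(v,G\setminus G^*) \;=\; 1 + (\Delta-\deg(v,G)) + \deg(v, G^*) \;\geq\; \deg(v,G^*)+1. $$
In particular the list-coloring requirement $r_v \geq \deg(v,G^*)+1$ demanded by \Cref{lem:modclp} comes for free; the substantive task is to match the lower bound $\Delta^* - (\Delta^*)^{3/5}$.

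Next I would apply \Cref{thm:cher} to $\deg(v, G^*)=\sum_{u\in N(v)}\mathbf{1}[u\in V^*]$, a sum of independent $\mathrm{Bernoulli}(p^*)$ variables with mean $p^*\deg(v,G)$. In the worst case $\deg(v,G)=\Delta$ the mean is $p^*\Delta=2\sqrt{5\log n}\,\Delta^{2/3}\gg 5\log n$, so w.h.p.\
$$ \deg(v,G^*) \;\geq\; p^*\deg(v,G) - \sqrt{5 p^*\deg(v,G)\log n}. $$
Substituting into the previous inequality and noting that the right-hand side is non-increasing in $\deg(v,G)$ on $[0,\Delta]$, the worst case $\deg(v,G)=\Delta$ yields $\mathrm{available}(v) \geq 1 + p^*\Delta - \sqrt{5p^*\Delta\log n}$. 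Symmetrically, Chernoff on the upper side together with a union bound over $v\in V^*$ gives $\Delta^*\leq p^*\Delta + \sqrt{5 p^*\Delta\log n}$ w.h.p.

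Subtracting yields $\Delta^* - \mathrm{available}(v) \leq 2\sqrt{5p^*\Delta\log n}$. Plugging in $p^* = 2\sqrt{5\log n}/\Delta^{1/3}$, a direct calculation gives $2\sqrt{5p^*\Delta\log n} = O(\Delta^{1/3}(\log n)^{3/4})$, whereas $\Delta^* = \Theta(p^*\Delta) = \Theta(\Delta^{2/3}\sqrt{\log n})$, so that $(\Delta^*)^{3/5}=\Omega(\Delta^{2/5}(\log n)^{3/10})$; the latter dominates the former whenever $\Delta$ exceeds some polylogarithmic threshold. Smaller $\Delta$ is vacuous for the lemma, since then $\Delta=O(\sqrt n)$ and the direct algorithm of \Cref{cor:colorsqrtnparallel} colors $G$ without ever creating $G^*$. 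The main subtlety is that the exponent $3/5$ in the statement is precisely calibrated to the choice $p^*\propto \Delta^{-1/3}$, so that the $\sqrt{\Delta\log n}$-scale Chernoff fluctuations are absorbed by $(\Delta^*)^{3/5}$; matching these two scales is the only nontrivial step and requires no new idea beyond careful arithmetic and union-bounding over the $n$ vertices.
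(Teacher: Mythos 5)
Your proposal is correct and essentially mirrors the paper's proof: both start from $\mathrm{available}(v) \geq 1 + (\Delta-\deg(v,G)) + \deg(v,G^*)$ and use \Cref{thm:cher} to bound $\deg(v,G^*)$ and $\Delta^*$, with the exponent $3/5$ chosen to dominate the $\sqrt{\mu\log n}$-scale Chernoff fluctuations. The paper phrases this as an explicit two-case split on whether $\deg(v,G)\leq \Delta-(\Delta^*-\sqrt{5\Delta^*\log n})$ (handling low-degree vertices by the deterministic palette slack alone), whereas you fold both cases into a single monotonicity argument; the only imprecision in your version is that the lower-tail Chernoff bound in \Cref{thm:cher} requires $\mu \geq 5\log n$, which can fail when $\deg(v,G)$ is small, but there $\Delta-\deg(v,G)\gg\Delta^*$ makes the claim trivial, so the gap is cosmetic.
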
 
\begin{proof}
First, consider the case where $\deg(v,G)\leq \Delta-(\Delta^*-\sqrt{5\Delta^* \cdot \log n})$. In such case, even after coloring all neighbors of $v$, it still has an access of $\Delta^*-\sqrt{5\Delta^* \cdot \log n}\geq \Delta^*-(\Delta^*)^{3/5}$ colors in its palette after coloring $G\setminus G^*$ in the first phase. 
Now, consider a vertex $v$ with $\deg(v,G)\geq \Delta-(\Delta^*-\sqrt{5\Delta^* \cdot \log n})$. 
Using Chernoff bound, w.h.p., 
$\deg(v,G^*)>(\Delta-(\Delta^*-\sqrt{\Delta^* \cdot 5\log n}))\cdot p^*-\sqrt{5\log n \Delta p^*}\geq \Delta^*-(\Delta^*)^{3/5}.$
\end{proof}
Also note that a vertex $v \in G^*$ has at least $\deg(v,G^*)+1$ available colors, since all its neighbors in $G^*$ are uncolored at the beginning of the second phase and initially it was given $(\Delta+1)$ colors.
Eventhough, $v \in G^*$ might have $\Omega(\Delta)$ neighbors not in $G^*$, to complete the coloring of $G^*$, by \Cref{lem:size}, after the first phase, each $v$ can find in its palette $r \in [\Delta^*-(\Delta^*)^{3/5},\Delta^*+1]$ available colors and this sub-palette is sufficient for its coloring in $G^*$. Since $\Delta^*=O(\sqrt{n})$, to color $G^*$ (using these small palettes), one can apply the $O(\log^* \Delta)$ round list-coloring algorithm of \Cref{lem:modclp}.
%
%
 %that agree with the coloring for $G\setminus G^*$, it is sufficient for each vertex in $G^*$ to keep only a palette with $\Delta^*+1$ colors among all the available (possibly $\Omega(\Delta)$) colors that it currently has. 
%That is, each vertex $v \in G^*$, should choose only $\Delta^*+1$ colors from its palette, eventhough its original palette (after the coloring of the first phase) might have been much larger. Once we have that, we can apply a $(\Delta^*+1)$-list coloring algorithm of \Cref{thm:colorsqrtn}.
\vspace{-5pt}
\subsection{An $O(\log(1/\epsilon)\cdot\log^*\Delta)$-round algorithm for $\Delta=O((n/\log n)^{1-\epsilon})$}\label{sec:deltaeps}
Let $N=n/(5\log n)$.  First assume that $\Delta\leq N/2$ and partitions the range of relevant degrees $[\sqrt{n},N/2]$ into $\ell=\Theta(\log\log \Delta)$ classes.
The $y^{th}$ range contains all degrees in $[N^{1-1/2^y},N^{1-1/(2^{y+1})}]$ for every $y \in \{1,\ldots, \ell\}$. Given a graph $G$ with maximum degree $\Delta =O(N^{1-1/(2^{y+1})})$, Algorithm $\RecursiveColor$ colors $G$ in $y \cdot O(\log^* \Delta)$ rounds, w.h.p.

\paragraph{Step (I): Partitioning (Defective-Coloring).}
For $i\in \{0,\ldots, y-1\}$, in every level $i$ of the recursion, we are given a graph $G'$, with maximum degree $\Delta_i=O(N^{1-1/2^{y-i+1}})$, and a palette $\Pal_i$ of $(\Delta_i+1)$ colors. 
For $i=0$, $\Delta_0=\Delta$ and the palette $\Pal_0=\{1,\ldots, \Delta+1\}$.

The algorithm partitions the vertices of $G'$ into $q_i+1$ subsets: $V'_1,\ldots, V'_{q_i}$ and a special left-over set $V^*$. The partitioning is based on the following parameters. 
Set $x=2^{y-i}$ and 
$$q_i=\lceil \Delta_i^{1/(2x-1)}\rceil \mbox{~~and~~} \delta_i=2\sqrt{5\log n}\cdot q_i^{3/2}/\sqrt{\Delta_i}.$$
%Since $x\geq 2$, it holds that $\delta_i \in (0,1)$. 

Each vertex $v \in V(G')$ joins $V'_j$ with probability $p_j=1/q_i-\delta_i/(q_i)^2$ for every  $j \in \{1,\ldots, q_i\}$, and it joins $V^*$ with probability $p^*=\delta_i/q_i$. Note that $p^* \in (0,1)$ as $x\geq 2$. For every $j \in \{1,\ldots, q_i\}$, let $G'_{j}=G'[V'_j]$ and let $G^*=G'[V^*]$. 

\paragraph{Step (II): Recursive coloring of $G'_1,\ldots, G'_{q_i}$.}
Denote by $\widetilde{\Delta}_j$ to be the maximum degree in $G'_j$ for every $j \in \{1,\ldots, q_i\}$ and by $\Delta^*$, the maximum degree in $G^*$.  The algorithm allocates a distinct subset of $(\widetilde{\Delta}_j+1)$ colors from $\Pal_i$ for every $j \in \{1,\ldots, q_i\}$. In the analysis, we show that w.h.p. $\Pal_i$ contains sufficiently many colors for that allocation. The subgraphs $G'_1,\ldots, G'_{q_i}$ are colored recursively and simultaneously, each using its own palette. 
It is easy to see that the maximum degree of each $G'_j$ is $O(N^{1-1/2^{y-i}})$ (which is indeed the desire degree for the subgraphs colored in level $i+1$ of the recursion).

\paragraph{Step (III): Coloring the left-over graph $G^*$.}
Since the algorithm already allocated at most $\Delta_i$ colors for coloring the $G'_j$ subgraphs, it might run out of colors to allocate for $G^*$. This last subgraph is colored using a list-coloring algorithm  only after all vertices of $G'_1,\ldots, G'_{q_i}$ are colored. Recall that $\Delta^*$ is the maximum degree of $G^*$. In the analysis, we show that w.h.p. $\Delta^*=O(\sqrt{n})$. For every $v \in G^*$, let $\Pal(v) \subseteq \Pal_i$ be the remaining set of available colors after coloring all the vertices in $V'_1,\ldots, V'_{q_i}$. 
Each vertex $v \in G^*$ computes a new palette $\Pal^*(v) \subseteq \Pal(v)$ such that: (i) $|\Pal^*(v)|\geq \deg(v, G^*)+1$, and (ii) $|\Pal^*(v)| \in [\Delta^*-\Delta^{3/5},\Delta^*+1]$. 
In the analysis section, we show that w.h.p. this is indeed possible for every $v \in G^*$. 
The algorithm then applies the modified CLP algorithm, and $G^*$ gets colored within $O(\log^* \Delta)$ rounds.

\setlength{\columnsep}{19pt}%
\begin{wrapfigure}{r}{7cm}
\centering
\vspace{-5pt}
\includegraphics[width=0.30\textwidth]{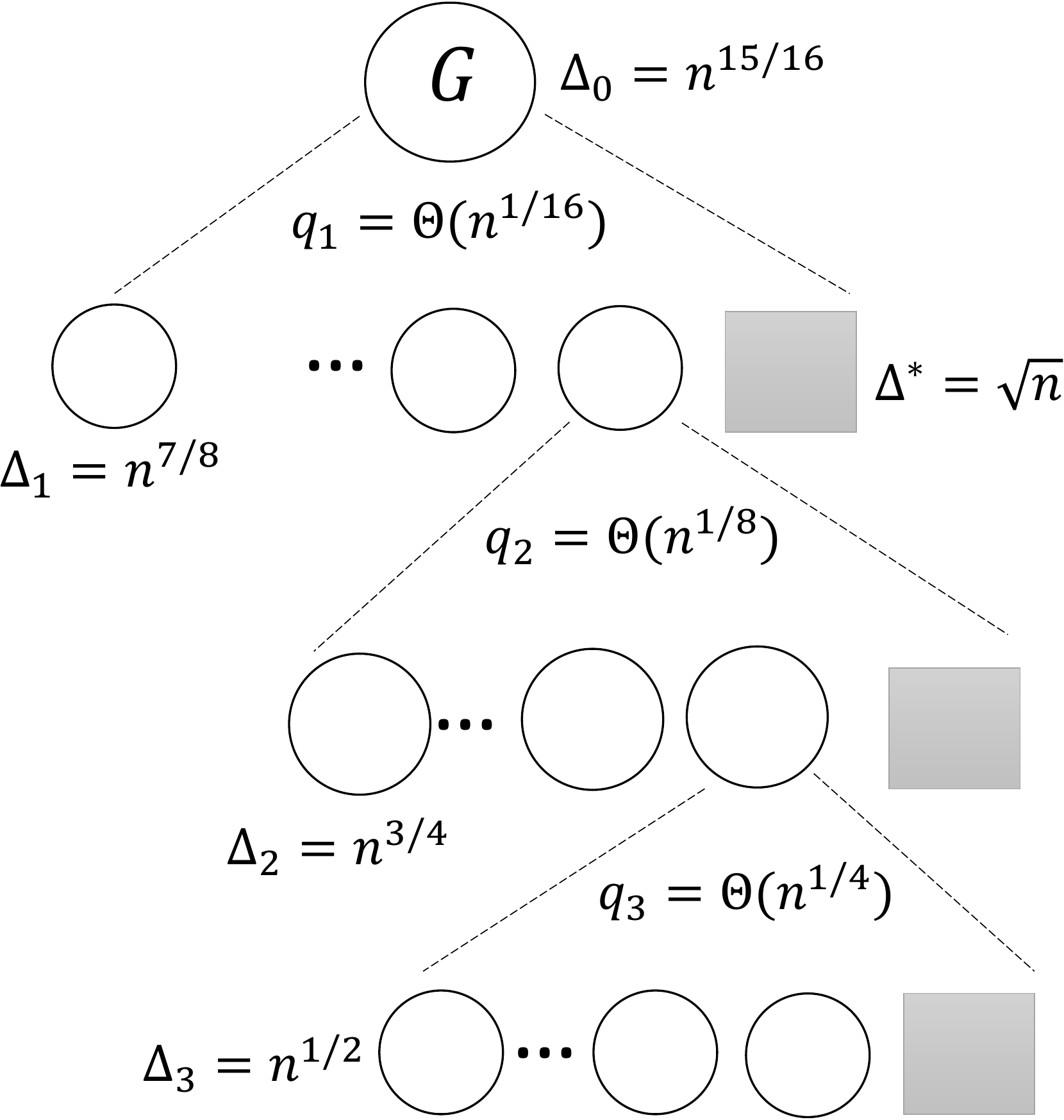}
\vspace{-5pt}
	\caption{
		\vspace{-15pt}}
	%\label{fig:recursive}
\end{wrapfigure} 
\paragraph{Example:}
Assume that input graph $G$ has maximum degree $\Delta_0=n^{15/16}$. The algorithm partitions $G$ into $k_0=n^{1/15}$ subgraphs in the following manner. For sake of clarity, we omit logarithmic factors in the explanation. With probability $1/(\Delta_0)^{1/2+o(1)}$, $v$ joins a left-over subgraph $G^*$, and with the remaining probability it picks a subgraph $[1,k_0]$ uniformly at random. It is easy to see that the maximum degree in each of these subgraphs is at most $\Delta_1=n^{7/8}+n^{7/16}$. A distinct set of $\Delta_1$ colors from $[1,\Delta_0+1]$ is allocated to each of the $k$ subgraphs. Each such subgraph is now partitioned into $k_1=n^{1/8}$ subgraphs plus a left-over subgraph. This continues until all subgraphs have their degrees sharply concentrated around $\sqrt{n}$. At that point, the modified CLP algorithm can be applied on all the subgraphs in the last level $\ell$. Once these subgraphs are colored, the left-over subgraphs in level $\ell-1$ are colored, this continues until the final left-over subgraph of the first level is colored.
We next provide a compact and high level description of the algorithm.
\begin{mdframed}[hidealllines=false,backgroundcolor=gray!30]
\center \textbf{Algorithm $\RecursiveColor(G',\Pal_i)$}
\begin{flushleft}
Input: Graph $G'$ with maximum degree $\Delta_i=O(N^{1-1/2^{y-i+1}})$. \\
A palette $\Pal_i$ of $(\Delta_i+1)$ colors (same for all nodes). 
\end{flushleft}
\vspace{-10pt}
\begin{itemize}
\item
Partitions $G'$ into $q_i+1$ vertex-disjoint subgraphs:
\begin{itemize}
\item
$q_i$ vertex-subgraphs $G'_1,\ldots, G'_{q_i}$ with maximum degree $\Delta_{i+1}=O(N^{1-1/2^{y-i}})$.
\item
Left-over subgraph $G^*$ with maximum degree $\Delta^*=O(\sqrt{n})$.
\end{itemize}
\item
Allocate a distinct palette $\Pal_j \subset \Pal_i$ of $(\Delta_{i+1}+1)$ colors for each $j \leq q_i$.
\item
Apply $\RecursiveColor(G_j,\Pal_j)$ for every $j\leq q_i$ simultaneously.
\item
Apply a $(\Delta_i+1)$-list coloring restricted to $\Pal_i$, to complete the coloring of $G[V^*]$.
\end{itemize}
\end{mdframed}
\paragraph{Analysis.}

\begin{lemma}
(i) For every $j \in \{1,\ldots, q_i\}$, w.h.p., $\widetilde{\Delta}_j=O(N^{1-1/2^{y-i}})$. 
(ii) One  can allocate $(\widetilde{\Delta}_j+1)$ distinct colors from $\Pal_i$ for each $G'_j$,  $j \in \{1,\ldots, q_i\}$.
\end{lemma}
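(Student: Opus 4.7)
The plan is to apply the Chernoff bound of Theorem~\ref{thm:cher} to each vertex $v$ and each class index $j$, and then take a union bound over the $n q_i \leq n^2$ events. Fix $v \in G'$ with $\deg(v,G')\leq \Delta_i$ and $j \in \{1,\ldots,q_i\}$, and let $X_u = \mathbbm{1}[u \in V'_j]$ for $u \in N(v,G')$. These indicators are independent because each vertex independently picks its class, and their sum upper bounds $|N(v)\cap V'_j|$ regardless of $v$'s own class. The expectation is at most $p_j \Delta_i = \Delta_i/q_i - \delta_i \Delta_i/q_i^2$. Since $\Delta_i/q_i = \Omega(\sqrt{n}) \gg 5\log n$ throughout the recursion, the first branch of Theorem~\ref{thm:cher} gives, w.h.p.,
\[
|N(v)\cap V'_j| \;\leq\; \Delta_i/q_i \;-\; \delta_i \Delta_i/q_i^2 \;+\; \sqrt{5 \Delta_i \log n / q_i}.
\]

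Next I would substitute the chosen $\delta_i = 2 q_i^{3/2}\sqrt{5\log n/\Delta_i}$, which makes $\delta_i \Delta_i/q_i^2 = 2\sqrt{5 \Delta_i \log n/q_i}$, \emph{exactly} twice the Chernoff deviation. After the union bound this leaves, w.h.p.,
\[
\widetilde{\Delta}_j \;\leq\; \Delta_i/q_i \;-\; \sqrt{5\Delta_i\log n/q_i} \qquad \text{for every } j.
\]
For part (i), setting $x=2^{y-i}$ and $\Delta_i=\Theta(N^{1-1/(2x)})$, the parameter choice yields $q_i = \Theta(\Delta_i^{1/(2x-1)}) = \Theta(N^{1/(2x)})$, so $\Delta_i/q_i = \Theta(N^{1-1/x}) = \Theta(N^{1-1/2^{y-i}})$, which is precisely the target bound $\Delta_{i+1}$.

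For part (ii), I would sum the upper bound on $\widetilde{\Delta}_j + 1$ over the $q_i$ classes:
\[
\sum_{j=1}^{q_i}(\widetilde{\Delta}_j + 1) \;\leq\; \Delta_i \;-\; \sqrt{5 q_i \Delta_i \log n} \;+\; q_i \;\leq\; \Delta_i + 1,
\]
where the last inequality uses that $q_i = O(N^{1/(2x)}) = O(N^{1/4})$ is dwarfed by $\sqrt{q_i \Delta_i \log n}$. Hence $q_i$ pairwise-disjoint sub-palettes of sizes $(\widetilde{\Delta}_j+1)$ all fit inside $\Pal_i$, which is exactly the budget needed to launch the next recursion level.

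The main obstacle is the tight arithmetic in part (ii): the correction $-\delta_i/q_i^2$ to $p_j$ has to be large enough to dominate the Chernoff deviation yet small enough that $p^* = \delta_i/q_i$ remains a valid probability and the left-over graph $G^*$ still has degree $O(\sqrt{n})$. The specific choice of $\delta_i$ with leading constant $2$ is what simultaneously cancels the $\sqrt{\cdot}$ term and preserves enough slack in the color budget; any smaller constant would cause $\sum_j(\widetilde{\Delta}_j+1)$ to exceed $\Delta_i+1$ and break the recursion.
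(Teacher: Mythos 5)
Your proof is correct and follows essentially the same route as the paper's: apply the Chernoff bound of Theorem~\ref{thm:cher} to bound $\widetilde{\Delta}_j$, observe that the chosen $\delta_i$ makes the subtracted term $\delta_i\Delta_i/q_i^2$ dominate the $\sqrt{5\Delta_i\log n/q_i}$ deviation, and then sum the bounds $\widetilde{\Delta}_j+1 \leq \Delta_i/q_i$ over all $q_i$ classes to stay within the $\Delta_i+1$ color budget. You carry a slightly sharper intermediate bound ($\Delta_i/q_i - \sqrt{5\Delta_i\log n/q_i}$ versus the paper's $\Delta_i/q_i-1$), but the argument is the same.
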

\begin{proof}
Using Chernoff bound of \Cref{thm:cher}, w.h.p., for every $j \in \{1,\ldots, q_i-1\}$, the maximum degree $\widetilde{\Delta}_j$ in $G'_j$ is at most
$\widetilde{\Delta}_j=O(\Delta_i/q_i)$. 
Since $\Delta_i=O(N^{1-2^{y-i+1}})$, claim (i) follows.
We now bound the sum of all colors allocated to these subgraphs:
$$\widetilde{\Delta}_j\leq \Delta_i/q_i-(\Delta_i\cdot \delta_i)/(q_i)^2+\sqrt{5\log n \cdot \Delta_i/q_i}\leq \Delta_i/q_i-1~.$$
where the last inequality follows by the value of $\delta_i$. 
We get that $\sum_{j=1}^{q_i}(\widetilde{\Delta}_j+1)\leq \Delta_i$ and since $\Pal_i$ contains $\Delta_i+1$ colors, claim (ii) follows.
\end{proof}
We next analyze the final step of the algorithm and begin by showing that, w.h.p., the maximum degree in the left-over graph $G^*$ is $O(\sqrt{n})$.  By Chernoff bound of \Cref{thm:cher}, w.h.p., the maximum degree $\Delta^*\leq \Delta_i \cdot  \delta_i/q_i+\sqrt{\log n\cdot \Delta_i \cdot  \delta_i/q_i}$.  Since $\Delta_i=O(N^{1-1/2^{y-i+1}})$, we get that $\Delta^*=O(\sqrt{n})$. 
We now claim:
\begin{lemma}
After coloring for all the vertices in $G'_1,\ldots, G'_{q_i}$, each vertex $v \in G^*$ has a palette $\Pal^*(v)$ of free colors such that (i) $|\Pal^*(v)|\geq \deg(u,G^*)+1$, and (ii) $|\Pal^*(v)| \in [\Delta^*-(\Delta^*)^{3/5}),\Delta^*+1]$. 
\end{lemma}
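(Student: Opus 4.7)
My plan is to mirror the two-case analysis used in \Cref{lem:size} of the warm-up, replacing the fixed parameters there with $\Delta_i, \Delta^*, p^*, q_i$ of the current level.

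Let $\Pal(v) \subseteq \Pal_i$ denote the set of colors from $v$'s original palette that remain unused after Step (II). Since each colored neighbor of $v$ removes at most one color from $v$'s palette, and the only colored neighbors of $v$ at the end of Step (II) are its neighbors in $G'\setminus G^*$, we have
\[
|\Pal(v)| \;\geq\; (\Delta_i+1) - \bigl(\deg(v,G') - \deg(v,G^*)\bigr) \;\geq\; \deg(v,G^*)+1,
\]
using $\deg(v,G')\leq \Delta_i$. This already handles the lower bound on $|\Pal^*(v)|$ demanded by claim (i) once we set $|\Pal^*(v)|$ close to $|\Pal(v)|$.

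For claim (ii) I would split on the value of $\deg(v,G')$. \textbf{Case (a):} if $\deg(v,G') \leq \Delta_i - \Delta^* + (\Delta^*)^{3/5}$, then the bound above directly gives $|\Pal(v)| \geq \Delta^* - (\Delta^*)^{3/5} + 1$, with no randomness needed. \textbf{Case (b):} if $\deg(v,G') > \Delta_i - \Delta^* + (\Delta^*)^{3/5}$, I would apply Chernoff (\Cref{thm:cher}) to the random variable $\deg(v,G^*)$, which is a sum of independent Bernoulli($p^*$) indicators over the neighbors of $v$ in $G'$. Its expectation satisfies $\mu \geq p^*(\Delta_i - \Delta^* + (\Delta^*)^{3/5})$, and since $p^*\Delta_i$ is essentially $\Delta^*$ up to the standard-deviation term bounded in the paragraph preceding this lemma, w.h.p.\ $\deg(v,G^*) \geq \mu - \sqrt{5\mu\log n} \geq \Delta^* - (\Delta^*)^{3/5}$. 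Combined with the lower bound $|\Pal(v)|\geq \deg(v,G^*)+1$ proved above, this again gives $|\Pal(v)| \geq \Delta^* - (\Delta^*)^{3/5}$ w.h.p.

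Finally I would define $\Pal^*(v)$ explicitly: if $|\Pal(v)| \leq \Delta^* + 1$, set $\Pal^*(v) := \Pal(v)$; otherwise let $\Pal^*(v)$ be an arbitrary subset of $\Pal(v)$ of size exactly $\Delta^*+1$ containing a $(\deg(v,G^*)+1)$-sized ``usable'' core (this is possible since $\deg(v,G^*)+1 \leq \Delta^*+1$). In either case $|\Pal^*(v)| \in [\Delta^*-(\Delta^*)^{3/5},\Delta^*+1]$ by the previous paragraph, and $|\Pal^*(v)|\geq \deg(v,G^*)+1$ is preserved by construction. The step I expect to require the most care is Case (b): verifying that the Chernoff deviation $\sqrt{5\mu\log n}$ is indeed dominated by $(\Delta^*)^{3/5}$, which boils down to checking $\Delta^* = \omega(\log^{5} n)$; since $\Delta^* = \Theta(p^*\Delta_i)$ with the parameters chosen in Step (I), this holds in the regime $\Delta = \Omega(\sqrt{n})$ that is the focus of this section, and the same union bound over all $v\in G^*$ goes through since $|G^*|\leq n$.
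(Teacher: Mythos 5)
Your proposal follows essentially the same approach as the paper's own proof: claim (i) via the deterministic observation that each colored neighbor removes at most one color and $|\Pal_i| = \Delta_i+1 \geq \deg(v,G')+1$, and claim (ii) via the same two-case split on $\deg(v,G')$ (deterministic slack when the degree is small; Chernoff on $\deg(v,G^*) = \sum_{u \in N(v) \cap G'}\mathbbm{1}[u\in V^*]$ when it is large), so there is no need to do anything further. The only cosmetic differences are the threshold in the case split (you use $(\Delta^*)^{3/5}$ where the paper uses $\sqrt{\Delta^*\cdot 5\log n}$, which is dominated by your choice for the relevant $\Delta^*=\Theta(\sqrt n)$, so both suffice) and your explicit trimming of $\Pal^*(v)$ down to size at most $\Delta^*+1$, which the paper leaves implicit in the description of Step (III).
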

\begin{proof}
Since each vertex $v \in G'$ has a palette of size $(\Delta_i+1)\geq \deg(v,G')$, after coloring all its neighbors in $G'_1,\ldots, G'_{q_i}$, it has at least $\deg(v,G^*)+1$ free colors in its palette.  Claim (ii) follows the same argument as in \Cref{lem:size}. 
We show that the palette of $v$ has at least $\Delta^*-O(\sqrt{\Delta^*}\cdot 5\log n)\geq (\Delta^*-(\Delta^*)^{3/5})$ available colors after coloring all the vertices in $G\setminus G^*$. 
First, when $\deg(v,G')\leq \Delta_i-(\Delta^*-\sqrt{\Delta^* \cdot 5\log n})$, then even after coloring all neighbors of $v$ in $G'$, it still has an access of $\Delta^*-\sqrt{\Delta^* \cdot 5\log n}$ colors in its palette. Consider a vertex $v$ with $\deg(v,G')\geq \Delta_i-(\Delta^*-\sqrt{\Delta^* \cdot 5\log n})$. By Chernoff, w.h.p. it holds that:
\begin{eqnarray*}
\deg(v,G^*)&\geq &(\Delta_i-(\Delta^*-\sqrt{\Delta^* \cdot 5\log n}))\cdot p^*-\sqrt{5\log n\cdot \Delta_i \cdot p^*}
\\&\geq& \Delta_i\cdot \delta_i/q_i-O(\sqrt{\Delta_i\cdot \delta_i \log n/q_i})
\geq
\Delta^*-(\Delta^*)^{3/5}~.
\end{eqnarray*}
Hence, by combining with claim (i), the lemma follows.
\end{proof}
This completes the proof of \Cref{thm:coleps}(i). 

\paragraph{$(\Delta+\Delta^{1/2+\epsilon})$ Coloring in Log-Star Rounds}
\begin{lemma}\label{lem:manycol}
For any fixed $\epsilon \in (0,1)$, one can color, w.h.p., a graph with $(\Delta+\Delta^{1/2+\epsilon})$ colors in $O(\log(1/\epsilon) \cdot \log^* \Delta)$ rounds.
\end{lemma}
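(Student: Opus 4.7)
The plan is to leverage the $\Delta^{1/2+\epsilon}$ slack in the palette to reduce, in at most one random partitioning step, to the regime handled directly by \Cref{thm:coleps}(i). If $\Delta \le (n/\log n)^{1-\epsilon}$, I would simply invoke \Cref{thm:coleps}(i) with parameter $\epsilon$, obtaining a $(\Delta+1)$-coloring in $O(\log(1/\epsilon)\log^{*}\Delta)$ rounds and leaving the extra $\Delta^{1/2+\epsilon}$ colors unused. The interesting case is $\Delta > (n/\log n)^{1-\epsilon}$.

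In that case, I would set $q := \lceil \Delta/(n/\log n)^{1-\epsilon}\rceil$ and partition $V(G)$ into $G_1,\ldots,G_q$ by letting each vertex pick an index $j \in \{1,\ldots,q\}$ uniformly and independently. By \Cref{thm:cher}, w.h.p.\ the maximum degree $\widetilde{\Delta}_j$ of each $G_j$ is at most $\Delta/q + O(\sqrt{\Delta \log n/q}) = O((n/\log n)^{1-\epsilon})$, putting each block inside the regime of \Cref{thm:coleps}(i). I would then carve the global palette of $\Delta+\Delta^{1/2+\epsilon}$ colors into pairwise-disjoint sub-palettes $\Pal_j$ of size $\widetilde{\Delta}_j+1$, one per $G_j$, and invoke \Cref{thm:coleps}(i) on each $(G_j,\Pal_j)$ in parallel.

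The parallel execution works because the $G_j$'s are vertex-disjoint and the $\Pal_j$'s are disjoint; hence every nested sub-palette produced during the recursion of \Cref{thm:coleps}(i) remains disjoint across blocks, and the modified-CLP leaves of that recursion are handled simultaneously via \Cref{cor:colorsqrtnparallel}. Consequently all $q$ invocations finish within $O(\log(1/\epsilon)\log^{*}\Delta)$ rounds, on top of the $O(1)$ rounds used for the partitioning and palette allocation.

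The main obstacle I expect is the color-budget verification. The total consumption is $\sum_j(\widetilde{\Delta}_j+1) \le \Delta + O(\sqrt{\Delta q\log n}) + q$, and I must confirm that with $q = \Theta(\Delta/(n/\log n)^{1-\epsilon})$ the overhead $O(\sqrt{\Delta q\log n})+q$ stays within the $\Delta^{1/2+\epsilon}$ slack. A direct computation using $\Delta \le n$ together with $\Delta \ge (n/\log n)^{1-\epsilon}$ reduces this to an inequality of the form $n^{\epsilon/2} \ge \polylog n$, which holds for any fixed $\epsilon>0$ and $n$ sufficiently large. This pins $q$ down to essentially $\Delta/(n/\log n)^{1-\epsilon}$: any larger $q$ would break the color budget, and any smaller $q$ would fail to bring the per-block degree into the range of \Cref{thm:coleps}(i).
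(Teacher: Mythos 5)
Your proof is correct and takes essentially the same route as the paper: a single uniform random partitioning step to push the per-block degree into the $\RecursiveColor$ regime, followed by parallel applications of \Cref{thm:coleps}(i) on the vertex-disjoint blocks with pairwise-disjoint sub-palettes, with the slack $\Delta^{1/2+\epsilon}$ absorbing the Chernoff fluctuations. The only differences are cosmetic parameter choices — the paper uses $k=\lfloor\Delta^\epsilon\rfloor$ blocks with base case $\Delta=O(\sqrt{n})$ via \Cref{thm:colorsqrtn}, while you use $q=\lceil\Delta/(n/\log n)^{1-\epsilon}\rceil$ with base case $\Delta\leq (n/\log n)^{1-\epsilon}$ — and both satisfy the budget check, so your closing remark that $q$ is essentially pinned down overstates the constraint (there is a genuine range of workable block counts, as the paper's own choice shows).
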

\begin{proof}
Due to \Cref{thm:colorsqrtn}, it is sufficient to consider the case where $\Delta=\Omega(\sqrt{n})$. 
Partition the graph into $k=\lfloor \Delta^{\epsilon} \rfloor$ subgraphs $G_1,\ldots, G_k$, by letting each vertex independently pick a subgraph uniformly at random. By Chernoff bound of \Cref{thm:cher}, the maximum degree $\Delta_i$ in each subgraph $G_i$ is at most $\Delta_i\leq \Delta^{1-\epsilon}+\Delta^{1/2-\epsilon/2}\cdot \sqrt{5\log n}$. Allocate a distinct set of $\Delta_i+1$ colors $\Pal_i$ to each subgraph $G_i$. Since $\Delta^{1-\epsilon}=O((n/\log n)^{1-\epsilon/2})$, we can apply 
Alg. $\RecursiveColor$ on each of these subgraphs which takes $O(\log(1/\epsilon)\cdot \log^* \Delta)$ rounds. It is easy to see, that since the subgraphs are vertex disjoint, Alg. $\RecursiveColor$ can be applied on all $k$ subgraphs simultaneously with the same round complexity. Overall, the algorithm uses 
$\Delta+\Delta^{1/2+\epsilon}$ colors.
\end{proof}

\vspace{-13pt}
\subsection{$(\Delta+1)$ Coloring Algorithm for General Graphs}\label{sec:generalcol}
For graphs $G$ with $\Delta\leq n/(10\log n)$, we simply apply Alg. $\RecursiveColor$. Plugging $\epsilon=1/\log n$ in \Cref{thm:coleps}, we get that this is done in $O(\log\log \Delta \cdot \log^* \Delta)$ rounds. 
It remains to handle graphs with with $\Delta \in [n/(10\log n),n]$. 
We partition the graph into $\ell=\lceil 5 \log n \rceil$ subgraphs $G_1,G_2, \ldots, G_\ell$ and a left-over graph $G^*$ in the following manner.
Each $v \in V$ joins $G_i$ with probability $p=1/\ell-2\sqrt{5\log n/(\Delta \cdot \ell)}$ for every $i \in \{1,\ldots, \ell\}$, and it joins $G^*$ with probability $p^*=1-\ell \cdot p=\Theta(\log n/\sqrt{\Delta})$. By Chernoff bound, the maximum degree in $G_i$ for $i \in \{1,2,\ldots,\ell\}$ is 
$\Delta_i\leq \Delta/\ell-2\sqrt{(\Delta \cdot 5\log n)/\ell}+\sqrt{\Delta \cdot 5\log n/\ell}\leq \Delta/\ell-1~.$
Hence, we have the budget to allocate a distinct set $\Pal_i$ of $\Delta_i$ colors for each $G_i$.

The first phase applies Algorithm $\RecursiveColor$ on each  $(G_i, \Pal_i)$ simultaneously for every $i$. Since $\Delta_i=O(n/\log n)$, and the subgraphs are vertex-disjoint, this can be done in $O(\log\log \Delta \cdot \log^* \Delta)$ rounds for all subgraphs simultaneously (see \Cref{thm:coleps}(i)). 

After all the vertices of $G\setminus G^*$ get colored, the second phase colors the left-over subgraph $G^*$. The probability of a vertex $v$ to be in $G^*$ is $O(\log n /\sqrt{\Delta})=O(\log^2 n/\sqrt{n})$. Hence, $G^*$ contains $O(\log^2 n \cdot \sqrt{n})$ vertices with high probability. We color $G^*$ in two steps. First, we use the $\deg+1$ list coloring Algorithm $\OneShotColoring$ from \cite{barenboim2016locality} to reduce the uncolored-degree of each vertex to be $O(\sqrt{n}/\log^2 n)$ with high probability. This can be done in $O(\log\log n)$ rounds. In the second step, the entire uncolored subgraph $G'' \subset G^*$ has $O(n)$ edges and can be solved locally in $O(1)$ rounds. Note that for each $v \in G''$, it is sufficient to consider a palette with $\deg(v,G'')+1$ colors, and hence sending all these palettes can be done in $O(1)$ rounds as well. 

We are now ready to complete proof of \Cref{thm:maincol}. 
The correctness of the first phase follows by \Cref{thm:coleps}(i). Note that since the $G_i$'s subgraphs are vertex-disjoint, they can handled simultaneously by Alg. $\RecursiveColor$. Hence the first coloring phase takes $O(\log\log \Delta\cdot \log^* \Delta)$ rounds. Finally, for the second phase, we use Lemma 5.4 of \cite{barenboim2016locality}. Since we only want to reduce the uncolored degree of each vertex in $G^*$ to be $O(\sqrt{n}/\log^2 n)$, i.e., reduce it be a factor of at most $\log^4 n$, using Lemma 5.4, the uncolored degree of each (relevent) vertex $v$  is reduced by a constant factor w.h.p. and hence after $O(\log\log n)$ rounds, all degrees are $O(\sqrt{n}/\log^2 n)$ as desired, with high probability.
\begin{mdframed}[hidealllines=false,backgroundcolor=gray!30]
\center \textbf{Algorithm $\GeneralColor(G)$}
%\begin{flushleft}
%Input: Graph $G$ with maximum degree $\Delta_i=O(N^{1-1/2^{y-i+1}})$; \\A palette $\Pal_i$ of $(\Delta_i+1)$ colors (same for all node). 
%\end{flushleft}
%\vspace{-10pt}
\begin{itemize}
\item
If $\Delta \leq n/(10\log n)$, call $\RecursiveColor(G, [1,\Delta+1])$. 
\item
Else, partition $G$ into vertex-subgraphs as follows:
\begin{itemize}
\item
$G'_1,\ldots, G'_q$ with the maximum degree $\Theta(\Delta/\log n)$, and 
\item
a left-over subgraph $G^*$ with maximum degree $\Delta^*=O(\sqrt{n})$.
\end{itemize}
\item
Allocate a distinct palette $\Pal_j \subset [1,\Delta+1]$ of $(\Delta(G_{j})+1)$ colors for each $j \leq q_i$.
\item
Apply $\RecursiveColor(G_j,\Pal_j)$ for all $G_1,\ldots, G_q$ simultaneously.
\item
Apply a $(\deg+1)$-list coloring algorithm on $G^*$ for $O(\log\log n)$ rounds.
\item
Solve the remaining uncolored subgraph locally. 
\end{itemize}
\end{mdframed}

\section{Deterministic Coloring Algorithms}\label{sec:detcol}

\paragraph{$O(\Delta^2)$ Coloring in $O(1)$ Rounds.} As a warm-up, we start by showing a very simple algorithm for computing $\Delta^2$ coloring very fast. 
\begin{lemma}
There exists a $O(1)$-round \emph{deterministic} algorithm for $\Delta^2$ coloring. 
\end{lemma}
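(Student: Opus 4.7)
The plan is to implement Linial's classical deterministic set-system coloring reduction, which fits into $O(1)$ rounds of the congested clique. A preliminary case split handles the easy regime: if $(\Delta+1)^2 \geq n$, then the identifiers $\mathrm{ID}(v) \in [n]$ already form a proper coloring with at most $n \leq (\Delta+1)^2 = O(\Delta^2)$ colors, needing zero rounds of communication. I therefore assume $\Delta < \sqrt{n-1}$.

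In this regime I would proceed as follows. First, in one congested clique round each vertex $v$ broadcasts its identifier to each of its at most $\Delta \leq n-1$ neighbors; since every vertex sends and receives at most $\Delta$ messages of $O(\log n)$ bits, this fits in a single round (indeed, under the weaker CONGEST bandwidth). After this round, every vertex knows all its neighbors' identifiers. Now apply a fixed, globally known Linial-type set family $\mathcal{F} = \{S_1,\ldots,S_n\}$, where each $S_v \subseteq [m]$, $|S_v| \geq \Delta+1$, $m = O(\Delta^2)$, and for every $v$ and every $\Delta$ distinct other indices $u_1,\ldots,u_\Delta$ it holds that $S_v \not\subseteq \bigcup_i S_{u_i}$. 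Such a family admits an explicit polynomial-code construction: for a prime $p = \Theta(\Delta)$ and parameter $k$ with $p^k \geq n$, encode each identifier as a polynomial $P_v$ of degree $<k$ over $\mathbb{F}_p$ and set $S_v := \{(x, P_v(x)) : x \in \mathbb{F}_p\} \subseteq \mathbb{F}_p^2$. Two distinct such polynomials agree in at most $k-1$ points, so any $\Delta$ of these sets cover at most $\Delta(k-1) < |S_v| = p$ elements of $S_v$ when $p$ is chosen large enough relative to $\Delta(k-1)$. Each vertex locally outputs $c(v) := \min\{c \in S_v : c \notin \bigcup_{u \in N(v)} S_u\}$, which always exists by the covering-free property.

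The main obstacle is securing the palette bound $m = O(\Delta^2)$ even when $n$ is much larger than $\mathrm{poly}(\Delta)$: combining $p^k \geq n$ with $p = \Theta(\Delta k)$ forces $m = p^2 = O(\Delta^2 \log^{O(1)} n / \log^{O(1)} \Delta)$ from a single application of the polynomial-code family. I would overcome this by iterating Linial's reduction: after the first congested clique round the palette shrinks to $O(\Delta^2 \log^{O(1)} n)$, and each subsequent iteration (broadcast current color to neighbors in one round, then local computation) maps a $k$-coloring to an $O(\Delta^2 \log k)$-coloring. A constant number of such iterations drives the palette down to $O(\Delta^2)$ for any input with $n$ polynomially bounded in $\Delta$, which is the standing assumption in this setting, so the overall round complexity remains $O(1)$.
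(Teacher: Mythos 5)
Your route---iterating Linial's cover-free-family reduction---is genuinely different from the paper's. The paper instead derandomizes a one-round algorithm in which every node picks a uniformly random color in $[\Delta^2]$; pairwise independence already makes a node fail with probability at most $1/\Delta$, so a seed of $O(\log n)$ bits chosen by the method of conditional expectation (revealed in $\lfloor\log n\rfloor$-bit chunks, one leader node per candidate assignment) leaves at most $n/\Delta$ uncolored vertices whose induced subgraph has $O(n)$ edges and can be shipped to a single node via Lenzen's routing and colored locally.

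Your argument, however, has a real gap. You conclude by asserting that ``a constant number of such iterations drives the palette down to $O(\Delta^2)$ for any input with $n$ polynomially bounded in $\Delta$, which is the standing assumption in this setting.'' No such assumption appears in the paper: the lemma (and Theorem~\ref{thm:coldet}, and the way this subroutine is reused on subgraphs of various degrees) is stated for arbitrary $\Delta$. Your preliminary case split only disposes of $(\Delta+1)^2\ge n$; the problematic regime is the opposite extreme, $\Delta$ small compared with $n$ (e.g.\ $\Delta=O(1)$ or $\Delta=\mathrm{polylog}\,n$). There, Linial's reduction needs $\Theta(\log^* n)$ iterations to go from $n$ colors to $O(\Delta^2)$, and your proposal gives no mechanism for the congested clique to collapse those $\Theta(\log^* n)$ iterations into $O(1)$ rounds. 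A repair is likely possible (for very small $\Delta$ one can ship the entire $O(n\Delta)$-edge graph to one node; more generally one might gather each vertex's $O(\log^* n)$-hop neighborhood via Lenzen routing and simulate the iterations locally), but such a step needs its own bandwidth accounting, which is absent; as written the argument does not establish the lemma.
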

\begin{proof}
Consider the following single round randomized algorithm: each node picks a random color in $[1,\ldots,\Delta^2]$. Nodes exchange their colors with their neighbors and if their color is legal they halt. It is easy to see that a node remains uncolored with probability $1/\Delta$ and this holds even if the random choices are pairwise independent. Hence, if all nodes are given a random seed of length $O(\log n)$, in expectation the number of remaining uncolored vertices is $n/\Delta$. To derandomize this single randomized step, we will use the method of conditional expectation similarly to the algorithm of \Cref{thm:detcoldelta}. We split the seed into chunks of $\lfloor \log n\rfloor$ pieces and describe how to compute the $i^{th}$ chunk using $O(1)$ rounds, given that the first $i-1$ chunks are already computed. 
We assign a special vertex to each of the $n$ possible assignments of an $\lfloor \log n\rfloor$-size (binary) chunk. Each vertex $u$, computes the probability $p_u(b)$ that it is legally colored given that the assignment to the $i^{th}$ chunk is $b$. Note that $u$ can compute this probability as it depends only at its neighbors and by knowing the IDs of its neighbors, $u$ can simulate their choices (given a seed). Each node $u$ sends the outcome $p_u(b)$ to the node responsible for the assignment $b$. Finally, the assignment that got the largest fraction of removed nodes is elected. Thanks to the method of conditional expectation, when all nodes simulate their random decision using the computed seed, the fraction of removed nodes is at least as the expected one and hence at most $n/\Delta$ vertices remained uncolored. At the point, the remaining graph can be collection to a single node and be locally solved. 
\end{proof}

We next turn to consider the more challenging task of computing $(\Delta+1)$ coloring in $O(\log \Delta)$ rounds. We will first describe a simple algorithm for the case where $\Delta=O(\sqrt{n})$. In this regime, the degrees are small enough to allow each node learning the palettes of its neighbors. Then, we will modify this basic algorithm to allow fast computation even for $\Delta=O(n^{3/4})$. Finally, using the partitioning technique described in the first part of the paper, we will handle the general case. 
\begin{theorem}\label{thm:detcoldelta}
There is a deterministic $(\Delta+1)$ list coloring using $O(\log \Delta)$ rounds, in the congested clique model.
\end{theorem}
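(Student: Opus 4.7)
My plan is to derandomize a standard randomized $(\Delta+1)$-list coloring procedure that already achieves $O(\log \Delta)$ rounds and whose correctness relies only on pairwise independence among the vertices' random choices. The base is the usual try-and-commit routine: in each round every uncolored vertex $v$ samples a color $c_v$ uniformly from its current palette (which is maintained of size at least $\deg(v,G')+1$ in the residual graph $G'$), and commits to $c_v$ if no uncolored neighbor picked the same color. A pairwise-independent calculation shows that each vertex succeeds with a constant probability, so in expectation the residual degree drops by a constant factor per round, and after $O(\log \Delta)$ outer rounds the residual graph has maximum degree $O(1)$ and can be finished in $O(1)$ congested-clique rounds by collecting it to a single leader via Lenzen's routing \cite{lenzen2013route}.

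The derandomization uses the framework of \cite{Censor-HillelPS17}: since only pairwise independence is required, each round's $n$ color choices are generated from a shared seed of $\Theta(\log n)$ bits (e.g., via a linear hash into $[1,\Delta+1]$), and a pessimistic estimator $\Phi=\sum_v \phi_v$ of the progress is written as a sum of local contributions. Rather than computing the seed bit by bit, I exploit the all-to-all channel and fix it one $\lfloor \log n \rfloor$-bit chunk at a time: given the first $j-1$ chunks, assign each of the $n$ possible values of the $j$-th chunk to a dedicated vertex; every vertex $u$ locally computes $\phi_u(b)$ for all $n$ candidates $b$ and sends $\phi_u(b)$ to the vertex responsible for $b$, which aggregates and broadcasts the total. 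The winning chunk is chosen, and by the method of conditional expectations this preserves at least the expected progress. Thus one outer round of the randomized algorithm is implemented in $O(1)$ deterministic rounds.

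The main obstacle is bandwidth when evaluating $\phi_u(b)$, since $u$ must effectively simulate its neighbors' choices under the candidate seed. For $\Delta=O(\sqrt{n})$ this is straightforward: each vertex can collect its $2$-neighborhood (IDs and palettes) in $O(1)$ rounds via Lenzen's routing, and then simulate its neighbors locally. For $\Delta=O(n^{3/4})$ I would rewrite the estimator so that $\phi_u(b)$ depends only on per-color neighborhood counts induced by the candidate seed, which can be obtained by aggregate pairwise summation within the per-vertex bandwidth budget of $O(n)$ messages. Finally, for general $\Delta$ up to $n$ I plug in the vertex-partitioning scheme from Section 3 to split $G$ into vertex-disjoint subgraphs of maximum degree $O(n^{3/4})$ with disjoint sub-palettes, and then run the medium-$\Delta$ deterministic procedure in parallel on all parts; the partitioning itself is derandomized by the same chunk-wise estimator technique, and a small low-degree leftover is cleaned up in $O(1)$ rounds. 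Combining $O(\log \Delta)$ outer rounds, each costing $O(1)$ derandomization rounds, with $O(1)$ rounds for partitioning and clean-up, yields the claimed $O(\log \Delta)$ deterministic bound.
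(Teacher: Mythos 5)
Your overall architecture matches the paper's: derandomize a pairwise-independent ``try a random color from your current palette'' step via conditional expectations, revealing the $O(\log n)$-bit seed in $\lfloor \log n\rfloor$-bit chunks with one leader per candidate chunk value, handle $\Delta=O(\sqrt{n})$ by collecting $2$-neighborhoods, and lift to general $\Delta$ by partitioning into disjoint-palette subgraphs of maximum degree $O(n^{3/4})$. The $\Delta=O(\sqrt{n})$ case and the top-level partitioning are essentially the paper's argument. (A minor slip: the leftover part $G^*$ in the top-level partition has degree about $\Delta^{11/16}$, not $O(1)$; it is list-colored by another $O(\log\Delta)$-round run of the medium-degree procedure, not an $O(1)$-round cleanup, though this does not change the final bound.)

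The genuine gap is the regime $\sqrt{n}\ll\Delta\leq n^{3/4}$, which is exactly where the paper extends beyond \cite{Censor-HillelPS17}. To evaluate $\phi_u(b)$, vertex $u$ must know, for each color it might pick and each uncolored neighbor $v$, whether that color lies in $\Pal(v)$ and the size $|\Pal(v)|$. Palettes are arbitrary subsets of $[\Delta+1]$, so this is $\Theta(\Delta^2)$ words at $u$, i.e.\ $n^{3/2}\gg n$ when $\Delta=n^{3/4}$; ``per-color neighborhood counts'' are still $\Theta(\Delta)$ numbers needed by each of $\Delta$ neighbors, again $\Theta(\Delta^2)$ per node, so the aggregate-summation suggestion as stated does not bring this under the per-node $O(n)$ budget. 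The paper resolves this with a two-step bin construction: each palette is split into $\Delta^{1/3}$ aligned intervals of $\Delta^{2/3}$ colors; nodes with a constant fraction of free colors in ``small'' bins (size $\leq\Delta^{1/12}$) are shown to have degree $O(\Delta^{5/12})$ and can afford to ship all those colors; the others first pick one bin with probability proportional to its occupancy (this bin-picking step uses $O(1)$-wise independence and is derandomized separately), and a counting argument guarantees a constant fraction of ``happy'' nodes with at most $11|B_{i,u}|$ same-bin competitors, after which only the $O(\Delta^{2/3})$ colors of one bin are exchanged with $O(\Delta^{2/3})$ relevant neighbors. Some mechanism of this kind---a bandwidth-compatible way to shrink each palette to a locally small, mutually comparable color set while provably retaining constant per-round progress---is missing from your proposal, and without it the $\Delta=O(n^{3/4})$ step (and hence the whole reduction) does not go through.
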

In \cite{Censor-HillelPS17}, a deterministic $(\Delta+1)$ coloring was presented only for graphs with maximum degree $\Delta=O(n^{1/3})$. Here, we handle the case of $\Delta=\Omega(n^{1/3})$. 

\subsection{$(\Delta+1)$ Coloring for $\Delta=O(\sqrt{n})$}\label{sec:deg-very-small}
We first let each node sends its palette to all its neighbors. Since $\Delta=O(\sqrt{n})$, this can be done in $O(1)$ rounds. We will derandomize the following simple $(\Delta+1)$-algorithm that runs in $O(\log n)$ rounds.
\begin{mdframed}[hidealllines=false,backgroundcolor=gray!25]
\textbf{Round $i$ of Algorithm $\SimpleRandColor$ (for node $v$ with palette $\Pal_v$)}
\begin{itemize}
\item
Let $\Pal_{i,v}$ be the current palette of $v$ containing all its colors that are not yet taken by its neighbors. Let $F_{i,v}=|\Pal_{i,v}|$.
\item
With probability $1/2$, let $c_v=0$ and w.p. $1/2$ let $c_v$ be chosen uniformly at random from $\Pal_{i,v}$.
\item
Send $c_v$ to all neighbors and if $c_v \neq 0$ and legal, halt. 
\end{itemize}
\end{mdframed}

\begin{observation}\label{obs:pwenough}
The correctness of Algorithm $\SimpleRandColor$ is preserved, even if the coin flips are \emph{pairwise}-independent. 
\end{observation}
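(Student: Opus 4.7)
The goal is to show that the per-round, per-vertex success probability of at least $1/4$ that drives the $O(\log n)$-round analysis of $\SimpleRandColor$ — and, more importantly, that will drive its derandomization via conditional expectation on a short shared seed — relies only on pairwise independence of the choices $\{c_v\}$. My plan is to first isolate an invariant of the palette dynamics and then apply an inclusion-exclusion truncation (Bonferroni), which is valid under pairwise independence.

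First I record the invariant $F_{i,v} \geq |N_i(v)| + 1$, where $N_i(v)$ is the set of neighbors of $v$ still uncolored at the start of round $i$. It holds at $i = 0$ because $(\Delta+1)$-list coloring gives $|\Pal_v| \geq \deg(v) + 1$, and it is preserved round by round, since whenever a neighbor $u$ halts with some chosen color $c$, $|N_i(v)|$ drops by exactly one while $F_{i,v}$ drops by one if $c \in \Pal_{i,v}$ and by zero otherwise. Next, fix an uncolored $v$ and round $i$, and for each $c \in \Pal_{i,v}$ let $A_{v,c}$ be the event $\{c_v = c\} \cap \bigcap_{u \in N_i(v)} \{c_u \neq c\}$. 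The $A_{v,c}$ are disjoint across $c$, so the success probability of $v$ equals $\sum_{c} P(A_{v,c})$. Bonferroni gives
\[
P(A_{v,c}) \;\geq\; P(c_v = c) \;-\; \sum_{u \in N_i(v):\, c \in \Pal_{i,u}} P(c_v = c,\, c_u = c),
\]
and pairwise independence yields $P(c_v = c,\, c_u = c) = \tfrac{1}{2F_{i,v}} \cdot \tfrac{1}{2F_{i,u}}$. Summing over $c \in \Pal_{i,v}$ and then over $u$, and bounding $|\Pal_{i,v} \cap \Pal_{i,u}| \leq F_{i,u}$, I obtain
\[
P(v \text{ succeeds}) \;\geq\; \tfrac{1}{2} \;-\; \sum_{u \in N_i(v)} \frac{|\Pal_{i,v} \cap \Pal_{i,u}|}{4 F_{i,v} F_{i,u}} \;\geq\; \tfrac{1}{2} \;-\; \frac{|N_i(v)|}{4 F_{i,v}} \;\geq\; \tfrac{1}{4},
\]
where the last step is the invariant.

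The one subtle step is the direction of the intersection bound: to cancel $F_{i,u}$ in the denominator and leave a sum in $|N_i(v)|/F_{i,v}$, one must use $|\Pal_{i,v} \cap \Pal_{i,u}| \leq F_{i,u}$ rather than $\leq F_{i,v}$. No concentration argument is needed — only the fact that, in expectation, the number of uncolored vertices shrinks by a constant factor per round — because the subsequent derandomization (using chunks of $\lfloor \log n \rfloor$ bits of a pairwise-independent seed and the method of conditional expectations) will in each round choose a seed whose realized progress meets or beats the expectation, which is exactly the $1/4$ guarantee established above.
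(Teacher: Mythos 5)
Your proof is correct and takes essentially the same route as the paper's: the only probabilistic tool is a union bound (your Bonferroni truncation) over collision events, and pairwise independence enters only through the per-pair probability $\Pr[c_u = c_v]$, exactly as in the paper. The paper reaches $1/4$ by conditioning on $c_v \neq 0$ and union-bounding over uncolored neighbors using the invariant $F_{i,v} \geq \deg_i(v)+1$ (stated there with a typo as $\deg_i(v)\leq F_{i,v}+1$), whereas you sum Bonferroni lower bounds over colors without conditioning — a cosmetic reorganization of the same calculation.
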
 
\begin{proof}
We analyze the probability that some vertex $v \in V$ terminates in round $i$, conditioned that it has not terminated before round $i$, for any $i>0$. The probability that $v$ picked in round $i$ a color $c_v\neq 0$ is $1/2$.
Suppose that $c_v \neq 0$ and consider some neighbor $u$ of $v$ that has not terminated in round $i$.
The probability that $c_v=c_u$ is $1/2(F_{i,v})$. This is because the probability that $c_u\neq 0$ is $1/2$ and the probability that $v$ picked that color among the $F_{i,v}$ is $1/F_{i,v}$. Note that since this argument is only for two neighbors $u,v$, pairwise independence is enough. Let $\deg_i(v)$ be the number of uncolored neighbors of $v$ at the beginning on round $i$. Clearly, $\deg_i(v) \leq F_{i,v}+1$. By applying the union bound over all non-coloring neighbors $u$ of $v$, we get that $v$ is colored with some color in $\Pal_{i,v}$ with probability $1/2$. Hence, overall $v$ is colored with probability $1/4$.
\end{proof}
%We now show how to implement round $i$ of Algorithm $\SimpleRandColor$ using $O(1)$ deterministic rounds. 
The goal of \emph{phase} $i$ in our algorithm is to compute a seed that would be used to simulate the random color choices of \emph{round} $i$ of Alg. $\SimpleRandColor$. This seed will be shown to be good enough so that at least $1/4$ of the currently uncolored vertices, get colored when picking their color using that seed. Let $V_i$ be the set of uncolored vertices at the beginning of phase $i$.
We need the following construction of bounded independent hash functions:
\begin{lemma}\label{lem:vad}\cite{Vadhan12}
	\label{lem: d-wise independent}
	For every $\gamma,\beta,d \in \mathbb{N},$ there is a family of $d$-wise independent functions $\mathcal{H}_{\gamma,\beta} = \set{h : \set{0,1}^\gamma \rightarrow \set{0,1}^\beta}$ such that choosing a random function from $\mathcal{H}_{\gamma,\beta}$ takes $d \cdot \max \set{\gamma,\beta}$ random bits, and evaluating
	a function from $\mathcal{H}_{\gamma,\beta}$ takes time $poly(\gamma,\beta,d)$.
\end{lemma}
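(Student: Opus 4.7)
The plan is to give the standard polynomial-evaluation construction over an extension field of $\mathbb{F}_2$. Set $k = \max\set{\gamma,\beta}$ and fix an explicit representation of the finite field $\mathbb{F}_{2^k}$ (e.g., via an irreducible polynomial of degree $k$ over $\mathbb{F}_2$, which can be found and fixed in $\poly(k)$ time). Identify elements of $\mathbb{F}_{2^k}$ with bit strings in $\set{0,1}^k$, so that $\set{0,1}^\gamma$ embeds injectively into $\mathbb{F}_{2^k}$ by zero-padding, and any field element can be truncated to its first $\beta$ bits to yield an element of $\set{0,1}^\beta$.

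The hash family $\mathcal{H}_{\gamma,\beta}$ is then indexed by tuples $(a_0,\dots,a_{d-1}) \in (\mathbb{F}_{2^k})^d$. Given such a tuple, the associated function $h_{a_0,\dots,a_{d-1}} : \set{0,1}^\gamma \to \set{0,1}^\beta$ is defined by embedding the input $x \in \set{0,1}^\gamma$ as $\tilde{x} \in \mathbb{F}_{2^k}$, evaluating
\[
p(\tilde{x}) \;=\; \sum_{i=0}^{d-1} a_i \, \tilde{x}^i
\]
in $\mathbb{F}_{2^k}$, and outputting the first $\beta$ bits of the result. Sampling $h$ uniformly at random amounts to drawing each $a_i$ uniformly and independently from $\mathbb{F}_{2^k}$, which takes $d \cdot k = d \cdot \max\set{\gamma,\beta}$ random bits. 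Evaluating $h$ can be done via Horner's rule with $O(d)$ additions and multiplications in $\mathbb{F}_{2^k}$, each implementable in $\poly(k)$ bit operations, for a total of $\poly(\gamma,\beta,d)$ time.

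The main thing to verify is $d$-wise independence. Fix any distinct $x_1, \dots, x_d \in \set{0,1}^\gamma$; since the zero-padding embedding is injective, the corresponding field elements $\tilde{x}_1,\dots,\tilde{x}_d$ are distinct in $\mathbb{F}_{2^k}$. The linear map $(a_0,\dots,a_{d-1}) \mapsto (p(\tilde{x}_1),\dots,p(\tilde{x}_d))$ is given by the $d \times d$ Vandermonde matrix on $\tilde{x}_1,\dots,\tilde{x}_d$, whose determinant $\prod_{i<j}(\tilde{x}_j - \tilde{x}_i)$ is nonzero in $\mathbb{F}_{2^k}$. Hence this map is a bijection of $(\mathbb{F}_{2^k})^d$, so if the $a_i$ are uniform and independent in $\mathbb{F}_{2^k}$, the joint distribution of $(p(\tilde{x}_1),\dots,p(\tilde{x}_d))$ is uniform on $(\mathbb{F}_{2^k})^d$. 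Truncating each coordinate to its first $\beta$ bits is a coordinatewise deterministic map, and therefore preserves joint independence and uniformity; we conclude that $(h(x_1),\dots,h(x_d))$ is uniform on $(\set{0,1}^\beta)^d$.

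The one subtlety worth flagging is the decoupling of the two regimes $\beta \le \gamma$ and $\beta > \gamma$: taking $k = \max\set{\gamma,\beta}$ simultaneously guarantees that the input embedding $\set{0,1}^\gamma \hookrightarrow \mathbb{F}_{2^k}$ is injective (so distinct inputs give distinct evaluation points, a prerequisite for Vandermonde nonsingularity) and that the field is \emph{at least} as large as the output alphabet (so the first-$\beta$-bits truncation is well-defined and marginally uniform). This is the only point where the symmetric $\max$ enters the randomness count, and it is precisely what yields the stated bound of $d \cdot \max\set{\gamma,\beta}$ random bits.
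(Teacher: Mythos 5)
Your proof is correct, and it is exactly the standard polynomial-evaluation construction over $\mathbb{F}_{2^k}$ that underlies the cited result. The paper itself does not prove this lemma — it is quoted verbatim from Vadhan's \emph{Pseudorandomness} monograph \cite{Vadhan12} and used as a black box — so there is no in-paper argument to compare against; your writeup simply reconstructs the proof that the reference supplies (Vandermonde nonsingularity for $d$-wise independence, truncation for range adjustment, Horner for efficient evaluation). The one point you flag — taking $k=\max\{\gamma,\beta\}$ so that the input embedding is injective \emph{and} the output truncation is well-defined and uniform — is indeed the crux of why the symmetric $\max$ appears in the seed length, and you handle both directions correctly.
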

First, at the beginning of phase $i$, we let each node $u$ send its current palette $\Pal_{i,u}$ to all its neighbors. Since $\Delta=O(\sqrt{n})$, this can be done in $O(1)$ rounds.
For our purposes, to derandomize a single round, we use \Cref{lem:vad} with $d=2, \gamma=\log n, \beta=\log \Delta$ and hence the size of the random seed is $\alpha \cdot \log n$ bits for some constant $\alpha$. Instead of revealing the seed bit by bit using the conditional expectation method, we reveal the assignment for a \emph{chunk} of $z=\lfloor \log n \rfloor$ variables at a time.
To do so, consider the $i$'th chunk of the seed $Y'_i=(y'_1,\ldots, y'_z)$. For each of the $n$ possible assignments $(b'_1,\ldots, b'_z) \in \{0,1\}^{z}$ to the $z$ variables in $Y'$, we assign a leader $u$ that represent that assignment and receives the conditional expectation values from all the uncolored nodes $V_i$, where the conditional expectation is computed based on assigning $y'_1=b'_1,\ldots, y'_z=b'_z$. Unlike the MIS problem, here the vertex's success depends only on its neighbors (i.e., and does not depend on its second neighborhood). Using the partial seed and the IDs of its neighbors, every vertex $v$ can compute the probability that it gets colored based on the partial seed, its own palette and the palettes of its neighbors. It then sends its probability of being colored using a particular assignment $y'_1=b'_1,\ldots, y'_z=b'_z$ to the leader $u$ responsible for that assignment. 
The leader node $u$ of each assignment $y'_1=b'_1,\ldots, y'_z=b'_z$ sums up all the values and obtains the expected number of colored nodes conditioned on the assignment. Finally, all nodes send to the leader their computed sum and the leader selects the assignment $(b^*_1,\ldots, b^*_z) \in \{0,1\}^{z}$ of largest value.
After $O(1)$ many rounds, the entire assignment of the $O(\log n)$ bits of the seed are revealed. Every yet uncolored vertex $v \in V_i$ uses this 
seed to simulate the random choice of Alg. $\SimpleRandColor$, that is selecting a color in $\{0,1,2,\ldots, \Delta+1\}\setminus F_v$ and broadcasts its decision to its neighbors. If the color $c_v \neq 0$ is legal, $v$ is finally colored and it notifies its neighbors. By the correctness of the conditional expectation approach, we have that least $1/4 \cdot |V_i|$ vertices got colored.  Hence, after $O(\log n)=O(\log \Delta)$ rounds, all vertices are colored. %We next also show a deterministic $O(\Delta^2)$ coloring in $O(1)$ rounds. 

\subsection{$(\Delta+1)$ Coloring for $\Delta\leq n^{3/4}$}\label{sec:threequarter}
In this section, we show that one can modify the algorithm of the previous section, so that it can be implemented efficiently already for larger values of $\Delta \in [\sqrt{n},n^{4/3}]$.  
\begin{lemma}\label{lem:det-small-deg}
There exists a deterministic algorithm that given a graph $G=(V,E)$ of maximum degree $\Delta=O(n^{3/4})$ where each node $u$ has a palette $\Pal(u)$ of at least $\deg(u,G)+1$ colors in the range $[1,\ldots, \Delta+1]$ computes a list-coloring in $O(\log \Delta)$ rounds. 
\end{lemma}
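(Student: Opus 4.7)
The plan is to reuse the derandomization framework of \Cref{sec:deg-very-small} with two modifications tailored to the higher degree. As before, each phase derandomizes one execution of Algorithm~$\SimpleRandColor$ using a pairwise-independent seed of $O(\log n)$ bits from \Cref{lem:vad}, and Observation~\ref{obs:pwenough} guarantees that the chosen seed colours at least a quarter of the remaining uncolored vertices. After $O(\log \Delta)$ phases the graph is entirely coloured, so the overall round complexity is $O(\log \Delta)$ provided each phase is implementable in $O(1)$ rounds despite $\Delta$ being as large as $n^{3/4}$.

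The difficulty is that the direct implementation from \Cref{sec:deg-very-small} has every vertex broadcast its current palette to all neighbors, which here costs $\Theta(\Delta^2)=\Theta(n^{3/2})$ received bits per vertex and does not fit in $O(1)$ rounds. The first modification is to replace the exact conditional expectation by a pessimistic estimator of the form
\[
\phi_v(b) \;=\; \Pr[c_v=0\mid b] \;+\; \frac{\alpha_v(b)}{|\Pal_{i,v}|}\sum_{u\in N(v)}\alpha_u(b),
\]
where $\alpha_u(b):=\Pr[c_u\neq 0\mid b]$. Starting from the pairwise-independent union bound $\Pr[v\text{ uncolored}\mid b]\leq \Pr[c_v=0\mid b]+\sum_u\Pr[c_v=c_u\neq 0\mid b]$ (the very bound underlying Observation~\ref{obs:pwenough}) and bounding the intersection factor $|\Pal_{i,v}\cap\Pal_{i,u}|\leq |\Pal_{i,u}|$, one obtains $\phi_v(b)\geq\Pr[v\text{ uncolored}\mid b]$; the list-coloring hypothesis $|\Pal_{i,v}|\geq \deg(v)+1$ then yields $\sum_v\phi_v(\emptyset)\leq (3/4)|V_i|$, preserving the quarter-progress guarantee. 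The crucial property is that $\phi_v(b)$ depends on each neighbor $u$ only through the scalar $\alpha_u(b)$, never through $u$'s palette.

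The second modification is to shrink the chunk size of the conditional-expectation sweep from $\lfloor\log n\rfloor$ to $\lfloor\tfrac14\log n\rfloor$ bits, so each chunk has $n^{1/4}$ candidate assignments and the $O(\log n)$-bit seed still has only $O(1)$ chunks. For each chunk, every vertex $u$ sends the $n^{1/4}$ scalars $\alpha_u(b)$ to each of its $\leq\Delta$ neighbors, for a total per-vertex load of $O(\Delta\cdot n^{1/4}\log n)=O(n\log n)$ bits, which Lenzen's routing delivers in $O(1)$ rounds. Each of the $n^{1/4}$ candidates is then assigned a distinct leader vertex; every $v$ forwards its locally-computed $\phi_v(b)$ values to the corresponding leaders, leaders aggregate the $n$ per-vertex contributions in $O(1)$ rounds, and the chunk with the smallest sum is committed. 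Each phase thus runs in $O(1)$ rounds, giving the overall $O(\log\Delta)$ bound.

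The hard part will be verifying that $\phi_v$ is indeed a valid pessimistic estimator for the chunked conditional-expectation sweep, i.e., that it equals $E[\Phi_v\mid b]$ for some random variable $\Phi_v\geq\mathbbm{1}[v\text{ uncolored}]$, so that $\phi=\sum_v\phi_v$ is a martingale along the seed revelation and greedy chunk selection cannot increase its value. The natural candidate $\Phi_v = \mathbbm{1}[c_v=0] + \sum_u\mathbbm{1}[c_v=c_u\neq 0]$ upper bounds the indicator of ``$v$ uncolored'' and its conditional expectation, once combined with the intersection relaxation, equals $\phi_v(b)$. With this martingale property and the bandwidth bookkeeping above, the proof reduces to the same induction on uncolored degrees as in \Cref{sec:deg-very-small}.
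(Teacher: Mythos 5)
Your proposal takes a genuinely different route from the paper, and unfortunately it has a gap that you yourself half-flag as ``the hard part.'' The paper does not try to avoid sending palettes; it makes sending palettes \emph{cheap} by restricting each node to a small sub-palette before running a phase. Concretely, the paper partitions $\Pal(u)$ into $\Delta^{1/3}$ bins of size $\Delta^{2/3}$; nodes whose palette is concentrated in small bins ($\leq\Delta^{1/12}$ colors per bin) collect those small bins into $S(u)$, while the remaining nodes each randomly pick one bin and keep it only if it has few competitors (``happy'' nodes, established via a concentration bound for $O(1)$-wise independence). In either case $|S(u)|$ is large enough that a single $\SimpleRandColor$ round with probability $1/22$ colors a constant fraction of $A'$, yet small enough (and the relevant-neighbor degree small enough) that exchanging the $S(u)$ sets costs only $O(n)$ messages per node. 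The naive conditional-expectation sweep of \Cref{sec:deg-very-small} then applies unchanged because each node knows the sub-palettes of its relevant competitors.

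Your alternative -- a pessimistic estimator $\phi_v(b)=\Pr[c_v=0\mid b]+\frac{\alpha_v(b)}{|\Pal_{i,v}|}\sum_u\alpha_u(b)$ whose evaluation avoids learning neighbors' palettes -- has the right starting and ending values, but the mid-sweep guarantee fails. The factorization $\Pr[c_v=c_u\neq 0\mid b]\leq \alpha_v(b)\alpha_u(b)/|\Pal_{i,v}|$ relies on $c_v$ and $c_u$ being \emph{conditionally} independent given a partial seed $b$, but this is false for the standard pairwise-independent hash families of \Cref{lem:vad}. For instance with the affine family $h(x)=a_0+a_1x$, conditioning on the bits of $a_0$ makes $h(ID(v))$ and $h(ID(u))$ completely determined by each other (since $h(x_1)-h(x_2)=a_1(x_1-x_2)$), so $\alpha_v(b)\alpha_u(b)$ can badly underestimate $\Pr[c_v=c_u\neq 0\mid b]$. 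Consequently $\phi=\sum_v\phi_v$ is neither a martingale nor a supermartingale along the seed revelation, and there is no guarantee that each chunk admits an extension $b'$ with $\phi(b,b')\leq\phi(b)$; greedy selection can therefore get stuck above the threshold. The candidate $\Phi_v=\mathbbm{1}[c_v=0]+\sum_u\mathbbm{1}[c_v=c_u\neq 0]$ does give a valid pessimistic estimator $E[\Phi_v\mid b]$, but computing that quantity requires knowing which colors $u$ could choose, i.e., the palette information you were trying to avoid sending. To make your idea work you would need either a hash family with a stronger conditional-independence structure, or a way to replace the product bound with a genuine supermartingale; as written the estimator is not sound and the proof does not go through.
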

We will have $O(\log \Delta)=O(\log n)$ phases, each will color at least a constant faction of the nodes. Each phase $j$ will be implemented in two steps. The first step computes a large subset of nodes $A'$ that contains at least a constant fraction of the currently uncolored nodes, along with a set of colors $S(u)\subseteq \Pal(u)$ for each node $u \in A'$. These $S(u)$ sets will have two desired properties.  On the one hand, they will be large enough so that when running a single (slightly modified) step of Algorithm $\SimpleRandColor$ using the $S(u)$ sets as the palettes, each node will be colored with constant probability. On the other hand, they will be sufficiently small to allow each node to send it to all its \emph{relevant} neighbors. We will say that $u$ and $v$ are relevant neighbors if they are neighbors in $G$ and $S(u) \cap S(v) \neq \emptyset$. The number of relevant neighbors of a node serves as an upper bound on the number of its competitors on a given color.
The second step of the phase will then simulate a single round of Algorithm $\SimpleRandColor$ in a similar way to the algorithm of Sec. \ref{sec:deg-very-small} up to minor modifications. We first describe a randomized algorithm that uses only $O(1)$-wise independence and then explain how to derandomize it efficiently. 

\subsubsection{Randomized Algorithm with $O(1)$-Wise Independence (Phase $j$).} 
For each node $u$, let $\Pal(u)$ be the set of currently free colors in the palette of $u$ at the beginning of phase $j$, and denote by $F_u=|\Pal(u)|$. Let $V_j$ the set of uncolored nodes at the begining of the phase and $\Gamma_j(u)=\Gamma(u)\cap V_j$, i.e., the set of currently uncolored neighbors of $u$. Clearly, $F_u > |\Gamma_j(u)|$. 

\paragraph{Step 1.} 
The goal is to compute a subset $A'$ and a set of colors $S(v) \subseteq \Pal(v)$ for each $v \in A'$, that will be used as the palettes in Step 2. 
Every uncolored node $u$ partitions its palette $\Pal(u)$ into $\Delta^{1/3}$ bins, each corresponds to a consecutive range of $\Delta^{2/3}$ colors. Specifically, for every $i \in \{1,\ldots, \Delta^{1/3}-1\}$, let $B_{i,u}$ be the set of free colors in $\Pal(u)$ restricted to the range $[(i-1)\Delta^{1/3}+1, i \cdot\Delta^{1/3}]$. Formally, 
$$B_{i,u}=[(i-1)\Delta^{1/3}+1, i \cdot\Delta^{1/3}] \cap \Pal(u) \mbox{~~and~~} B_{\Delta^{1/3}}=[\Delta-\Delta^{1/3}+1,\Delta+1] \cap \Pal(u)~.$$ 

A bin $B_{i,u}$ is \emph{small} if $|B_{i,u}|\leq \Delta^{1/12}$ and otherwise it is \emph{large}. 
Let $A_0$ be the subset of all currently uncolored nodes that at least $1/10$ fraction of their free colors are in small bins. That is, $A_0=\{ u \in V_j ~\mid~ \sum_{i: |B_{i,u}|\leq \Delta^{1/12}}|B_{i,u}|\geq F_u/10\}$.
Let $A_1=V_j \setminus A_0$ be the set of remaining uncolored nodes. 
First assume that $|A_0|\geq |A_1|$. In this case, $A'=A_0$, and for every $u \in A'$, let $S(u)=\bigcup_{i: |B_{i,u}|\leq \Delta^{1/12}} B_{i,u}$ be the set of all the colors in the small bins of $u$.  

From now assume that $|A_1| > |A_0|$. We will describe how to compute $A' \subseteq A_1$ and the $S(u)$ sets for each $u \in A'$.
%
%
%We will then continue to Step (ii) described below, which basically just applies the derandomization the basic coloring procedure of Sec. \ref{sec:deg-very-small} using the $S(u)$ sets as a substitute to the original palettes. (To compensate for loosing $99/100$ of the colors, we will reduce the probability that a node picks a color in the basic randomized algorithm of Sec. \ref{sec:deg-very-small}).
Each node $u \in A_1$ sets $S(u)=B_{i,u}$ with probability $|B_{i,u}|/F_u$ for every $i \in \{1,\ldots, \Delta^{1/3}\}$.
The decisions of the nodes are $O(1)$-wise independent. Then, each node $u$ sends to its $A_1$-neighbors the index $i_u$ of its chosen bin. Let $r_{i_u}$ be the total number of $u$'s neighbors in $A_1$ that picks the $i_u$'th bin. We then say that $u$ is \emph{happy} if $r_{i_u}\leq 11|B_{i,u}|$. That is, the number of relevant neighbors of $u$ w.r.t $S(u)$ is at most $11|S(u)|$. Otherwise, the node is sad. 
The final set $A'$ contains all the happy nodes. This completes the description of the first step (of the $j$'th phase).

\paragraph{Step 2.}
In the second step, we apply a single round of Algorithm $\SimpleRandColor$ on the nodes in $A'$ using the $S(u)$ sets as their palettes. The only modification is that we reduce the probability of a node to pick a color to be $1/22$ (rather than $1/2$). That is, each $u \in A'$ first flips a biased coin such that w.p. $21/22$, $c_u=0$; and with the remaining probability, $u$ picks a color uniformly at random in $S(u)$. The nodes decisions are pairwise independent.

\paragraph{Analysis.}
We first show that each phase can be implemented in $O(1)$ number of rounds. In the first step, each node in $A_1$ sends to each of its at most $\Delta$ neighbors, the statistics on the number of free colors in each of its bins, i.e., $|B_{i,u}|$ for every $i \in \{1,\ldots, \Delta^{1/3}\}$. The total number of sent messages is $O(\Delta^{4/3})=O(n)$. Thus this can be done in $O(1)$ rounds. 

We next claim that the number of nodes that got colored in phase $j$ is reduced by a constant factor in expectation, and thus after $O(\log n)$ phases all nodes are colored w.h.p. 
We start by showing the following:
\begin{lemma}\label{cor:final}
With high probability, at the end of the first phase, we have computed a subset of nodes $A'$ and a subset colors $S(u) \subseteq \Pal(u)$ for each $u \in A'$ such that (i) $A'$ contains at least a constant fraction of the uncolored nodes $V_j$ in expectation, (ii) the number of $u$'s relevant neighbors in $A'$ is at most $11|S(u)|$, where a node $v \in \Gamma_j(u) \cap A'$ is a relevant neighbor of $u$ if $S(u) \cap S(u) \neq \emptyset$. 
\end{lemma}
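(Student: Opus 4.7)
The plan is to split the analysis into the two cases used in Step 1, namely $|A_0| \geq |A_1|$ versus $|A_1| > |A_0|$, and observe that almost all of the work lies in the second case. In the first case $A' = A_0$, giving $|A'| \geq |V_j|/2$ deterministically and settling part (i). For part (ii), the invariant $F_u \geq |\Gamma_j(u)| + 1$ holds throughout the algorithm (each palette starts with at least $\deg(u,G)+1$ colors and loses at most one per already-colored neighbor), and the definition of $A_0$ forces $|S(u)| \geq F_u/10$, so $|\Gamma_j(u)| \leq F_u - 1 < 10|S(u)|$ trivially upper bounds the number of relevant neighbors.

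In the case $|A_1| > |A_0|$, part (ii) is immediate from construction: the bins $B_{i,\cdot}$ are defined by intersecting each palette with a fixed partition of $[1,\Delta+1]$ into intervals, so $S(u) = B_{i_u,u}$ and $S(v) = B_{i_v,v}$ can intersect only when $i_u = i_v$. Hence the relevant neighbors of $u$ in $A' \subseteq A_1$ are exactly the $A_1$-neighbors sharing its bin, whose number $r_{i_u}$ is at most $11|B_{i_u,u}| = 11|S(u)|$ by the happiness condition that defines $A'$.

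The substantive step is part (i) in this second case, which I would prove by showing each $u \in A_1$ is happy with probability at least $10/11$ via Markov's inequality applied to $r_{i_u}/|B_{i_u,u}|$. The key computation is a double-counting swap:
\begin{align*}
\mathbb{E}\left[\frac{r_{i_u}}{|B_{i_u,u}|}\right]
&= \sum_i \Pr[i_u = i] \sum_{v \in \Gamma_j(u) \cap A_1} \frac{\Pr[i_v = i]}{|B_{i,u}|} \\
&= \frac{1}{F_u} \sum_{v \in \Gamma_j(u) \cap A_1} \sum_i \frac{|B_{i,v}|}{F_v}
= \frac{|\Gamma_j(u) \cap A_1|}{F_u} < 1,
\end{align*}
where $\Pr[i_u = i] = |B_{i,u}|/F_u$ and $\sum_i |B_{i,v}|/F_v = 1$. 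The first line comes from expanding $r_i = \sum_v \mathbf{1}[i_v = i]$ and replacing $\mathbb{E}[\mathbf{1}[i_u = i]\mathbf{1}[i_v = i]]$ by $\Pr[i_u = i]\Pr[i_v = i]$. Markov then yields $\Pr[u \text{ is sad}] \leq 1/11$, so $\mathbb{E}[|A'|] \geq (10/11)|A_1| > (5/11)|V_j|$.

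The only delicate point is ensuring that $O(1)$-wise independence is enough for the expectation computation above; fortunately, only pair joint distributions $(i_u, i_v)$ appear, so pairwise independence is exactly what is needed. I read the ``with high probability'' phrasing of the lemma as asserting that Step 1 terminates correctly in $O(1)$ rounds, with property (i) holding in expectation and property (ii) holding deterministically from the construction.
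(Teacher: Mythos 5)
Your proof is correct, and for the substantive part (bounding the sad probability for $u \in A_1$) you take a genuinely different and cleaner route than the paper. The paper classifies bins as \emph{good/bad} according to whether $|B_{i,u}| \geq \mathbb{E}[R_{i,u}]/10$, shows $u$ picks a bad bin with probability at most $1/10$ and a small bin with probability at most $1/10$, and then invokes a $c$-wise-independence concentration bound (which requires the chosen bin to be large, $|B_{i,u}| = \Omega(\Delta^{1/12})$) to argue $R_{i,u} \leq 11|B_{i,u}|$ holds w.h.p.\ for good-large bins; combining gives $\Pr[\text{happy}] \geq 1/2$. Your one-line Markov argument on $r_{i_u}/|B_{i_u,u}|$ sidesteps the good/bad and large/small case distinctions entirely, uses only pairwise independence rather than $c$-wise for a moderate constant $c$, and yields a sharper constant ($10/11$ versus $1/2$). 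The double-counting computation $\mathbb{E}[r_{i_u}/|B_{i_u,u}|] = |\Gamma_j(u)\cap A_1|/F_u < 1$ is exactly right, and the well-definedness issue when $|B_{i_u,u}| = 0$ is a non-issue since such bins are chosen with probability zero. Both approaches feed the same conclusion (constant-fraction expected happiness) into the later method-of-conditional-expectations derandomization, so nothing downstream breaks. Your handling of the $|A_0| \geq |A_1|$ case and of part (ii) in both cases matches the paper's argument.
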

To prove the lemma, we will first consider the more interesting case where $|A_1|\geq |A_0|$. We will show that there are many happy nodes in $A_1$ after the first step.
Let $R_{i,u}$ be the random variable indicating the number of $u$'s neighbors that chose the $i$'th bin, for every $i \in \{1,\ldots, \Delta^{1/3}\}$. We call a bin $B_{i,u}$ \emph{bad} if $|B_{i,u}| \leq \Exp(R_{i,u})/10$, and otherwise it is \emph{good}. Note that the classification into bad and good does not depend on the random choices. 
\begin{claim}\label{cl:clonce}
W.h.p, for every good and large bin $B_{i,u}$, it holds that $R_{i,u}\leq 11 \cdot |B_{i,u}|$.
%B_{i,u}
\end{claim}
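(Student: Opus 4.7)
The plan is to fix a node $u$ and a good, large bin $B_{i,u}$, apply a bounded-independence concentration inequality to show $R_{i,u} \leq 11|B_{i,u}|$ except with tiny probability, then union bound over all nodes and bins.

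First, I would rewrite $R_{i,u} = \sum_{v \in \Gamma_j(u) \cap A_1} X_{i,v}$, where $X_{i,v}$ is the indicator that neighbor $v$ chose bin $i$. By the description of Step 1, each $X_{i,v}$ is Bernoulli with parameter $|B_{i,v}|/F_v$, and the family $\{X_{i,v}\}_v$ is $O(1)$-wise independent (inherited from the $O(1)$-wise independence of the $v$'s choices). Let $\mu := \Exp(R_{i,u})$. Since $B_{i,u}$ is good, $\mu < 10|B_{i,u}|$, so it suffices to prove the one-sided deviation bound $R_{i,u} - \mu \leq |B_{i,u}|$.

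Second, I would invoke a standard tail bound for sums of $k$-wise independent $\{0,1\}$-variables (Bellare--Rompel / Schmidt--Siegel--Srinivasan), which gives
\[
\Pr\!\left[\,|R_{i,u} - \mu| \geq t\,\right] \;=\; O\!\left(\!\left(\frac{k\mu + k^2}{t^2}\right)^{\!k/2}\right).
\]
Plugging in $t = |B_{i,u}|$, using $\mu < 10|B_{i,u}|$ (good) and $|B_{i,u}| \geq \Delta^{1/12}$ (large), and assuming $\Delta$ is at least polynomially large in $n$ (otherwise the whole residual coloring task is trivial in the congested clique via Lenzen's routing), the bound simplifies to $O((c_k/\Delta^{1/12})^{k/2})$ for a constant $c_k$ depending on $k$ only. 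Choosing $k$ to be a sufficiently large constant makes the per-bin failure probability at most $1/n^{c+2}$ for any desired $c$.

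Finally, I would union bound over the at most $n \cdot \Delta^{1/3} \leq n^2$ pairs $(u,i)$ to obtain the claim with high probability. The main obstacle, and the reason the definitions of \emph{good} and \emph{large} were tuned exactly as they are, is extracting a polynomial-in-$n$ tail from only $O(1)$-wise independence: this works precisely because $|B_{i,u}|\geq \Delta^{1/12}$ is polynomial in $\Delta$, so a constant-order moment bound already beats Chebyshev by a polynomial factor, and the goodness bound $\mu < 10|B_{i,u}|$ keeps the mean from swamping the deviation $t = |B_{i,u}|$ used in the inequality.
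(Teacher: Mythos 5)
Your proposal matches the paper's proof essentially step for step: fix $(u,i)$, observe that largeness gives $|B_{i,u}|\geq\Delta^{1/12}$ and goodness gives $\mu=\Exp(R_{i,u})<10|B_{i,u}|$, apply the $O(1)$-wise independence tail bound with deviation $t=|B_{i,u}|$ to get a $1/\poly(n)$ failure probability, and union-bound over all $n\cdot\Delta^{1/3}$ bins. The only cosmetic difference is that you spell out the indicator decomposition and the range of $\Delta$ explicitly, which the paper leaves implicit (the section already assumes $\Delta\geq\sqrt{n}$).
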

\begin{proof}
%Recall that a bin $B_{i,u}$ is good if $|B_{i,u}| \geq 10 \Exp(R_{i,u})$ and it is large if $|B_{i,u}|\geq \Delta^{1/12}$.  
Fix a node $u \in A_1$, and consider a good and large bin $B_{i,u}$ with $\mu=\Exp(R_{i,u})$.
Since $|B_{i,u}|=\Omega(\Delta^{1/12})$ (as it is large) and as $|B_{i,u}|=\Omega(\Exp(R_{i,u}))$ (as it is good), using the concentration inequality for $c$-wise independence from \cite{Vadhan12}, we get that 
$$\Prob(|R_{i,u}-\mu|\geq |B_{i,u}|)\leq ((c\mu+c^2)/|B_{i,u}|^2)^{c}\leq 1/n^{10}.$$
Therefore, with high probability it holds that $R_{i,u}\leq \mu+|B_{i,u}|\leq 11|B_{i,u}|$ . The claim holds by applying the union bound over all $\Delta^{1/3}$ bins of $u$, and over all $n$ nodes.
\end{proof}

\begin{claim}\label{cl:cltwo}
The probability that $u$ chooses a bad bin is at most $1/10$.
%B_{i,u}
\end{claim}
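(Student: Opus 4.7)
The plan is to prove this by directly computing the probability and using a double-counting argument on the bin expectations. By definition of Step~1, node $u \in A_1$ picks bin $B_{i,u}$ with probability $|B_{i,u}|/F_u$, so
\[
\Pr[u \text{ chooses a bad bin}] \;=\; \sum_{i\,:\,B_{i,u}\text{ is bad}} \frac{|B_{i,u}|}{F_u}.
\]
Hence the claim reduces to showing $\sum_{i \text{ bad}} |B_{i,u}| \leq F_u/10$.

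For the numerator bound, I would first invoke the definition of a bad bin: if $B_{i,u}$ is bad then $|B_{i,u}| \leq \Exp(R_{i,u})/10$. Summing over bad bins and enlarging to all bins,
\[
\sum_{i \text{ bad}} |B_{i,u}| \;\leq\; \frac{1}{10}\sum_{i=1}^{\Delta^{1/3}} \Exp(R_{i,u}).
\]
The key identity I would then use is that each neighbor $v \in \Gamma_j(u) \cap A_1$ selects exactly one bin, so $\sum_{i} \mathbb{1}[v \text{ picked bin } i] = 1$ deterministically. Taking expectations and swapping the order of summation,
\[
\sum_i \Exp(R_{i,u}) \;=\; \sum_{v \in \Gamma_j(u)\cap A_1}\sum_i \frac{|B_{i,v}|}{F_v} \;=\; \sum_{v \in \Gamma_j(u)\cap A_1} 1 \;=\; |\Gamma_j(u) \cap A_1|.
\]

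Finally I would close the argument using the list-coloring palette hypothesis of \Cref{lem:det-small-deg}: each node's palette satisfies $F_u \geq \deg(u,G)+1 > |\Gamma_j(u) \cap A_1|$, so $\sum_{i \text{ bad}} |B_{i,u}| \leq F_u/10$, and the required probability bound of $1/10$ follows. There is no real obstacle here — the only conceptual step is recognizing that summing the expected bin-loads over bins telescopes to the number of participating neighbors, which is at most the degree, which is strictly less than $F_u$.
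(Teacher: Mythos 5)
Your proof is essentially the paper's proof: both swap the order of summation in $\sum_i \Exp(R_{i,u})$, use the fact that $\sum_i |B_{i,v}|/F_v = 1$ for each neighbor $v$, and then apply the bad-bin definition $|B_{i,u}| \leq \Exp(R_{i,u})/10$. One small slip in the last line: you write $F_u \geq \deg(u,G)+1$, but at phase $j>1$ the current palette $\Pal(u)$ has typically shrunk below that; the correct invariant (which the paper states explicitly just before this claim) is $F_u > |\Gamma_j(u)|$, which holds because each colored neighbor removes at most one color from $u$'s palette while also removing one uncolored neighbor. Your chain should read $F_u > |\Gamma_j(u)| \geq |\Gamma_j(u)\cap A_1|$; the conclusion is unaffected.
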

\begin{proof}
Since $\Exp(R_{i,u})=\sum_{v \in \Gamma_j(u)\cap A_1} |B_{i,v}|/F_v$, we have that 
$$\sum_i \Exp(R_{i,u})\leq \sum_i \sum_{v \in \Gamma_j(u)} |B_{i,v}|/F_v =\sum_{v \in \Gamma_j(u)}\sum_i |B_{i,v}|/F_v \leq F_u-1~.$$ 
Therefore,
$$\Prob(u \mbox{ chooses bad bin})=\sum_{i: |B_{i,u}| \leq \Exp(R_{i,u})/10} |B_{i,u}|/F_u \leq \sum_{i: |B_{i,u}| \leq \Exp(R_{i,u})/10}  \Exp(R_{i,u})/(10F_u) \leq 1/10~.$$
\end{proof}

\begin{corollary}\label{cor:happy}
The probability that $u \in A_1$ is happy is at least $1/2$. 
\end{corollary}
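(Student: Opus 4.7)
The plan is to show that the only ways a node $u \in A_1$ can fail to be happy are (a) it picks a small bin, or (b) it picks a bad bin, or (c) it picks a good and large bin but nevertheless suffers too many competitors. Each of these three events will be bounded by a small constant (or tiny tail), and a union bound will finish the argument.

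First I would bound the probability that $u$ picks a small bin. By the recipe of Step 1, $u$ picks bin $B_{i,u}$ with probability $|B_{i,u}|/F_u$, so the probability that the chosen bin is small is exactly $\sum_{i:|B_{i,u}|\leq \Delta^{1/12}} |B_{i,u}|/F_u$. Since $u \in A_1 = V_j\setminus A_0$, the defining property of $A_0$ gives $\sum_{i:|B_{i,u}|\leq \Delta^{1/12}} |B_{i,u}| < F_u/10$, so this probability is strictly less than $1/10$.

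Next, by \Cref{cl:cltwo}, the probability that $u$ chooses a bad bin is at most $1/10$, and by \Cref{cl:clonce}, conditioned on $u$ picking any particular good and large bin $B_{i_u,u}$, we have $R_{i_u,u} \leq 11|B_{i_u,u}|$ with probability at least $1 - 1/n^{10}$, which is exactly the happiness condition. Combining via a union bound,
\[
\Prob(u \text{ unhappy}) \;\leq\; \Prob(u \text{ picks small bin}) + \Prob(u \text{ picks bad bin}) + \tfrac{1}{n^{10}} \;\leq\; \tfrac{1}{10} + \tfrac{1}{10} + \tfrac{1}{n^{10}} \;\leq\; \tfrac{1}{2},
\]
so $\Prob(u \text{ happy}) \geq 1/2$, as required.

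There is no real obstacle here: the work has been done in the two preceding claims and in the definition of the partition $A_0 \cup A_1$. The only small subtlety is making sure that the events ``picks small bin'' and ``picks bad and large bin'' cover all the ways Claim \ref{cl:clonce} fails to apply, which is why the three contributions (small, bad, and the tail from \Cref{cl:clonce}) are disjoint enough for a clean union bound.
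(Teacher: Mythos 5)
Your proof is correct and takes essentially the same route as the paper: both decompose unhappiness into the three disjoint bad events (choosing a small bin, choosing a bad bin, choosing a large good bin yet having too many competitors), bound the first two by $1/10$ each via the definition of $A_1$ and Claim~\ref{cl:cltwo}, invoke Claim~\ref{cl:clonce} for the negligible third term, and finish by union bound.
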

\begin{proof}
There are three bad events that prevent $u$ from being happy. 
(i) $u$ chose a bad bin. By Claim \ref{cl:cltwo}, this happens with probability at most $1/10$.\\
(ii) $u$ chose a small bin. Since $u \in A_1$, at least $9/10$ of its colors are in large bins. The probability to choose a small bin is at most $\sum_{i: |B_{i,u}|\leq \Delta^{1/12}}|B_{i,u}|/F_u\leq 1/10$.\\
(iii) $u$ chose a large and good bin $B_{i,u}$, but with at least $r_{i_u}> 11 \cdot |B_{i,u}|$ relevant neighbors.
By Claim \ref{cl:clonce}, this happens with probability at most $1/n^5$. Therefore, the probability that none of the bad events happened is at least $1/2$.
\end{proof}
%We are now completing the proof of Lemma \ref{cor:final}.
\begin{proof}[Proof of Lemma \ref{cor:final}]
First consider the simpler case where $|A_1|\geq |A_0|$. By Cor. \ref{cor:happy}, $|A'|$ contains at least half of the nodes in $A_1$ in expectation. Since each node in $A'$ is happy, claim (ii) holds as well.
Next, consider the complementary case where $A'=A_0$ and for 
every $u \in A_0$, the set $S(u)$ is the union of the colors in the small bins of $u$. By the definition of $A_0$, $|S(u)|\geq 1/10 |\Gamma'(u)| \geq 1/10 |\Gamma(u)\cap A_0|$, where $\Gamma'(u)$ is the set of $u$'s neighbor at the beginning of the $j$'th phase. Claim (ii) holds as well. 
%
%
%
%
 %$|A_1|\geq |A_0|$. Let $A' \subseteq A_1$ be the set of nodes $u$ that picked a good and large bin $B_{i,u}$ such that $|B_{i,u}|\leq R_{i,u}/10$.  
%By the definition of $A_1$, the total number colors in the large bins of  $u \in A_1$ is at least $1/2$ of the total number of its free colors $F_u$. Therefore, with constant probability $u$ chooses a good and large bin $B_{i,u}$ such that $|B_{i,u}|\leq R_{i,u}/10$. 
%To see that, by Claim \ref{cl:clonce}, the probability that $u$ picks some bin $B_{i,u}$ with $R_{i,u}\leq 10\Exp(R_{i,u})$ (i.e., a bin with many neighboring competitors) is at most $1/n^9$. In addition, the probability that $u$ picks a bad bin is at most $1/10$ by Claim \ref{cl:cltwo}. The probability that it chooses a small bin is at most $1/2$. Thus by the union bound, the probability that $u$ chooses a bad bin or a bin with many competitors is at most $5/6$. We get that with constant probability $u$ chooses a large bin such that $|B_{i,u}|\leq R_{i,u}/10$. Since the number of relevant neighbors in $A_1$ is exactly $R_{i,u}$, claim (ii) holds as well. 
\end{proof}
Due to Lemma \ref{cor:final}, we know that $A'$ is large in expectation, and thus it is sufficient to show that Step 2 colors each node $u \in A'$ with constant probability (when using pairwise independence). Assume that $c_u >0$. Then, for any neighbor $v$, w.p. $1/22$ it holds that $c_v\neq 0$. In addition, given that $c_v \neq 0$, the probability that $c_u=c_v$ is $1/|S(u)|$. Over all, $\Prob(c_u =c_v ~\mid~ c_u\neq 0)=1/22 \cdot 1/|S(u)|$. 
Since the number of $u$'s relevant neighbors is at most $11|S(u)|$, by applying the union bound, we get that the probability that $u$'s color is legal, given that $c_u \neq 0$, is $1/2$. The probability that $c_u\neq 0$ is $1/22$, and thus the probability the $u$ is colored is $1/44$.

\paragraph{Derandomization for the First Step.}
In the case where $|A_0|\geq |A_1|$ (i.e., many nodes have lots of colors in small bins), the first step is deterministic. So, we will consider the case where $|A_1|\geq |A_0|$.
Since the arguments are based on $O(1)$-wise independence, the total seed length is $O(\log n)$. By letting each node $v$ send the values of $|B_{i,u}|$ for every $i$, for any possible seed, each node $u$ can simulate the selection of the bins made by its neighbors. Our goal is to maximize the number of happy nodes: nodes $u$ that picked a bin $B_{i,u}$ with low competition, i.e., such that the number of their $A_1$-neighbors that picked that bin is at most $11|B_{i,u}|$. Since in expectation, the number of these nodes is at least a constant fraction of the current uncolored nodes, using the method of conditional expectation, we can compute the desired $O(\log n)$-bit seed within $O(1)$ rounds. 

\paragraph{Derandomization for the Second Step.}
The second step would start by letting each node $u \in A'$ send its set $S(u)$ to each of its relevant neighbors in $A'$. In the case where $A'=A_0$, each node sends $S(u)$ to all its $A'$-neighbors. 
In the other case, for each node $u \in A'$, $S(u)=B_{i,u}$ for some $i \in \{1,\ldots, \Delta^{1/3}\}$. The node $u$ will then send $S(u)$ to any $A'$-neighbor $v$ that also picked its $i^{th}$ bin, i.e., to any neighbor $v \in A'$ with $S(v)=B_{i,v}$. These are the only potential competitors for $u$ on its colors.  

We are now in the situation where each node knows the ``palettes" of its relevant neighbors. 
From that point on the derandomization of the single step of the modified $\SimpleRandColor$ Algorithm is 
basically the same as in Sec. \ref{sec:deg-very-small}. As shown above, when using pairwise independence, a node in $A'$ gets colored w.p. at least $1/44$. As each node $u$ knows the palettes $S(v)$ of its relevant neighbors, using a seed (and the nodes ID), they can simulate the decision of their relevant neighbors. The goal would be to maximize the number of colored nodes using the method of conditional expectation, in the exact same manner as in Sec. \ref{sec:deg-very-small}.

\paragraph{Round Complexity.} Unlike the randomized algorithm, our deterministic solution requires that the nodes will be able to simulate the decisions made by their neighbors. For that purpose we will need to send the set $S(u)$ of colors to all relevant neighbors of $u$. We will show that this is possible both for $A_0$ and for the case where $A' \subseteq A_1$.  

First assume that $|A_0|\geq |A_1|$. The total number of messages sent and received\footnote{As each node in $A_0$ has at most $O(\Delta^{5/12})$ neighbors in $A_0$ by definition.} by a node $u \in A_0$ is $O(\Delta^{2/3+2/12})=O(n)$. So sending all these palettes can be done in $O(1)$ number of rounds. Now consider the case where $|A_1|\geq |A_0|$ and let $A' \subseteq A_1$ be the set of happy nodes. Each node $u \in A'$ sends its chosen bin  $B_{i,u}$ to $O(|B_{i,u})$ relevant neighbors. Since $|B_{i,u}|=O(\Delta^{2/3})$, each node sends at most $O(\Delta^{4/3})=O(n)$ messages. Again, this can be done in $O(1)$ number of rounds. 

\subsection{Handling the General Case} 
We will have a $O(1)$-round procedure to partition the nodes into $\ell=O(\Delta^{1/4})$ sub-graphs $G_1,\ldots, G_{\ell}$ and one additional sub-graph $G^*$ that will be list-colored after coloring the $G_i$ subgraphs. Each subgraph $G_i$ will be assigned a disjoint set of $\Delta(G_i)+1$ colors, and thus we will be able to color all these subgraphs in parallel using $O(\log \Delta)$ rounds using the algorithm of Lemma \ref{lem:det-small-deg}. 

We first show that the desired partitioning can be done with $O(1)$-wise independence and then explain how to derandomize it. Each node picks a subgraph $i \in [1,\ell]$ w.p. $p_i=1/\ell-q/\ell$ for $\ell=\lceil \Delta^{1/4} \rceil$ and $q=2\Delta^{-5/16}$. The decisions of the nodes are $O(1)$-wise independent. 
Using the basic concentration bound for $O(1)$-wise independence, we get that w.h.p., 
$$\Delta(G_i)\leq \Delta \cdot p_i+(\Delta\cdot p_i)^{7/12} \leq \Delta^{3/4}-q \cdot \Delta^{3/4}+\Delta^{7/16}\leq \Delta^{3/4}-1~.$$
Therefore the total number of colors consumed by the $\ell$ subgraphs is at most $\Delta$. 
By using concentration bounds again, w.h.p., the maximum degree of the left over-subgraph $G^*$ is at most $O(\Delta \cdot q)=O(\Delta^{11/16})=O(n^{3/4})$. Thus after coloring the graphs $G_1,\ldots, G_\ell$, the final sub-graph $G^*$ can be colored in $O(\log \Delta)$ rounds by applying again the algorithm of Lemma \ref{lem:det-small-deg}. 

We now show that the random partitioning can be derandomized. Given a seed of $O(\log n)$-bits seed, each vertex $u$ can determine its degree in its chosen subgraph. We have $\ell+1$ subgraphs $G_1, \ldots, G_\ell, G^*$ such that w.h.p. over the $O(\log n)$-bit random seed the maximum degree of each $G_i$ subgraph is at most $\Delta^{3/4}$ and the maximum degree of $G^*$ is $O(\Delta^{11/16})$. For a given seed, we say that the subgraph $G_i$ is bad if its maximum degree exceeds $\Delta^{3/4}$. In the same manner, $G^*$ is bad if its maximum degree exceeds $\Delta^{11/16}$. Our goal is to minimize the number of bad subgraphs. 
%We will allocate a node for each subgraph $i \in \{1,\ldots, \ell, *\}$ and will compute a chunk of $\log n/4$ bits in the seed within $O(1)$ number of rounds. 
For each partial assignment, the node $u$ will compute the probability that its maximum degree violates the desired bound (i.e., $\Delta^{3/4}$ for $G_i$ and $\Delta^{5/6}$ for $G^*$). We will then pick the assignment that minimize the probability (or expectation) of these violations. Since the expected number of bad subgraphs over the random seeds is zero, using this conditional expectation method we will find a seed whose total number of violations is $0$. 

Next, the algorithm applies the procedure of Lemma \ref{lem:det-small-deg} to color the subgraphs $G_1,\ldots, G_\ell$, and then applied the same procedure to list-color the final subgraph $G^*$. We next observe that these subgraphs can be colored simultaneously within the same number of rounds. This clearly holds for the randomized procedure described in Subsec. \ref{sec:threequarter}.  The derandomization procedure is based on revealing a chunk of $\lfloor \log n \rfloor$  bits at at time by sending $O(\log n)$ bits to at most each node in the graph. Since we need to run at most $n^{1/4}$ instances in parallel the derandomization will be slightly changed. Instead of revealing a chunk of $\lfloor \log n \rfloor$ bits, the algorithm of Lemma \ref{lem:det-small-deg} will reveal only $\lfloor \log n/2 \rfloor$ bits in $O(1)$ rounds. Since the seed length is $O(\log n)$, the derandomization will be still done in $O(1)$ rounds. As revealing the values of $\lfloor \log n/2 \rfloor$ bits in the seed requires only $2^{\lfloor \log n/2 \rfloor}\leq n^{1/2}$ nodes (one per assignment to these bits), all the $O(n^{1/4})$ instances can be now derandomized simultaneously. 

\\
\\
\textbf{Acknowledgment:} I am very grateful to Hsin-Hao Su, Eylon Yogev, Seth Pettie, Yi-Jun Chang and the anonymous reviewers for helpful comments. I am also grateful to Philipp Bamberger, Fabian Kuhn and Yannic Maus for noting a missing part in the deterministic $(\Delta+1)$ coloring algorithm of the earlier manuscript.

\bibliographystyle{alpha}
\bibliography{clique}

\appendix

\section{Missing Proofs}\label{apn:missing}
\begin{theorem} (\textbf{Simple Corollary of Chernoff Bound})\label{thm:cher} Suppose $X_1$, $X_2$, \dots, $X_\ell \in [0,1]$ are independent random variables, and let $X=\sum_{i=1}^{\ell} X_i$ and $\mu = \mathbb{E}[X]$. If $\mu \geq 5 \log n$, then w.h.p. $X \in \mu \pm \sqrt{5\mu\log n}$, and if $\mu < 5 \log n$, then w.h.p. $X \leq \mu +5\log n$.
\end{theorem}

\subsection{Missing Algorithms from \cite{CHP18}}\label{sec:missingalg}

For completeness, we provide the psuedocodes for coloring procedures from \cite{CHP18} that are used in our algorithm as well.
Define $N^*(v)=\{u \in N(v) ~\mid~ ID(u)<ID(v)\}$. 
\begin{mdframed}[hidealllines=false]
$\OneShotColoring$. Each uncolored vertex $v$ decided to participates independently with probability $p$. Each participating vertex $v$ selects a color $c(v)$ from its palette $\Psi(v)$ uniformly at random. A participating vertex $v$ successfully colors itself if $c(v)$ is not chosen by any vertex in $N^*(v)$.
\end{mdframed}

In the $\ColorBidding$ procedure each vertex $v$ is associated with a parameter $p_v \geq |\Psi(v)|-\outdeg(v)$ and $p^*=\min_v p_v$. Let $C$ be a constant satisfying that $\sum_{u \in N_{out}(v)}1/p_u\leq 1/C$. All vertices agree on the value $C$.
\begin{mdframed}[hidealllines=false]
$\ColorBidding$. Each color $c \in \Psi(v)$ is added to $S_v$ with probability $C/2p_v$ independently. If there exists a color $c^* \in S_v$ that is not selected by al vertices in $N_{out}(v)$, $v$ colors itself $c^*$.
\end{mdframed}

\subsection{Implementing CLP for $\Delta=O(\sqrt{n})$ (\Cref{thm:colorsqrtn})} \label{sec:clpdetails}
For psuedocodes from \cite{CHP18}, see \Cref{sec:missingalg}.\\
\textbf{Proof of \Cref{thm:colorsqrtn}}:
\APPENDFULLCLPSMALL

\subsection{CLP with Non-Equal Palette Sizes}\label{sec:modclp}
In this section, we provide a detailed proof for \Cref{lem:modclp}.
\APPENDUNBAL

%\section{Deterministic Coloring}\label{append:det}
%%\input{prelim}

\end{document}